\numberwithin{equation}{section}
\newtheorem{definition}{Definition}[section]
\newtheorem{example}[definition]{Example}
\newtheorem{theorem}[definition]{Theorem}
\newtheorem{corollary}[definition]{Corollary}
\newtheorem{proposition}[definition]{Proposition}
\newtheorem{remark}[definition]{Remark}
\newtheorem{question}[definition]{Question}
\begin{document}
\title[Supersymmetric extension of universal enveloping vertex algebras]{Supersymmetric extension of universal enveloping \\ vertex algebras}

\author[U.R.Suh]{Uhi Rinn Suh}
\address{Department of Mathematical Sciences and Research institute of Mathematics, Seoul National University, Gwanak-ro 1, Gwanak-gu, Seoul 08826, Korea}
\email{uhrisu1@snu.ac.kr}

\author[S.Yoon]{Sangwon Yoon}
\address{Department of Mathematical Sciences, Seoul National University, Gwanak-ro 1, Gwanak-gu, Seoul 08826, Korea}
\email{ysw317@snu.ac.kr}

\thanks{This work was supported by NRF Grant, \#2022R1C1C1008698 and Creative-Pioneering Researchers Program through Seoul National University. Section 4 of this paper is based on the master thesis of S. Yoon.}

\begin{abstract}
In this paper, we study the construction of the supersymmetric extensions of vertex algebras. In particular, for $N = n \in \mathbb{Z}_{+}$, we show the universal enveloping $N = n$ supersymmetric (SUSY) vertex algebra of an $N = n$ SUSY Lie conformal algebra can be extended to an $N = n' > n$ SUSY vertex algebra.
\end{abstract}

\maketitle

\vskip 6mm

\section{Introduction}\label{section:Intro}
\setcounter{equation}{0}

Vertex algebras were introduced by R. Borcherds in the 1980's (see \cite{Bor86}) to describe the chiral part of two-dimensional conformal field theory (see \cites{BPZ84}). A vertex algebra consists of a vector superspace $V$, $\text{End}(V)$-valued quantum fields $a(z)$ for $a \in V$ and a translation operator $\partial$ with certain axioms. One of the axiom called locality tells that, for a given pair of fields $(a,b)$ in $V$, there is a positive integer $t \in \mathbb{N}$ such that $(z-w)^{t}[a(z), b(w)]=0$. Considering the formal delta distribution $\delta(z,w):= \sum_{j \in \mathbb{Z}} z^j w^{-j-1}$, the elements $a_{(m)}b \in V$ for $m = 0,1,\cdots, t$ are determined by the supercommutator formula: 
\begin{equation} \label{eq:decomposition theorem}
 [a(z), b(w)] =\sum_{m = 0}^{t} \frac{1}{m!}a_{(m)}b(w) \ \partial_w^m\delta(z,w).
\end{equation}
The existence of such formula \eqref{eq:decomposition theorem} is guaranteed by so-called the decomposition theorem. Additionally, \eqref{eq:decomposition theorem} and the state-field correspondence yield another well-known definition of a vertex algebra using the $\lambda$-bracket formalism (see Section \ref{section2}). Speaking briefly, a vertex algebra $V$ is a differential quasi-commutative quasi-associative supereralgebra endowed with a $\lambda$-bracket which is compatible with the differential algebra structure. Here, the derivation on $V$ for the differential algebra structure arises from the translation operator $\partial.$

To describe two-dimensional conformal field theory with supersymmetries in mathematical language, it is good to use the notion of supersymmetric vertex algebra, which is a generalization of the superconformal vertex algebras (see Section 5.9 of \cite{Kac98}). A supersymmetric vertex algebra consists of a vector superspace $V$ and superfields $a(z, \Theta):=a(z,\theta_1,\theta_2, \cdots, \theta_n)$ for $a\in V$ and odd variables $\theta_i$. Roughly speaking, the superfield
\begin{equation*} \label{eq:superfield}
a(z, {\Theta})=\sum_{I} \pm \ \theta^I D^I a(z),
\end{equation*}
where the summation is over the ordered subset $I=\{i_1,i_2, \cdots, i_s\}\in \{1,2,\cdots, n\}$ satisfying $i_1<i_2<\cdots<i_s$, $D^I a(z):=D^{i_1}D^{i_2}\cdots, D^{i_s}a(z)$ is a field of parity $p(a)+\bar{s}$
and $\theta_I= \theta_{i_1}\theta_{i_2}\cdots \theta_{i_s}$. Here $D^i$'s can be understood as odd endomorphisms of fields such that $D^iD^j+D^jD^i= \delta_{i,j} 2 \partial$ for $i,j\in I$
and, indeed, the odd endomorphisms determine the structures of $V$ coming from the supersymmetry. In particular, in $N=1$ (resp. $N=2$) case, a superfield $a(z, \Theta)$ consists of a couple (resp. quadruple) of fields $(a(z), Da(z))$ (resp. $(a(z), D^1a(z), D^2 a(z), D^1 D^2(z))$). Analogous to the $\lambda$-bracket formalism in vertex algebras theory, an $N=n$ supersymmetric vertex algebra can be introduced via so-called $\Lambda=(\lambda, \chi^1, \chi^2, \cdots, \chi^n)$-bracket formalism (see Section \ref{section3}).

As one can expect, a SUSY vertex algebra is a vertex algebra by itself but the converse cannot be true. However, it is well-known that if a vertex algebra has a superconformal vector then the vector induces a supersymmetry in the vertex algebra (see, for example \cites{HK07, Kac98}). Here a superconformal vector $G$ in a vertex algebra $V$ is a primary of conformal weight ${\frac{3}{2}}$ element satisfying
\begin{equation*}
[G \ _\lambda \ G]= 2{L} + {\frac{c}{3}}{\lambda}^{2}
\end{equation*}
for a  conformal vector $L$ of $V$. Indeed, in this case, we can define an odd operator $D$ by letting $D(a)$ to be the coefficient of $\lambda^0$ in $[G \ _\lambda \ a]$ and the vertex algebra $V$ together with $D$ becomes an $N=1$ SUSY vertex algebra. We remark that constructions of $N = 1$ and $N = 2$ superconformal structures in vertex algebras have been studied in various contexts (see \cites{HZ11, JF21}).

In Section \ref{section:sconf str of VA}, we investigate superconformal vectors of a vertex algebra which are main sources of the $N = 1$ SUSY vertex algebra structures. Sometimes, there are more than one superconformal vectors which yield the same SUSY structure but different conformal structures. In this case, we call one of the superconformal vectors a shift of the other one (see Proposition \ref{proposition:shifted sconf}).
The shift of conformal vectors has a crucial application in the context of $W$–algebras (see \cites{Ara15, KRW03}). More precisely, a conformal vector of the $W$–algebra is induced from a conformal vector of the corresponding BRST complex, which is the tensor product of the affine vertex algebra and the free fermion vertex algebra. Hence the sum of conformal vectors of these two algebras becomes a conformal vector of the BRST complex. Each of the affine vertex algebra and the free fermion vertex algebra has an infinite number of conformal vectors, and any two of these vectors are shifts of each other. As a consequence, the BRST complex has infinitely many conformal vectors which are shifts of each other, and exactly one of the them yields a desired conformal structure of the $W$–algebra. In this paper, we present analogous results on the $N = 1$ SUSY BRST complexes (see Theorem \ref{theorem:sconf of tensor of aff and fermion}).

If a vertex algebra does not have a superconformal vector, we cannot guarantee the existence of supersymmetries.
For example, the Virasoro vertex algebra and affine vertex algebras do not have neither superconformal vectors nor known supersymmetries. Nevertheless, we can impose supersymmtries on these vertex algebras by embedding them into bigger supersymmetric vertex algebras.
For the Virasoro vertex algebra, consider the $N=1$ SUSY vertex algebra generated by the odd element $G$ and endowed with the $\Lambda$-bracket:
\begin{equation*} \label{eq:intro-superconformal}
[G \ _\Lambda \ G]= (2\partial+ 3\lambda+\chi D) G + \frac{c}{3}\lambda^2 \chi.
\end{equation*}
This SUSY vertex algebra is called the $N = 1$ super-Virasoro vertex algebra and the vertex subalgebra generated by the even element $D(G)$ is the Virasoro vertex algebra. Hence the  $N = 1$ super-Virasoro vertex algebra can be considered as a supersymmetric extension of the Virasoro vertex algebra.

Similarly, the affine vertex algebra $V^k(\mathfrak{g})$ of level $k$ associated with a Lie superalgebra $\mathfrak{g}$ can be embedded into an $N=1$ SUSY vertex algebra $V^k_{N=1}(\mathfrak{g})$ called the $N = 1$ SUSY affine vertex algebra of level $k$. As a SUSY vertex algebra, ${V}^{k}_{N=1}(\mathfrak{g})$ is generated by the parity reversed vector superspace $\bar{\mathfrak{g}}$ of $\mathfrak{g}$ and the $\Lambda$-bracket is given by:
\begin{equation*}
[\bar{a} \ _\Lambda \ \bar{b}] = (-1)^{p(a)} \big( \overline{[a,b]}+ k\chi(a|b) \big)
\end{equation*}
for $a,b\in \mathfrak{g}$. Now, the vertex subalgebra generated by ${D}{\bar{\mathfrak{g}}}$ is isomorphic to the affine vertex algebra. In addition, in various articles \cites{HK08, MRS21}, supersymmetric extensions of well-known vertex algebras such as lattice vertex algebras and $W$–algebras have been discussed.

In Section \ref{section5}, we investigate where supersymmetries of vertex algebras come from. In short, if a vertex algebra $V$ has an odd derivation $D$ which induces the sesquilinearity of the $\Lambda$-bracket and satisfies $[D,D]= 2\partial$ then 
the operator $D$ gives a supersymmtery in $V$. In terms of $\lambda$-bracket, the sesquilinearity of $\Lambda$-bracket can be rewritten as (SEF1) and (SEF2) in Section \ref{section5} and they are necessary and sufficient conditions for a supersymmetry of a vertex algebra (see Proposition \ref{proposition:N=1 extension strongly generated}). Using these observation, we show the main theorem in Section \ref{section5}:

\begin{theorem}[Theorem \ref{theorem:N=1 extension}] \label{thm:intro-main theorem}
Let $R = \mathbb{C}[\partial] \otimes \mathcal{U}$ for a vector superspace $\mathcal{U}$ be a Lie conformal algebra and $V(R)$ be the universal enveloping vertex algebra of the Lie conformal algebra $R$. The differential algebra $V(\bar{R} \oplus R)$, where $\bar{R}$ is the parity reversed $\mathbb{C}[\partial]$-module of $R$, endowed with the $\Lambda$-bracket 
\begin{equation} \label{eq:main theorem, bracket}
[\bar{a} \ {}_\Lambda \ \bar{b}] := (-1)^{p(a)}\overline{[a \ {}_\lambda \  b]}
\end{equation}
and the odd derivation $D$ such that 
\[ D(\bar{a})=a \quad \text{ and } \quad D(a) = {\partial}{\bar{a}} \]
for $a,b\in \mathcal{U}$ is an $N=1$ SUSY vertex algebra having $V(R)$ as a vertex subalgebra. {\textup{(}}See the picture below.{\textup{)}}
\end{theorem}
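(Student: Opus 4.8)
The plan is to reduce the statement to the necessary-and-sufficient criterion for a supersymmetry, Proposition~\ref{proposition:N=1 extension strongly generated}: it suffices to realize $V(\bar R\oplus R)$ as an ordinary vertex algebra containing $V(R)$ as a vertex subalgebra and carrying an odd derivation $D$ which commutes with $\partial$, satisfies $[D,D]=2\partial$, obeys (SEF1)--(SEF2) on a set of strong generators, and restricts on the generators to $D(\bar a)=a$, $D(a)=\partial\bar a$. The first and main task is therefore to equip the $\mathbb{C}[\partial]$-module $\bar R\oplus R$ with a Lie conformal algebra structure whose $\lambda$-bracket restricts to the given one on $R$. The brackets $[\bar a\ {}_\lambda\ \bar b]$, $[\bar a\ {}_\lambda\ b]$ and $[a\ {}_\lambda\ \bar b]$ (defined on $\bar{\mathcal U},\mathcal U$ and extended $\mathbb{C}[\partial]$-sesquilinearly) are not free: once one unwinds, using the conventions of Section~\ref{section3} (cf.\ \cite{HK07}), the dictionary relating an $N=1$ $\Lambda$-bracket to the $\lambda$-bracket of its underlying vertex algebra, the prescription $D(\bar a)=a$, $D(a)=\partial\bar a$ together with \eqref{eq:main theorem, bracket} forces their values (with signs read off from the conventions, roughly $[a\ {}_\lambda\ \bar b]=\pm\,\overline{[a\ {}_\lambda\ b]}$, and $[\bar a\ {}_\lambda\ \bar b]$ likewise determined). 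One then checks that the resulting $\lambda$-bracket is well defined, skewsymmetric and satisfies the Jacobi identity, and in addition the compatibility $[D\bar a\ {}_\lambda\ D\bar b]=[a\ {}_\lambda\ b]$, i.e.\ that the bracket forced by (SEF1)--(SEF2) on $R\subseteq\bar R\oplus R$ is the original one. Each of these reductions traces back to skewsymmetry and the Jacobi identity of the $\lambda$-bracket on $R$, but accompanied by the extra signs coming from the parity reversal $x\mapsto\bar x$ and from the odd variable $\chi$.

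Granting this first step, applying the universal enveloping functor $V(-)$ produces the vertex algebra $V(\bar R\oplus R)$, with conformal generators $\bar R\oplus R$, translation operator $\partial$, and underlying differential algebra freely generated by $\bar{\mathcal U}\oplus\mathcal U$; since $R\hookrightarrow\bar R\oplus R$ is then a morphism of Lie conformal algebras onto a subalgebra, the PBW basis theorem for universal enveloping vertex algebras makes the induced map $V(R)\to V(\bar R\oplus R)$ an injective vertex algebra homomorphism, so $V(R)$ is a vertex subalgebra. By the freeness just mentioned, $D$ extends uniquely to an odd derivation of the differential algebra $V(\bar R\oplus R)$ with the prescribed values on generators; it commutes with $\partial$, and $[D,D]=2\partial$ holds on $\bar{\mathcal U}\oplus\mathcal U$ (since $D^2\bar a=Da=\partial\bar a$ and $D^2 a=D\partial\bar a=\partial a$), hence on all of $V(\bar R\oplus R)$. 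By Step~1 the conditions (SEF1)--(SEF2) hold on the strong generators $\bar{\mathcal U}\oplus\mathcal U$, hence, by the strongly generated form of the criterion, on all of $V(\bar R\oplus R)$; so Proposition~\ref{proposition:N=1 extension strongly generated} promotes $V(\bar R\oplus R)$ to an $N=1$ SUSY vertex algebra with supersymmetry $D$. Finally, reading its $\Lambda$-bracket off the $\lambda$-brackets from Step~1 returns exactly \eqref{eq:main theorem, bracket}, which together with the subalgebra statement is the claim.

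The hard part is Step~1: identifying the correct Lie conformal algebra structure on $\bar R\oplus R$ and showing it satisfies the axioms — skewsymmetry, the Jacobi identity, and the compatibility $[D\bar a\ {}_\lambda\ D\bar b]=[a\ {}_\lambda\ b]$ — all of which must survive the combined parity-reversal/$D$-dictionary. All of the delicate sign bookkeeping of the construction is concentrated there, and incautious sign choices genuinely fail the Jacobi identity; once $\bar R\oplus R$ is known to be a Lie conformal algebra, the remaining steps are routine applications of the machinery of Sections~\ref{section3} and~\ref{section5}.
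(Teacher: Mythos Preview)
Your plan is correct and matches the paper's own proof: the paper writes down the explicit $\lambda$-brackets on $\bar R\oplus R$ (namely $[\bar u_i\ {}_\lambda\ u_j]=\overline{[u_i\ {}_\lambda\ u_j]}$, $[u_i\ {}_\lambda\ \bar u_j]=(-1)^{p(u_i)}\overline{[u_i\ {}_\lambda\ u_j]}$, and $[\bar u_i\ {}_\lambda\ \bar u_j]=0$), verifies one representative Jacobi triple, checks that $(\{\bar u_i\},D)$ satisfies (SEF1)--(SEF2), and then invokes Proposition~\ref{proposition:N=1 extension strongly generated}. The one thing worth making explicit in your write-up is that the forced value of $[\bar a\ {}_\lambda\ \bar b]$ is \emph{zero} (since \eqref{eq:main theorem, bracket} has no $\chi$-term); this is what makes the Jacobi checks essentially immediate rather than a genuine sign battle, and it also trivializes the ``compatibility'' you flag, since (SEF1) then reads $[a\ {}_\lambda\ b]=(-1)^{p(\bar a)+1}D\big((-1)^{p(a)}\overline{[a\ {}_\lambda\ b]}\big)=[a\ {}_\lambda\ b]$.
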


\begin{figure}[H]
\centering
\begin{tikzpicture}[scale = 0.9, style={font=\small}, circl/.style={circle, draw = black, fill = black, inner sep = 2pt}, smallcircl/.style={circle, draw = black, fill = black, inner sep = 1pt}, ghostcircl/.style={circle, draw = black, inner sep = 2pt}]\label{fig:N=1 SUSY in LCA}
\node[draw=none, label = above : ${\bar{R}}$] at (1.5, 6) {};
\node[draw=none, label = above : ${R}$] at (6.5, 6) {};
\node[draw=none, label = above : ${V(\bar{R} \oplus R)}$] at (4, 6) {};
\draw[rounded corners=10, thick] (-1, -1) rectangle (9, 7);
\draw[dashed, rounded corners=10, thick] (0, 0) rectangle (3, 6);
\draw[rounded corners=10, thick] (5,0) rectangle (8, 6);
    \node[circl, label = above : {${u}_{i_1} \quad$}] at (6, 5.1) (S1) {};
    \node[circl, label = left : {${\bar{u}}_{j_1}$}] at (2.4, 5.6) (P1) {};
    \node[circl, label = below : {$\quad [{u}_{i_1} \ _{\lambda} \ {\bar{u}}_{j_1}]$}] at (0.7, 4) (T1) {};
\path[shorten > = 7pt, ->] (S1) edge[out = 160, in = 10] (T1);
\path[shorten > = 7pt, -] (P1) edge[out = -30, in = 10] (T1);
    \node[ghostcircl, label = above : {${\bar{u}}_{i_1} \quad$}] at (1, 5.1) (GS1) {};
    \node[ghostcircl, label = left : {${u}_{j_1}$}] at (7.4, 5.6) (GP1) {};
    \node[ghostcircl, label = below : {$\quad [{u}_{i_1} \ _{\lambda} \ {u}_{j_1}]$}] at (5.7, 4) (GT1) {};
\path[dashed, shorten > = 7pt, ->] (S1) edge[out = -20, in = 10, looseness = 1.8] (GT1);
\path[dashed, shorten > = 7pt, -] (GP1) edge[out = -30, in = 10] (GT1);
    \node[draw=none] at (3.2, 3) (A1) {};
    \node[draw=none] at (4.8, 3) (A2) {};
\draw [->] (A1) -- (A2) node[midway, above] {$D$};
    \node[smallcircl, label = below : {${\bar{u}}_{i_2}$}] at (0.5, 2) (S2) {};
    \node[smallcircl, label = below : {${\bar{u}}_{j_2}$}] at (2, 1.3) (P2) {};
\path[dotted, -] (S2) edge[bend left] (P2);
    \node[smallcircl, label = below : {${u}_{i_3} \ $}] at (5.6, 1.2) (S3) {};
    \node[smallcircl, label = below : {${u}_{j_3} \ $}] at (6, 0.6) (P3) {};
    \node[smallcircl, label = above : {$[{u}_{i_3} \ _{\lambda} \ {u}_{j_3}]$}] at (7.1, 2.3) (T3) {};
\path[shorten > = 7pt, ->] (S3) edge[out = -10, in = -100] (T3);
\path[shorten > = 7pt, -] (P3) edge[out = 15, in = -100] (T3);
\end{tikzpicture}
\caption*{Construction of $N = 1$ supersymmetry in Theorem \ref{theorem:N=1 extension}}
\end{figure}
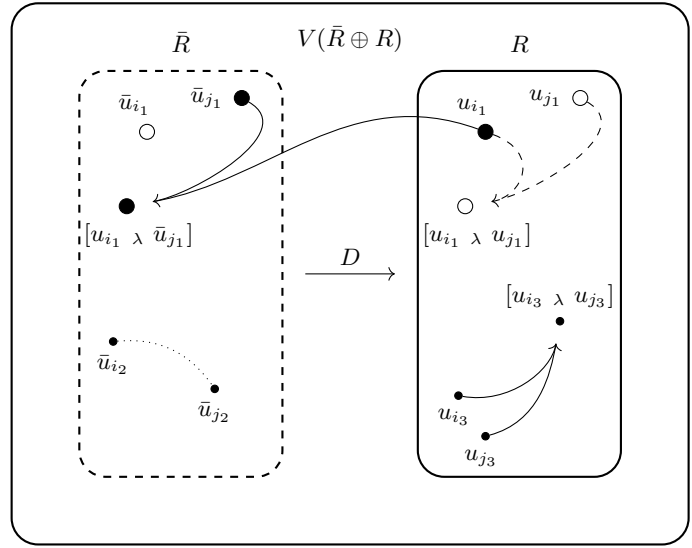

The conclusion of Theorem \ref{thm:intro-main theorem} is, for the universal enveloping vertex algebra of a Lie conformal algebra, one can always find a supersymmetric extension of the vertex algebra. Note that the $\Lambda$-bracket \eqref{eq:main theorem, bracket} is inspired from the bracket of $V^0_{N=1}(\mathfrak{g})$, the SUSY extension of $V^0(\mathfrak{g})$.
However, it is not the only supersymmetric extension of such vertex algebras. For example, the $bc$–$\beta\gamma$ system is a well-known supersymmetric extension of the $\beta\gamma$ system but the SUSY vertex algebra obtained by Theorem \ref{thm:intro-main theorem} is not the $bc$–$\beta\gamma$ system (see Example \ref{ex: betagamma N=1 ext-2}). For the Virasoro vertex algebra, the supsersymmetric extension via Theorem \ref{thm:intro-main theorem} is not the $ N =1$ super-Virasoro vertex algebra (see Example \ref{ex:N=1 ext of vir}).

Moreover, by using the SUSY extension formulas (SEF1) and (SEF2) in Section \ref{section5}, it is possible to impose desired constraints and thereby construct SUSY extensions of a given vertex algebra.
As an example (see Remark \ref{remark:superconformal current}), we can reproduce the operator product expansion of the superconformal current algebra, which was defined in \cite{KT85}. In other words, the $N = 1$ SUSY affine vertex algebra $V^{0}_{N=1}(\mathfrak{g})$ of level $0$ can be embedded into an $N = 1$ superconformal vertex algebra ${\textup{SVir}}^{c} \otimes {V}^{0}_{N=1}(\mathfrak{g})$, where $\textup{SVir}^{c}$ is the $N = 1$ super-Virasoro vertex algebra of central charge $c \in \mathbb{C}$.
Note that $V^k_{N=1}(\mathfrak{g})$ for a basic Lie superalgebra $\mathfrak{g}$ and $k\neq 0$ is known to be an $N = 1$ superconformal vertex algebra by itself, since it has the Kac-Todorov superconformal vector (see \cites{Kac98, KT85}). Meanwhile, when $\mathfrak{g}$ does not have a non-degenerate even supersymmetric bilinear form or the level $k = 0$, the Kac-Todorov construction collapses and no superconformal vector has been discovered.

In Section \ref{section6}, we begin with the characterization of the $N = n$ SUSY vertex algebras using $n$ odd derivations, as metioned in Section 5.9 of \cite{Kac98}. With this characterization, we observe an action of the orthogonal group on the collection of $N = n$ SUSY structures of the given vertex algebra, which can be viewed as a generalization of the relationships between SUSY structures derived from the superconformal structures (see Proposition \ref{proposition: orthogonal group action}). In particular, we deal with some examples of the SUSY structures associated with superconformal structures.

In Theorem \ref{theorem:N=2 extension}, we generalize the result in Section \ref{section5} to obtain an $N = 2$ SUSY extension of a universal enveloping $N = 1$ SUSY vertex algebra. Moreover, in Remark \ref{remark:N=3 extension}, we show the same argument can be applied to find an $N = n + 1$ SUSY extension of a universal enveloping $N = n$ SUSY vertex algebra. Inductively, a universal enveloping vertex algebra can be extended to an $N = n$ SUSY vertex algebra for $n \geq 1$. In particular, applying the above process to centerless current Lie conformal algebra yields the super-loop algebra in \cite{CFRS90}.

Furthermore, in Proposition \ref{proposition:N=2 superconformal extn of affine}, we show the $N = 2$ SUSY affine vertex algebra ${V}^0_{N=2}(\mathfrak{g})$ of level $0$, which is an $N = 2$ SUSY extension of the affine vertex algebra of level $0$, can be embedded into an $N = 2$ superconformal vertex algebra. More precisely, there is an $N = 2$ superconformal vertex algebra $\widetilde{V}^0_{N=2}(\mathfrak{g})$ such that
\[
\widetilde{V}^0_{N=2}(\mathfrak{g}) \simeq {\textup{SVir}}^{c}_{N=2} \otimes {V}^0_{N=2}(\mathfrak{g}),
\]
where $\textup{SVir}_{N=2}^{c}$ is the $N = 2$  super-Virasoro vertex algebra, and the two odd derivations giving $N=2$ SUSY structure of ${V}^0_{N=2}(\mathfrak{g})$ can be realized by the $N=2$ superconformal structure of $\widetilde{V}^0_{N=2}(\mathfrak{g})$.

\vskip 6mm

\section{Vertex algebras}\label{section2}

In this section, we review basic notions and properties of vertex algebras used throughout this article. For more details, we refer to \cites{DSK06, Kac98}.

\vskip 1mm

In this paper, the base field is $\mathbb{C}$. A vector superspace $V=V_{\bar{0}} \oplus V_{\bar{1}}$ is a $\mathbb{Z}/2\mathbb{Z}$-graded vector space. An element $a\in V$ is called even (resp. odd) if $a\in V_{\bar{0}}$ (resp.  $a\in V_{\bar{1}}$). When $a\in V$ is even (resp. odd), the parity of $a$ is given by $p(a)=0$ (resp. $p(a)=1$).
A vector superspace $V$ is called a {\it vertex algebra} if it is 
 endowed with $m$-th products $a_{(m)}b$ for $a,b\in V$ and $m \in \mathbb{Z}$ with the properties described below.
The $(-1)$-th product $a_{(-1)}b$ is usually denoted by $:ab:$ and called the {\it normally ordered product} of $a$ and $b$. In addition, $V$ has an even derivation $\partial$ such that 
\[ {a}_{(-m-1)}{b} = {\frac{1}{m!}}:({\partial}^{m}{a}){b}: \]
for $m \in \mathbb{N}.$ Therefore the structure of $V$ is determined by (i) the $m$-th product for $m \in \mathbb{Z}_{+}$, (ii) the normally ordered product $: \ :$, and (iii) the derivation $\partial$. In order to deal with all the information from (i) together, we define a bracket $V \otimes V \to \mathbb{C}[\lambda] \otimes V$  by 
\begin{equation} \label{eq:lambda bracket}
(a,b) \mapsto [a \ _\lambda \ b]= \sum_{m \in {\mathbb{Z}}_{+}} {\frac{{\lambda}^{m}}{m!}}{a}_{(m)}{b},
\end{equation}
where $\lambda$ is an even indeterminate. The linear bracket \eqref{eq:lambda bracket} satisfies the following property: 
\begin{enumerate}[]
\item (Sesquilinearity)
\quad $
[\partial a \ _{\lambda} \ b] = - \lambda[a \ _{\lambda} \ b], \quad [a \ _{\lambda} \ \partial b] = (\partial + \lambda)[a \ _{\lambda} \ b].
$
\end{enumerate}
The RHS of the second equality in sesquilinearity can be computed by the relation $\partial \lambda= \lambda \partial$.
More generally, for a $\mathbb{C}[\partial]$-module $R$, a bracket $R \times R \to R[\lambda]$ with the sesquilinearity is called a {\it $\lambda$-bracket} on $R$. In other words, the bracket \eqref{eq:lambda bracket} is a $\lambda$-bracket on the vertex algebra $V$.
Moreover, a vertex algebra is a Lie conformal algebra, which is defined as below.

\begin{definition}[\cite{DSK06}]\label{Definition:LCA}
\rm A \textit{Lie conformal algebra} is a $\mathbb{Z}/2\mathbb{Z}$-graded $\mathbb{C}[\partial]$-module $R$ with a $\lambda$-bracket which is a parity preserving $\mathbb{C}$-linear map
\begin{align*}
[ \ _{\lambda} \ ] : R \otimes R \rightarrow \mathbb{C}[\lambda] \otimes R, \quad a \otimes b \mapsto [a \ _{\lambda} \ b],
\end{align*}
satisfying the following conditions:
\begin{enumerate}[]
\item (Skew-symmetry)
\quad $
[b \ _{\lambda} \ a] = (-1)^{p(a)p(b)+1}[a \ _{-\partial-\lambda} \ b],
$
\item (Jacobi identity) 
\quad 
$
[a \ _{\lambda} \ [b \ _{\gamma} \ c]] = [[a \ _{\lambda} \ b] \ _{{\lambda} + {\gamma}} \ c] + (-1)^{p(a)p(b)}[b \ _{\gamma} \ [a \ _{\lambda} \ c]],
$
\end{enumerate}
where $a,b,c$ are homogeneous elements in  $R$. More precisely, the RHS of skew-symmetry can be rewritten as \[[a \ _{-\partial-\lambda} \ b] =  \sum_{m \in \mathbb{Z}_{+}} \frac{(-\partial-\lambda)^m}{m!} a_{(m)}b,\]
where we denote the $\lambda$-bracket as in \eqref{eq:lambda bracket}, 
and the Jacobi identity is defined in $\mathbb{C}[\lambda,\gamma] \otimes R$. In addition, we assume $\lambda \gamma=\gamma \lambda$.
\end{definition}

\vskip 1mm

Furthermore, there are more properties required to the normally ordered product  on a vertex algebra $V$ and for the compatibility between the normally ordered product and the $\lambda$-bracket or derivation. In the following definition, one can find all axioms for vertex algebras.

\begin{definition}[\cite{DSK06}]\label{definition:Vertex algebra}
\rm
A \textit{vertex algebra} is a tuple $(V, \partial, [ \ _{\lambda} \ ], \ket{0}, : \ :)$ such that:
\\
$\bullet$ $(V, \partial, [ \ _{\lambda} \ ])$ is a Lie conformal algebra,
\\
$\bullet$ $(V, \partial, \ket{0},: \ :)$ is a unital differential superalgebra with a derivation $\partial$, satisfying the following properties:
\begin{enumerate}[]
\item (Quasi-commutativity)
 \quad $\displaystyle :ab: - (-1)^{p(a)p(b)}:ba: = \int_{-\partial}^{0} [a \ _{\lambda} \ b] \ d\lambda,$
\vskip 1mm
\item (Quasi-associativity) \ \ 
\[
::ab:c: - :a:bc:: = :\left( \int_{0}^{\partial} \ d\lambda \  a \right)[b \ _{\lambda} \ c]: + (-1)^{p(a)p(b)}:\left( \int_{0}^{\partial} \ d\lambda \ b \right)[a \ _{\lambda} \ c]:,
\]
\end{enumerate}
\vskip 1mm
$\bullet$ the $\lambda$-bracket and the product $: \ :$ are related by 
\begin{enumerate}[]
\item (Non-commutative Wick formula) 
\[
[a \ _{\lambda} \ :bc:] = :[a \ _{\lambda} \ b]c: + (-1)^{p(a)p(b)}:b[a \ _{\lambda} \ c]: + \int_{0}^{\lambda} [[a \ _{\lambda} \ b] \ _{\gamma} \ c] \ d\gamma.
\]
\end{enumerate}
In the quasi-associativity, we note that $:\big( \int_{0}^{\partial} \ d\lambda \ a \big)[b \ _{\lambda} \ c]: = \sum_{n \in \mathbb{Z}_{+}} \sum_{m \in \mathbb{Z}_{+}}(a_{(-n-2)} (b_{(m)}c)).$ 
\end{definition}

\vskip 1mm

\begin{remark}\label{rem:NO}
\rm
From now on, we denote the normally ordered product $:ab:$ simply by $ab$ for $a,b\in V$. For $a_1,a_2, \cdots, a_l \in V$, the element $a_1 a_2  \cdots a_{l-1} a_l$ is the element obtained by performing normally ordered product from right to left. For example:
    \[ a_1 a_2 a_3 = (a_1 (a_2 a_3)).\]
\end{remark}

\vskip 1mm

As we introduced in Definition \ref{definition:Vertex algebra}, a vertex algebra is a Lie conformal algebra. On the other hand, once we have a Lie conformal algebra, one can extend it to a vertex algebra by the following proposition.

\begin{proposition}[\cite{DSK05}]\label{proposition:universal VA}
Let $R$ be a Lie conformal algebra and consider the Lie bracket on $R$ defined by
\[ [a,b]= \int_{-\partial}^0 d\lambda [a \ _\lambda \ b].\]
Then the universal enveloping algebra $U(R)$ induces a canonical vertex algebra $V(R)$.
More precisely, the associative product on $U(R)$ induces the normally ordered product on $V(R)$ and the unique $\lambda$-bracket on $V(R)$ whose restriction to $R \otimes R$ is the $\lambda$-bracket on $R$ can be obtained by the non-commutative Wick formula.
\end{proposition}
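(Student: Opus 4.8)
The plan is to construct $V(R)$ explicitly as a vacuum module over the coefficient Lie superalgebra attached to $R$, equip it with a vertex algebra structure via the Existence (Reconstruction) Theorem, and then read off the two compatibility assertions; this is the route of \cite{DSK05} (compare the formal‑distribution picture of \cite{Kac98}). A preliminary remark: the bracket $[a,b]=\int_{-\partial}^{0}[a\ _{\lambda}\ b]\,d\lambda$ really does make $R$ into a Lie superalgebra — skew‑symmetry follows from skew‑symmetry of the $\lambda$‑bracket after the substitution $\mu=-\partial-\lambda$, and the Jacobi identity follows by integrating the $\lambda$‑bracket Jacobi identity in $\lambda$ and $\gamma$ — and, by quasi‑commutativity, it is exactly the commutator that the normally ordered product induces on the image of $R$ in any vertex algebra containing $R$ as a Lie conformal algebra; so $U(R)$ necessarily maps to every such vertex algebra, and the content of the statement is that for the minimal target this map is essentially an isomorphism of vector spaces.

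To build that target I would pass to the annihilation (coefficient) Lie superalgebra
\[
\mathcal{L}=\mathcal{L}(R):=\bigl(R\otimes\mathbb{C}[t,t^{-1}]\bigr)\big/(\partial\otimes 1+1\otimes\partial_{t})\bigl(R\otimes\mathbb{C}[t,t^{-1}]\bigr),
\]
denote by $a_{(n)}$ the image of $a\otimes t^{n}$, and put $[a_{(m)},b_{(n)}]=\sum_{j\in\mathbb{Z}_{+}}\binom{m}{j}(a_{(j)}b)_{(m+n-j)}$. One checks that $\mathcal{L}$ is a Lie superalgebra (again skew‑symmetry and the Jacobi identity are mode translations of the axioms of $[\ _{\lambda}\ ]$ on $R$), that the defining relation gives $(\partial a)_{(n)}=-n\,a_{(n-1)}$, and that $\mathcal{L}=\mathcal{L}_{-}\oplus\mathcal{L}_{+}$ with $\mathcal{L}_{+}=\mathrm{span}\{a_{(n)}:n\geq 0\}$ and $\mathcal{L}_{-}=\mathrm{span}\{a_{(n)}:n<0\}$ both subalgebras (finiteness of the sums uses polynomiality of the $\lambda$‑bracket in $\lambda$). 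Set $V(R):=U(\mathcal{L})/U(\mathcal{L})\mathcal{L}_{+}$ with $\ket{0}$ the image of $1$. By PBW the map $U(\mathcal{L}_{-})\to V(R)$, $x\mapsto x\ket{0}$, is a linear isomorphism; since $a\mapsto a_{(-1)}$ identifies $(R,[\ ,\ ])$ with $\mathcal{L}_{-}$ as Lie superalgebras, this yields $V(R)\cong U(R)$ as vector spaces, and a PBW monomial $a^{1}_{(-k_{1}-1)}\cdots a^{r}_{(-k_{r}-1)}\ket{0}$ corresponds to the iterated normally ordered product $\tfrac{1}{k_{1}!\cdots k_{r}!}\,(\partial^{k_{1}}a^{1})(\partial^{k_{2}}a^{2})\cdots(\partial^{k_{r}}a^{r})$ formed from the right in the sense of Remark \ref{rem:NO} — this is the precise meaning of ``the associative product on $U(R)$ induces the normally ordered product on $V(R)$''. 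Finally I would put on $V(R)$ the even derivation $\partial$ coming from $-\partial_{t}$ (namely $\partial\,a_{(n)}=-n\,a_{(n-1)}$, which preserves $\mathcal{L}_{\pm}$) and, for $a\in R$, the fields $Y(a_{(-1)}\ket{0},z)=\sum_{n\in\mathbb{Z}}a_{(n)}z^{-n-1}$ acting by left multiplication; these are creative, they generate $V(R)$, and they are pairwise local because each commutator $[a_{(m)},b_{(n)}]$ is a finite sum (polynomiality again), so the Existence/Reconstruction Theorem endows $V(R)$ with a vertex algebra structure.

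It then remains to identify the structure and to verify uniqueness. By the definition of the bracket on $\mathcal{L}$, the $n$‑th products among the generators $a_{(-1)}\ket{0}$ reproduce $a_{(n)}b$, hence $[a\ _{\lambda}\ b]$, so $a\mapsto a_{(-1)}\ket{0}$ is an injective Lie conformal algebra morphism $R\hookrightarrow V(R)$, and by the previous paragraph $V(R)$ is generated by $R$ under $\partial$ and $:\,:$. For the $\lambda$‑bracket on all of $V(R)$: starting from $[a\ _{\lambda}\ b]$ on $R$, sesquilinearity determines $[\partial^{i}a\ _{\lambda}\ \partial^{j}b]$, and the non‑commutative Wick formula together with sesquilinearity then computes $[u\ _{\lambda}\ v]$ for arbitrary monomials $u,v$ by induction on their lengths; since every step is forced, this simultaneously shows that the extension exists consistently and that it is unique. (If one wants the full universal property, any Lie conformal algebra morphism $\varphi\colon R\to W$ into a vertex algebra makes $W$ an $\mathcal{L}$‑module with $\mathcal{L}_{+}\ket{0}_{W}=0$, hence extends uniquely to an $\mathcal{L}$‑module map $V(R)\to W$, which is a vertex algebra morphism by the uniqueness half of the reconstruction theorem.)

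I expect the genuine work to sit in two places. First, verifying that $\mathcal{L}(R)$ obeys the Jacobi identity: rewriting the $\lambda$‑bracket Jacobi identity in modes produces sums of products of binomial coefficients $\binom{m}{j}$ with arbitrary integer $m$, and disentangling them requires Vandermonde‑type identities and care with signs and parities. Second, applying the reconstruction theorem cleanly — confirming mutual locality of the generating fields and that the resulting $n$‑th products satisfy every vertex algebra axiom — and, hand in hand with this, being careful that the associative product of $U(R)$ only \emph{induces} the normally ordered product of $V(R)$ through the right‑nested PBW monomials rather than literally agreeing with it, since the normally ordered product is not associative.
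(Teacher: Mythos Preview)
The paper does not prove this proposition: it is stated with a citation to \cite{DSK05} and immediately used as a black box, so there is no in-paper argument to compare against. Your proposal is the standard construction of the universal enveloping vertex algebra via the coefficient Lie superalgebra $\mathcal{L}(R)$ and the vacuum module, followed by the Existence/Reconstruction Theorem, which is precisely the route taken in the cited reference and in \cite{Kac98}; the identification $U(R)\cong U(\mathcal{L}_{-})\cong V(R)$ as vector spaces and the inductive description of the $\lambda$-bracket via the non-commutative Wick formula are also exactly what the statement is summarizing. Your caveats about where the real work sits (the Jacobi identity for $\mathcal{L}(R)$ and the locality/reconstruction verification) are accurate, and your warning that the associative product on $U(R)$ only induces the normally ordered product through right-nested PBW monomials is the correct reading of the phrase ``induces'' in the statement. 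In short, nothing in the paper goes beyond what you have written; you have supplied the proof the paper outsources.
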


\vskip 1mm

The vertex algebra $V(R)$ in Proposition \ref{proposition:universal VA} is called the {\it universal enveloping vertex algebra} of $R$. By the construction of $V(R)$, one can see that $V(R)$ is strongly generated by $R$. Here, the phrase {\it strongly generated}  means that 
$V(R)$ is spanned by the normally ordered products of a $\mathbb{C}[\partial]$-module $R$.
Moreover, 
since $V(R)$ is defined via the universal enveloping algebra of $R$, one can realize  $V(R)$ is {\it freely generated} by $R$, that is, if $\{v_i |i\in I\}$ is an ordered $\mathbb{C}$-basis of $R$ then 
\[
\{ v_{i_1}v_{i_2}\cdots v_{i_l} \, | \, i_1\leq i_2 \leq \cdots \leq i_l, \ i_t< i_{t+1} \text{ if } p(v_t)=1\}
\]
is a basis of $V(R)$.

\vskip 1mm

\begin{example}[Affine vertex algebra]\label{example: affine VA}
\rm
Let ${\mathfrak{g}}$ be a Lie superalgebra with an even invariant supersymmetric bilinear form $(\, |\, )$. The $\mathbb{C}[\partial]$-module ${\textup{Cur}\mathfrak{g}} = \mathbb{C}[\partial] \otimes \mathfrak{g} \oplus \mathbb{C}K$ with the $\lambda$-brackets
\begin{align*}
[a \ _{\lambda} \ b] = [a, b] + K{\lambda}(a|b),
{\quad}
[K \ _\lambda \ a] = [K \ _\lambda \ K] = 0
\end{align*}
for $a, b \in \mathfrak{g}$, is called the {\it current Lie conformal algebra}.
For $k \in \mathbb{C}$, the quotient of the universal enveloping vertex algebra $V^k(\mathfrak{g}):= V({\textup{Cur}\mathfrak{g}}) \slash \left< K-k\right>$ is called the {\it affine vertex algebra of level $k$}.
\end{example}

\vskip 1mm

\begin{example}[${\beta}{\gamma}$ system and $bc$–${\beta}{\gamma}$ system]\label{ex:def of bcbetagamma}
\rm
Let ${R} = {\mathbb{C}}[{\partial}] \otimes {\textup{span}}_{\mathbb{C}}\{ \beta, \gamma \} \oplus \mathbb{C}C$ be an even Lie conformal algebra with the $\lambda$-brackets
\begin{align*}
[{\beta} \ _{\lambda} \ {\gamma}] = - [{\gamma} \ _{\lambda} \ {\beta}] = C,
\end{align*}
where $C$ is a central element and satisfies ${\partial}C = 0$. Then the universal enveloping vertex algebra $V(R)$ or the quotient vertex algebra $V(R) / \left<C - 1 \right>$ is called the {\it $\beta\gamma$ system}. Let ${\widetilde{R}} = {\mathbb{C}}[{\partial}] \otimes {\textup{span}}_{\mathbb{C}}\{ \beta, \gamma, b, c \} \oplus \mathbb{C}C$ be a Lie conformal algebra, where $b$ and $c$ are odd elements. If the non-zero $\lambda$-brackets between the generators of $\widetilde{R}$ are given by
\begin{align*}
[{\beta} \ _{\lambda} \ {\gamma}] = - [{\gamma} \ _{\lambda} \ {\beta}] = [b \ _{\lambda} \ c] = [c \ _{\lambda} \ b] = C,
\end{align*}
then the universal enveloping vertex algebra $V(\widetilde{R})$ or its quotient $V(\widetilde{R}) / \left<C - 1 \right>$ is called the {\it $bc$–$\beta\gamma$ system}.
\end{example}

\vskip 1mm

\begin{example}[Virasoro vertex algebra]\label{example: Virasoro LCA}
\rm
The Lie conformal algebra ${\textup{Vir}} = {\mathbb{C}}[\partial] \otimes {\mathbb{C}}L \oplus {\mathbb{C}}C$ endowed with the $\lambda$-brackets
\begin{equation} \label{eq:virasoro}
[L \ _\lambda \ L] = ({\partial} + 2{\lambda})L + {\frac{C}{12}}{\lambda}^{3},
{\quad}
[C \ _\lambda \ L] = [C \ _\lambda \ C] = 0
\end{equation}
is called the {\it Virasoro Lie conformal algebra}. For $c \in \mathbb{C}$, the {\it Virasoro vertex algebra of central charge $c$} is the quotient vertex algebra ${\textup{Vir}}^{c}:= V({\textup{Vir}}) \slash \left< C-c\right>$.
\end{example}

\vskip 1mm

In a vertex algebra $V$, an even vector $L \in V$ is called a \textit{conformal vector} if $L_{(0)} = \partial$, $L_{(1)}$ is diagonalizable on $V$ and it satisfies the Virasoro relation \eqref{eq:virasoro} with central charge $c$. If a vertex algebra has a conformal vector then it is called a \textit{conformal vertex algebra}. For a conformal vector $L$ of a conformal vertex algebra $V$, if $a\in V$ satisfies $[L\ {}_\lambda \ a]=(\partial+ \Delta_a)a + O(\lambda^2)$ for $\Delta_a \in \mathbb{C}$, then $\Delta_a$ is called the {\it conformal weight} of $a$. If $O(\lambda^2)=0$, then we call $a$ a  {\it primary of conformal weight $\Delta_a$.}

\vskip 1mm

\begin{example}[\cites{HK07, Kac98, KW04}]\label{example:superVirasoro VA}
\rm
The following vertex algebras are the most well-known examples of the extensions of Virasoro vertex algebras.
The \textit{$N = 1$ super-Virasoro vertex algebra} ${\textup{SVir}}^{c}$ is generated by an even vector $L$ and an odd vector $G$ satisfying the following super-Virasoro relation
\begin{align} \label{eq:superVirasoro relation}
[L \ _\lambda \ L] = ({\partial} + 2{\lambda})L + {\frac{c}{12}}{\lambda}^{3},
{\quad}
[L \ _\lambda \ G] = ({\partial} + {\frac{3}{2}}{\lambda})G,
{\quad}
[G \ _\lambda \ G] = 2L + {\frac{c}{3}}{\lambda}^{2}.
\end{align}
The \textit{$N = 2$ super-Virasoro vertex algebra} ${\textup{SVir}}^{c}_{N=2}$ is generated by two even vectors $L$, $J$, and two odd vectors $G^{+}$, $G^{-}$, satisfying the following $N = 2$ super-Virasoro relation
\begin{equation}\label{eq:N=2 superVirasoro relation}
\begin{gathered}\relax
[L \ _\lambda \ L] = ({\partial} + 2{\lambda})L + {\frac{c}{12}}{\lambda}^{3},
{\quad}
[L \ _\lambda \ G^{\pm}] = (\partial + \frac{3}{2}\lambda){G}^{\pm},
{\quad}
[G^{\pm} \ _\lambda \ G^{\pm}] = 0,
\\
[G^{+} \ _\lambda \ G^{-}] = L + (\frac{1}{2}\partial + \lambda)J + \frac{c}{6}{\lambda}^{2},
{\quad}
[L \ _\lambda \ J] = (\partial + \lambda)J,
{\quad}
[G^{\pm} \ _\lambda \ J] = {\mp}{G}^{\pm},
{\quad}
[J \ _\lambda \ J] = \frac{c}{3}\lambda.
\end{gathered}
\end{equation}
The \textit{$N = 3$ super-Virasoro vertex algebra} is generated by a conformal vector $L$ with central charge $c$, three even primary vectors ${J}^{+}$, ${J}^{-}$, ${J}^{0}$ of conformal weight $1$, three odd primary vectors ${G}^{+}$, ${G}^{-}$, ${G}^{0}$ of conformal weight $\frac{3}{2}$, and an odd primary vector ${\Phi}$ of conformal weight $\frac{1}{2}$, satisfying the following $N = 3$ super-Virasoro relation where the non-zero ${\lambda}$-brackets are:
\begin{equation}\label{eq:N=3 superVirasoro relation}
\begin{gathered}\relax
[{G}^{+} \ _\lambda \ {G}^{-}] = {L} + ({\frac{1}{2}}{\partial} + {\lambda}){J}^{0} + {\frac{c}{6}}{\lambda}^{2},
{\quad}
[{G}^{\pm} \ _\lambda \ {G}^{0}] = \pm({\frac{1}{2}}{\partial} + {\lambda}){J}^{\pm},
{\quad}
[{G}^{0} \ _\lambda \ {G}^{0}] = L + {\frac{c}{6}}{\lambda}^{2},
\\
[{J}^{+} \ _\lambda \ {J}^{-}] = {J}^{0} + {\frac{c}{3}}{\lambda},
{\quad}
[{J}^{\pm} \ _\lambda \ {J}^{0}] = {\mp}{J}^{\pm},
{\quad}
[{J}^{0} \ _\lambda \ {J}^{0}] = {\frac{c}{3}}{\lambda},
\\
[{G}^{\pm} \ _\lambda \ {J}^{\mp}] = \mp{G}^{0} + ({\partial} + {\lambda}){\Phi},
{\quad}
[{G}^{\pm} \ _\lambda \ {J}^{0}] = [{G}^{0} \ _\lambda \ {J}^{\pm}] = {\mp}{G}^{\pm},
{\quad}
[{G}^{0} \ _\lambda \ {J}^{0}] = -({\partial} + {\lambda}){\Phi},
{\quad}
\\
[{G}^{\pm} \ _\lambda \ {\Phi}] = {\frac{1}{2}}{J}^{\pm},
{\quad}
[{G}^{0} \ _\lambda \ {\Phi}] = -{\frac{1}{2}}{J}^{0},
{\quad}
[{\Phi} \ _\lambda \ {\Phi}] = {\frac{c}{6}}.
\end{gathered}
\end{equation}
Note that, for the $N = 3$ super-Virasoro vertex algebra, there is an automorphism of the form
\begin{align*}
{G}^{\pm} \mapsto {G}^{\mp},
{\quad}
{G}^{0} \mapsto -{G}^{0},
{\quad}
{J}^{\pm} \mapsto {J}^{\mp},
{\quad}
{J}^{0} \mapsto -{J}^{0}.
\end{align*}
For ${\mu} \in {\mathbb{C}}^{*}$, the following map defined by
${G}^{\pm} \mapsto {\mu}^{\pm}{G}^{\pm}$, ${J}^{\pm} \mapsto {\mu}^{\pm}{J}^{\pm}$ is also an automorphism of the $N = 3$ super-Virasoro vertex algebra. For more details, the reader is referred to \cite{SS87}.
\end{example}

\vskip 6mm

\section{Supersymmetric vertex algebras}\label{section3}

In this section, we introduce supersymmetric (SUSY) vertex algebras 
using so-called $\Lambda$-bracket. It can be understood as an analogous definition to that of vertex algebras in Definition \ref{definition:Vertex algebra}. We remark, all the SUSY vertex algebras considered in this paper are the $N_{K} = n$ SUSY vertex algebras in \cite{HK07}. For the simplicity of notations, let us denote by $N = n$ SUSY vertex algebras, instead of $N_{K} = n$ SUSY vertex algebras. The main reference of this section is \cite{HK07}.

\vskip 1mm

Let $R$ be a $\mathbb{C}[\nabla]$-module, where $\nabla=(\partial, D^1, D^2, \cdots, D^n)$ consists of an even operator $\partial$ and   odd operators $D^i$   such that 
\begin{equation}\label{eq:Der commutator}
 [D^i, D^j] = \delta_{i,j} 2{\partial}
\end{equation}
for $i,j=1,2,\cdots, n$.
Note that  the bracket in \eqref{eq:Der commutator} is a supercommutator. 
In addition, consider a tuple $\Lambda:= (\lambda, \chi^1, \chi^2, \cdots, \chi^n)$ of an even formal variable $\lambda$ and odd formal variables $\chi^i$ which are subject to the relations 
\begin{equation*} \label{eq:var commutator}
[\chi^i, \chi^j]= -\delta_{i,j} 2 \lambda,
\end{equation*}
where $i,j=1,2, \cdots, n$, and the bracket is again a supercommutator. Now, a {\it $\Lambda$-bracket} $R \otimes R \to \mathbb{C}[\Lambda] \otimes R$ is a degree $\bar{n}$ linear map with the following property:
\begin{enumerate}[]
\item (Sesquilinearity)
\quad 
$[{D}^{i}a \ _{\Lambda} \ b] = (-1)^{n+1}{\chi}^{i}[a \ _{\Lambda} \ b], \quad [a \ _{\Lambda} \ {D}^{i}b] = (-1)^{p(a)+n}({D}^{i} + {\chi}^{i})[a \ _{\Lambda} \ b]$.
\end{enumerate}
The RHS of the second equality in sesquilinearity can be computed by the relation
\begin{equation}\label{eq:D,chi}
 [D^i, \chi^j]= \delta_{i,j} 2 \lambda, \quad [D^i, \lambda]=0.
\end{equation}

\begin{definition}[\cites{HK07}]\label{definition:SUSY LCA}
\rm An $N = n$ \textit{supersymmetric {\textup{(}}SUSY{\textup{)}} Lie conformal algebra} is a $\mathbb{Z}/2\mathbb{Z}$-graded $\mathbb{C}[\nabla]$-module $R$ with a $\Lambda$-bracket which is a $\mathbb{C}$-linear map of degree $\bar{n}$:
\begin{align*}
[ \ _{\Lambda} \ ] : R \otimes R \rightarrow \mathbb{C}[\Lambda] \otimes R, \quad a \otimes b \mapsto [a \ _{\Lambda} \ b],
\end{align*}
satisfying the following conditions:
\begin{enumerate}[]
\item (Skew-symmetry)
\quad $
[b \ _{\Lambda} \ a] = (-1)^{p(a)p(b)+n+1}[a \ _{-\nabla-\Lambda} \ b]$,
\item (Jacobi identity)
\quad $
[a \ _{\Lambda} \ [b \ _{\Gamma} \ c]] = (-1)^{(p(a)+1)n}[[a \ _{\Lambda} \ b] \ _{\Lambda+\Gamma} \ c] + (-1)^{(p(a)+n)(p(b)+n)}[b \ _{\Gamma} \ [a \ _{\Lambda} \ c]]$,
\end{enumerate}
for $a, b, c \in R$. We remark that the Jacobi identity holds in $\mathbb{C}[\Lambda, \Gamma]$, where $\Gamma = (\gamma, {\eta}^{1}, {\eta}^{2}, \cdots, {\eta}^{n})$ consists of an even formal variable ${\gamma}$ and odd formal variables ${\eta}^i$ with the relations $[{\eta}^{i}, {\eta}^{j}] = -{\delta}_{i,j}2{\gamma}$ and $\Lambda$ and $\Gamma$ are supercommute. In addition, $\mathbb{C}[\Lambda,\Gamma] \otimes R$ is a $\mathbb{C}[\nabla]$-module via the relation \eqref{eq:D,chi} and 
\begin{align*}
[D^i, \eta^j]= \delta_{i,j} 2 \gamma, \quad [D^i, \gamma]=0.
\end{align*}
In addition, the $\Lambda$-bracket supercommutes with $\mathbb{C}[\Lambda, \Gamma]$. For example, $[ a\ {}_\Lambda \  \eta^i b] = (-1)^{p(a)+n} \eta^i [a\ {}_\Lambda \ b]$ for any $a,b\in R$.
\end{definition}

\vskip 1mm

In particular, we denote the $\Lambda$-bracket for $N = 1$ SUSY Lie conformal algebras by
\begin{align*}
[a \ _{\Lambda} \ b] = \sum\limits_{m \in \mathbb{Z}_{+}}\frac{{\lambda}^{m}}{m!}a_{(m|0)}b + \chi\sum\limits_{m \in \mathbb{Z}_{+}}\frac{{\lambda}^{m}}{m!}a_{(m|1)}b,
\end{align*}
and the $\Lambda$-bracket for $N = 2$ SUSY Lie conformal algebras by 
\begin{align*}
[a \ _{\Lambda} \ b]
 = \sum\limits_{m \in \mathbb{Z}_{+}}\frac{\lambda^{m}}{m!}{a}_{(m|00)}b
 - {\chi}^{1}\sum\limits_{m \in \mathbb{Z}_{+}}\frac{\lambda^{m}}{m!}{a}_{(m|10)}b - {\chi}^{2}\sum\limits_{m \in \mathbb{Z}_{+}}\frac{\lambda^{m}}{m!}{a}_{(m|01)}b
 - {\chi}^{1}{\chi}^{2}\sum\limits_{m \in \mathbb{Z}_{+}}\frac{\lambda^{m}}{m!}{a}_{(m|11)}b.
\end{align*}

\vskip 1mm

\begin{definition}[\cite{HK07}]\label{definition:SUSY VA}
\rm An $N = n$ \textit{supersymmetric {\textup{(}}SUSY{\textup{)}} vertex algebra} is a tuple $(V, \nabla, [ \ _{\Lambda} \ ], \ket{0}, : \ :)$ such that:
\\
$\bullet$ $(V, \nabla , [ \ _{\Lambda} \ ])$ is an $N = n$ SUSY Lie conformal algebra,
\\
$\bullet$ $(V, \nabla, \ket{0},: \ :)$ is a unital differential superalgebra satisfying the following properties:
\begin{enumerate}[]
\item (Quasi-commutativity)
 \quad $\displaystyle :ab: - (-1)^{p(a)p(b)}:ba: = \int_{-\nabla}^{0} [a \ _{\Lambda} \ b] \ d\Lambda ,$
\vskip 1mm
\item (Quasi-associativity) \ \ 
\[
::ab:c: - :a:bc:: = :\left( \int_{0}^{\nabla} \ d\Lambda a \right)[b \ _{\Lambda} \ c]: + (-1)^{p(a)p(b)} :\left( \int_{0}^{\nabla} \ d\Lambda b \right)[a \ _{\Lambda} \ c]:,
\]
\end{enumerate}
\vskip 1mm
$\bullet$ the $\Lambda$-bracket and the product $: \ :$ are related by
\begin{enumerate}[]
\item (Non-commutative Wick formula) 
\[
[a \ _{\Lambda} \ :bc:] = :[a \ _{\Lambda} \ b]c : + (-1)^{(p(a)+n)p(b)}:b[a \ _{\Lambda} \ c]: + \int_{0}^{\Lambda} [[a \ _{\Lambda} \ b] \ _{\Gamma} \ c] \ d\Gamma,
\]
\end{enumerate}
where the integral $\int_{0}^{\Lambda} d\Gamma$ is computed as ${\partial}_{{\eta}^{1}}{\partial}_{{\eta}^{2}} \cdots {\partial}_{{\eta}^{n}}\int_{0}^{\lambda} d\gamma$.
\end{definition}

\vskip 1mm

As an analogue of the universal enveloping vertex algebras, one can construct the universal enveloping $N = n$ SUSY vertex algebra $V(R)$ of an $N = n$ SUSY Lie conformal algebra $R$. More precisely, $V(R)$ is freely generated by $R$ with respect to the $N = n$ SUSY normally ordered product and the $\Lambda$-bracket on $V(R)$ is induced from the $\Lambda$-bracket on $R$ and the $N = n$ SUSY Wick formula (see \cite{HK07}). We also note that, throughout this article, the normally ordered product $:ab:$ is denoted by $ab$ for $a, b \in V$ as in the non-SUSY case (see Remark \ref{rem:NO}) to simplify notations.

\vskip 1mm

\begin{example}[\cites{HK07}]\label{example:SUSY affine VA}
\rm
Let ${\mathfrak{g}}$ be a Lie superalgebra with an even invariant supersymmetric bilinear form $(\, |\, )$. For an even $n$, the $\mathbb{C}[{\nabla}]$-module $R = \mathbb{C}[{\nabla}] \otimes \mathfrak{g} \oplus \mathbb{C}K$ with the $\Lambda$-brackets
\begin{align*}
[a \ _{\Lambda} \ b] = [a, b] + K{\lambda}(a|b)
\end{align*}
for $a, b \in \mathfrak{g}$, is an $N = n$ SUSY Lie conformal algebra.
For an odd $n$, the $\mathbb{C}[{\nabla}]$-module $R = \mathbb{C}[{\nabla}] \otimes \bar{\mathfrak{g}} \oplus \mathbb{C}K$ with the $\Lambda$-brackets
\begin{align*}
[{\bar{a}} \ _{\Lambda} \ {\bar{b}}] = (-1)^{p(a)}\big(\overline{[a, b]} + K(\sum_{i = 1}^{n} {\chi}^{i})(a|b)\big)
\end{align*}
for $\bar{a}, \bar{b} \in \bar{\mathfrak{g}}$, is an $N = n$ SUSY Lie conformal algebra. Here $\bar{\mathfrak{g}}$ denotes ${\mathfrak{g}}$ with reversed parity. The quotient of the universal enveloping $N=n$ SUSY vertex algebra $V_{N=n}^k(\mathfrak{g}):= V(R)/\left< K-k\right>$ is called the {\it $N = n$ SUSY affine vertex algebra of level $k$}.
\end{example}

\vskip 1mm

We remark that every $N = n$ SUSY vertex algebra $V$ is an $N = n'$ SUSY vertex algebra for $n' < n$. In particular, if $V$ is an  $N=1$ SUSY vertex algebra, the $m$-th product 
$a_{(m)}b:= a_{(m|1)}b$ for $a,b\in V$ and $m \in \mathbb{Z}_+$ induces a vertex algebra structure on $V$ together with the normally ordered product and the derivation $\partial=\frac{1}{2}[D,D]$. For instance, the $N = 1$ SUSY affine vertex algebra ${V}^{k}_{N=1}(\mathfrak{g})$ in Example \ref{example:SUSY affine VA} is a vertex algebra strongly generated by $\mathbb{C}[\partial] \otimes (\bar{\mathfrak{g}} \oplus D{\bar{\mathfrak{g}}})$. In addition, the differential algebra homomorphism $V^k(\mathfrak{g}) \to {V}^{k}_{N=1}(\mathfrak{g})$ for the affine vertex algebra $V^k(\mathfrak{g})$ (see Example \ref{example: affine VA}) defined by $a \mapsto D\bar{a}$ is an injective vertex algebra homomorphism. In other words, $V^k(\mathfrak{g})$ is a proper vertex subalgebra of ${V}^{k}_{N=1}(\mathfrak{g})$. If $V$ is an $N=2$ SUSY vertex algebra, the $m$-th product $a_{(m)}b:= a_{(m|11)}b$ for $m\in \mathbb{Z}_+$ induces a vertex algebra structure on $V$.

\vskip 1mm

\begin{example}[\cites{HK07}]\label{example: SUSY bcbetagamma}
\rm
Recall that the $bc$–$\beta\gamma$ system in Example \ref{ex:def of bcbetagamma} is generated by two even fields $\beta$, $\gamma$ and two odd fields $b$, $c$ as a vertex algebra. On the other hand, if we define an odd derivation $D$ on the $bc$–$\beta\gamma$ system by
\begin{align*}
D(b) = {\beta},
{\quad}
D(c) = {\partial}{\gamma},
{\quad}
D({\beta}) = {\partial}{b},
{\quad}
D({\gamma}) = c,
\end{align*}
then the $bc$–$\beta\gamma$ system can be regarded as an $N = 1$ SUSY vertex algebra generated by one even field $\gamma$ and one odd field $b$ with the following $N = 1$ $\Lambda$-bracket relations:
\begin{align*}
[b \ _{\Lambda} \ {\gamma}] = [{\gamma} \ _{\Lambda} \ b] = C.
\end{align*}
\end{example}

\vskip 1mm

\begin{example}[\cites{EHZ12, HK07}] \label{example:SUSY superVirasoro VA}
\rm
As an $N = 1$ SUSY vertex algebra, the $N = 1$ super-Virasoro vertex algebra in Example \ref{example:superVirasoro VA} is generated by an odd vector $G$ with the $N = 1$ $\Lambda$-bracket relation (see Example 5.5 of \cite{HK07})
\begin{align*}
[G \ _\Lambda \ G] = (2{\partial} + 3{\lambda} + {\chi}{D})G + {\frac{c}{3}}{\lambda}^{2}{\chi},
\end{align*}
and the $N = 2$ super-Virasoro vertex algebra is generated by an even vector $J$ as an $N = 2$ SUSY vertex algebra (see Section 3.2 of \cite{EHZ12}), with the $N = 2$ $\Lambda$-bracket relation
\begin{align*}
[J \ _\Lambda \ J] = (2{\partial} + 2{\lambda} + {\chi}^{1}{D}^{1} + {\chi}^{2}{D}^{2})J + {\frac{c}{3}}{\lambda}{\chi}^{1}{\chi}^{2}.
\end{align*}
\end{example}

\vskip 6mm

\section{Various superconformal structures of a vertex algebra}\label{section:sconf str of VA}
\setcounter{equation}{0}

In a vertex algebra $V$, a superconformal vector induces the conformal weight decomposition of $V$. The structure arising from the decomposition is called the superconformal structure of $V$. It is closely related to a SUSY structure in the sense that both of them can be derived from a superconformal vector. In this section, we describe various superconformal structures correlated with the same $N = 1$ SUSY structure of a vertex algebra by performing so-called `shifts' of a given superconformal vector.

Recall that an odd vector $G$ of a vertex algebra $V$ is called a {\it superconformal vector} if it satisfies the super-Virasoro relation \eqref{eq:superVirasoro relation} and $L = {\frac{1}{2}}{G}_{(0)}{G}$ is a conformal vector. In this case, we say that $V$ is an {\it $N = 1$ superconformal vertex algebra}. In the context of $N = 1$ SUSY vertex algebras, as in Section 1.2 of \cite{HK07}, an $N = 1$ superconformal vertex algebra $V$ with a superconformal vector $G \in V$ has an $N = 1$ SUSY vertex algebra structure by taking the odd derivation $D := G_{(0)}$. The following examples are superconformal vectors of $N = 1$ SUSY affine vertex algebras and the $bc$–${\beta}{\gamma}$ system.

\begin{example}[$N = 1$ superconformal structure of $N = 1$ SUSY affine vertex algebra]
\rm
Let ${\mathfrak{g}}$ be a basic Lie superalgebra with a non-degenerate invariant even supersymmetric bilinear form $(\, |\, )$, and let ${V}^{k}_{N=1}(\mathfrak{g})$ be the $N = 1$ SUSY affine vertex algebra of level $k \neq 0$ in Example \ref{example:SUSY affine VA}. Then, from Example 5.9 of \cite{HK07}, the following vector
\begin{align}\label{eq:KT sconf}
{\tau}^{\mathfrak{g}} = {\frac{1}{k}}\Big(\sum_{i \in I}(-1)^{p({a}^{i})}(D{\bar{a}}^{i}){\bar{b}}^{i} + {\frac{1}{3k}}\sum_{i, j, r \in I}([{a}^{i}, {a}^{j}] \, | \, {a}^{r}){\bar{b}}^{i}{\bar{b}}^{j}{\bar{b}}^{r}\Big),
\end{align}
is a superconformal vector of ${V}^{k}_{N=1}(\mathfrak{g})$, where $\{ {a}^{i} \}_{ i {\in} I }$ and $\{ {b}^{i} \}_{ i {\in} I }$ are dual bases of $\mathfrak{g}$ with respect to $(\, |\, )$. The superconformal vector ${\tau}^{\mathfrak{g}}$ is called the \textit{Kac-Todorov superconformal vector} (see also \cites{Kac98, KT85}), and the central charge ${c}^{\mathfrak{g}}$ is given by
\begin{align*}
{c}^{\mathfrak{g}} = \frac{(k - {h}^{\vee}){\textup{sdim}}{\mathfrak{g}}}{k} + \frac{{\textup{sdim}}{\mathfrak{g}}}{2},
\end{align*}
where ${h}^{\vee}$ is the dual Coxeter number of ${\mathfrak{g}}$. Note that the construction of the superconformal vector ${\tau}^{\mathfrak{g}}$ can be thought of as a combination of a supersymmetrization of the Sugawara construction and the standard superconformal vector of the free fermions (see also \cites{DSK06, IR94}). For the $N = 2$ superconformal structures of the $N = 1$ SUSY affine vertex algebra of level $k\neq 0$ associated with the double of a Lie bialgebra, the reader is referred to \cites{HZ10, Parkhomenko92}.
\end{example}

\begin{example}[$N = 1$ superconformal structure of $bc$–${\beta}{\gamma}$ system]\label{N=1 sconf of bcbetagamma}
\rm
For the $bc$–${\beta}{\gamma}$ system in Example \ref{ex:def of bcbetagamma} (see also Example \ref{example: SUSY bcbetagamma}), the odd vector $({\partial}{\gamma}){b} + {c}{\beta}$ is a superconformal vector of the $bc$–${\beta}{\gamma}$ system, where the corresponding conformal vector is $\frac{1}{2}(-{c}{\partial}{b} + ({\partial}{c}){b} + 2({\partial}{\gamma}){\beta})$ with central charge $3$. Moreover, one can check that the superconformal vector $({\partial}{\gamma}){b} + {c}{\beta}$ induces the odd derivation $D$ in Example \ref{example: SUSY bcbetagamma}, that is, $D = (({\partial}{\gamma}){b} + {c}{\beta})_{(0)}$.
\end{example}

\vskip 1mm

In general, for an $N = 1$ superconformal vertex algebra, if there is an odd vector satisfying the conditions in the following proposition, we can shift the superconformal vector to another superconformal vector, giving rise to various conformal weights. At the level of vertex algebra structures, the {\textit{shifting}} of a superconformal vector is a superconformal vector that yields the same $N = 1$ SUSY structures but a different superconformal structure.

\begin{proposition}\label{proposition:shifted sconf}
Let $V$ be an $N = 1$ superconformal vertex algebra with a superconformal vector $G$, where $L$ is the conformal vector with central charge $c$. Suppose that $v \in V$ is an odd vector satisfying the following relations
\begin{align}\label{eq:shift of sconf}
[{L} \ _{\lambda} \ {v}] = ({\partial} + \frac{\lambda}{2}){v},
{\quad}
[{G} \ _{\lambda} \ {v}] = {G}_{(0)}{v} + {c}_{1}{\lambda},
{\quad}
[{v} \ _{\lambda} \ {v}] = {c}_{2},
\end{align}
for some ${c}_{1}, {c}_{2} \in \mathbb{C}$ and $({G}_{(0)}{v})_{(0)}$ is diagonalizable on $V$. Then, the vector $G + {\partial}v$ is also a superconformal vector and the corresponding conformal vector is $L + {\frac{1}{2}}{\partial}{G}_{(0)}{v}$ with central charge $c + 6{c}_{1} - 3{c}_{2}$.
\end{proposition}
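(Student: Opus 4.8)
The plan is to set $\widetilde{G} := G + {\partial}v$ and $\widetilde{L} := L + \frac{1}{2}{\partial}(G_{(0)}v)$, and to verify that $\widetilde{G}$ satisfies the super-Virasoro relation \eqref{eq:superVirasoro relation} together with $\widetilde{L}$ and central charge $\widetilde{c} := c + 6c_1 - 3c_2$, and that $\widetilde{L} = \frac{1}{2}\widetilde{G}_{(0)}\widetilde{G}$ is a conformal vector; by the definition of a superconformal vector recalled above, these two facts are exactly what must be shown.

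I would begin with the self-bracket $[\widetilde{G} \ _{\lambda} \ \widetilde{G}]$, expanding it bilinearly into $[G \ _{\lambda} \ G] + [G \ _{\lambda} \ {\partial}v] + [{\partial}v \ _{\lambda} \ G] + [{\partial}v \ _{\lambda} \ {\partial}v]$. Sesquilinearity pulls the $\partial$'s out and skew-symmetry rewrites $[v \ _{\lambda} \ G]$ as $[G \ _{-{\partial}-{\lambda}} \ v]$, so that (using ${\partial}\ket{0} = 0$) every term reduces to the given data $[G \ _{\lambda} \ G] = 2L + \frac{c}{3}{\lambda}^{2}$, $[G \ _{\lambda} \ v] = G_{(0)}v + c_1{\lambda}$ and $[v \ _{\lambda} \ v] = c_2$. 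Upon collecting terms the two occurrences of $\pm{\lambda}\,G_{(0)}v$ cancel and the scalars add up to $(\frac{c}{3} + 2c_1 - c_2){\lambda}^{2}$, yielding $[\widetilde{G} \ _{\lambda} \ \widetilde{G}] = 2L + {\partial}(G_{(0)}v) + (\frac{c}{3}+2c_1-c_2){\lambda}^{2} = 2\widetilde{L} + \frac{\widetilde{c}}{3}{\lambda}^{2}$. In particular $\widetilde{G}_{(0)}\widetilde{G} = 2\widetilde{L}$, and the remaining two relations $[\widetilde{L} \ _{\lambda} \ \widetilde{G}] = ({\partial}+\frac{3}{2}{\lambda})\widetilde{G}$ and $[\widetilde{L} \ _{\lambda} \ \widetilde{L}] = ({\partial}+2{\lambda})\widetilde{L} + \frac{\widetilde{c}}{12}{\lambda}^{3}$ then follow by the standard argument that reconstructs the full $N = 1$ super-Virasoro relations from the self-bracket of an odd element via the Jacobi identity and skew-symmetry.

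It then remains to see that $\widetilde{L}$ is a conformal vector, i.e. that $\widetilde{L}_{(0)} = {\partial}$ and $\widetilde{L}_{(1)}$ is diagonalizable. Since $\widetilde{L} - L = \frac{1}{2}{\partial}(G_{(0)}v)$ is a total derivative, the identities $({\partial}a)_{(0)} = 0$, $({\partial}a)_{(1)} = -a_{(0)}$ and $L_{(0)} = {\partial}$ give $\widetilde{L}_{(0)} = {\partial}$ and $\widetilde{L}_{(1)} = L_{(1)} - \frac{1}{2}(G_{(0)}v)_{(0)}$. The crucial point is that these two summands commute: a Jacobi-identity computation from \eqref{eq:shift of sconf} and the super-Virasoro relations for $(G,L)$ shows $[L \ _{\lambda} \ G_{(0)}v] = ({\partial}+{\lambda})(G_{(0)}v) + \frac{c_1}{2}{\lambda}^{2}$, so $G_{(0)}v$ has conformal weight $1$; then $L_{(0)}(G_{(0)}v) = {\partial}(G_{(0)}v)$ and $L_{(1)}(G_{(0)}v) = G_{(0)}v$, and the commutator formula gives $[L_{(1)}, (G_{(0)}v)_{(0)}] = ({\partial}(G_{(0)}v))_{(1)} + (G_{(0)}v)_{(0)} = 0$. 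Hence $\widetilde{L}_{(1)}$ is a linear combination of the commuting diagonalizable operators $L_{(1)}$ (diagonalizable because $L$ is a conformal vector) and $(G_{(0)}v)_{(0)}$ (diagonalizable by hypothesis), so $\widetilde{L}_{(1)}$ is diagonalizable; thus $\widetilde{L}$ is a conformal vector, $\widetilde{G}$ is a superconformal vector, and the central charge is $\widetilde{c} = c + 6c_1 - 3c_2$.

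The hardest step is this diagonalizability of $\widetilde{L}_{(1)}$: it is the only place where the hypothesis on $(G_{(0)}v)_{(0)}$ enters, and it requires first extracting from the Jacobi identity that $G_{(0)}v$ has conformal weight $1$ (with a scalar anomaly $\frac{c_1}{2}{\lambda}^{2}$ that happily does not affect the commutation with $L_{(1)}$). Everything else is routine $\lambda$-bracket bookkeeping, the only care needed being the signs and the sesquilinearity relations in the computation of $[\widetilde{G} \ _{\lambda} \ \widetilde{G}]$.
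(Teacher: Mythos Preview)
Your proof is correct and follows essentially the same strategy as the paper --- direct verification of the super-Virasoro relations via $\lambda$-bracket calculus --- but with the order reversed: you verify $[\widetilde G\,{}_\lambda\,\widetilde G]$ first (the cleanest of the three, needing no auxiliary brackets), whereas the paper first establishes the auxiliary identities $[G_{(0)}v\,{}_\lambda\,G]=\lambda v$, $[G_{(0)}v\,{}_\lambda\,v]=0$, $[G_{(0)}v\,{}_\lambda\,G_{(0)}v]=\lambda c_2$, $[L\,{}_\lambda\,G_{(0)}v]=(\partial+\lambda)G_{(0)}v+\tfrac{c_1}{2}\lambda^2$ and then checks $[\widetilde L\,{}_\lambda\,\widetilde L]$. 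Your explicit treatment of the diagonalizability of $\widetilde L_{(1)}$ is a genuine addition: the paper does not spell this out, and your observation that $[L_{(1)},(G_{(0)}v)_{(0)}]=0$ (using exactly the auxiliary bracket $[L\,{}_\lambda\,G_{(0)}v]$ that the paper computes) is the right way to see it.

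One caveat on your ``standard argument'' for the remaining two relations. Jacobi and skew-symmetry applied to $(\widetilde G,\widetilde G,\widetilde G)$ alone give only $[\widetilde L\,{}_\lambda\,\widetilde G]=\bigl(\tfrac{2}{3}\partial+\lambda\bigr)\widetilde L_{(1)}\widetilde G$; they do \emph{not} by themselves force $\widetilde L_{(1)}\widetilde G=\tfrac{3}{2}\widetilde G$. That last equality still needs a short direct check (e.g.\ from $L_{(1)}G=\tfrac{3}{2}G$, $L_{(1)}\partial v=\tfrac{3}{2}\partial v$, and $(G_{(0)}v)_{(0)}G=(G_{(0)}v)_{(0)}\partial v=0$), after which $[\widetilde L\,{}_\lambda\,\widetilde L]$ does follow from Jacobi. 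This is the same level of hand-waving as the paper's ``similarly'', so it is not a real gap --- just be aware that the self-bracket of $\widetilde G$ does not formally determine everything on its own.
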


\begin{proof}
It follows from the $\lambda$-bracket calculations using the properties of Lie conformal algebras. For example, we know that the following $\lambda$-bracket relations hold:
\begin{align*}
[{G}_{(0)}{v} \ _{\lambda} \ {G}] = {\lambda}{v},
{\quad}
[{G}_{(0)}{v} \ _{\lambda} \ {v}] = 0,
\end{align*}
hence, by the Jacobi identity, we obtain:
\begin{align*}
[{G}_{(0)}{v} \ _{\lambda} \ {G}_{(0)}{v}] 
= [{G}_{(0)}{v} \ _{\lambda} \ [{G} \ _{\gamma} \ {v}]] 
= [[{G}_{(0)}{v} \ _{\lambda} \ {G}] \ _{\lambda + \gamma} \ {v}] + [{G} \ _{\gamma} \ [{G}_{(0)}{v} \ _{\lambda} \ {v}]] 
= [{\lambda}{v} \ _{\lambda + \gamma} \ {v}] 
= {\lambda}{c}_{2}.
\end{align*}
On the other hand, from the Jacobi identity, we have:
\begin{align*}
[{L} \ _{\lambda} \ {G}_{(0)}{v}] 
&= [{L} \ _{\lambda} \ [{G} \ _{\gamma} \ {v}]] 
= [[{L} \ _{\lambda} \ {G}] \ _{\lambda + \gamma} \ {v}] + [{G} \ _{\gamma} \ [{L} \ _{\lambda} \ {v}]]
\\
&= [({\partial} + {\frac{3}{2}}{\lambda})G \ _{\lambda + \gamma} \ {v}] + [{G} \ _{\gamma} \ ({\partial} + {\frac{1}{2}}{\lambda}){v}]
\\
&= ({\frac{1}{2}}{\lambda} - {\gamma})({G}_{(0)}{v} + {c}_{1}({\lambda} + {\gamma})) + ({\partial} + {\frac{1}{2}}{\lambda} + {\gamma})({G}_{(0)}{v} + {c}_{1}{\gamma})
\\
&= ({\partial} + {\lambda}){G}_{(0)}{v} + {\frac{{c}_{1}}{2}}{\lambda}^{2}.
\end{align*}
Combining the above $\lambda$-bracket relations, we obtain the following $\lambda$-bracket result:
\begin{align*}
[{L} + {\frac{1}{2}}{\partial}{G}_{(0)}{v} \ _{\lambda} \ {L} + {\frac{1}{2}}{\partial}{G}_{(0)}{v}] 
&= [{L} \ _{\lambda} \ {L}] + {\frac{1}{2}}({\partial} + {\lambda})[{L} \ _{\lambda} \ {G}_{(0)}{v}] - {\frac{1}{2}}{\lambda}[{G}_{(0)}{v} \ _{\lambda} \ {L}] - {\frac{1}{4}}{\lambda}({\partial} + {\lambda})[{G}_{(0)}{v} \ _{\lambda} \ {G}_{(0)}{v}]
\\
&= ({\partial} + 2{\lambda}){L} + {\frac{c}{12}}{\lambda}^{3} + {\frac{1}{2}}({\partial} + {\lambda})({\partial} + {\lambda}){G}_{(0)}{v} + {\frac{{c}_{1}}{2}}{\lambda}^{3} - {\frac{1}{2}}{\lambda}^{2}{G}_{(0)}{v} - {\frac{{c}_{2}}{4}}{\lambda}^{3}
\\
&= ({\partial} + 2{\lambda})({L} + {\frac{1}{2}}{\partial}{G}_{(0)}{v}) + {\frac{1}{12}}({c} + 6{c}_{1} - 3{c}_{2}){\lambda}^{3}.
\end{align*}
Similarly, the other $\lambda$-brackets of the super-Virasoro relation \eqref{eq:superVirasoro relation} can be verified.
\end{proof}

\vskip 1mm

Note that, for a vertex algebra $V$ with a superconformal vector $G$, if we consider the $N = 1$ SUSY vertex algebra structure of $V$ via $D = {G}_{(0)}$, then \eqref{eq:shift of sconf} in Proposition \ref{proposition:shifted sconf} can be expressed as follows:
\begin{align}\label{eq:SUSY ver shift of sconf}
[{G} \ _{\Lambda} \ {v}] = (2{\partial} + {\lambda} +{\chi}{D}){v} + {c}_{1}{\lambda}{\chi},
{\quad}
[{v} \ _{\Lambda} \ {v}] = {c}_{2}{\chi},
\end{align}
in terms of $N = 1$ $\Lambda$-brackets.

\vskip 1mm

\begin{example}[Shifted superconformal vector of $N = 1$ SUSY affine vertex algebra]
\rm
For the Kac-Todorov superconformal vector ${\tau}^{\mathfrak{g}}$ in \eqref{eq:KT sconf} and an element $v$ of the Cartan subalgebra of $\mathfrak{g}$, the following odd vector
\begin{align}\label{eq:shifted KT}
{\tau}^{\mathfrak{g}}_{v} = {\tau}^{\mathfrak{g}} + {\partial}{\bar{v}},
\end{align}
is a superconformal vector of the $N = 1$ SUSY affine vertex algebra of level $k \neq 0$ (see also Section 3.4 of \cite{DSK06} and Example 2.13 of \cite{HZ10}), since the conditions in Proposition \ref{proposition:shifted sconf} hold with ${c}_{1} = 0$ and ${c}_{2} = k(v|v)$.
\end{example}

\vskip 1mm

In the rest of this section, let us introduce the SUSY BRST complex $\mathcal{C}(\bar{\mathfrak{g}}, f, k)$ for a basic Lie superalgebra $\mathfrak{g}$ and an odd element $f$ in $\mathfrak{osp}(1|2)$-subalgebra, and a 1-parameter family of superconformal vectors of $\mathcal{C}(\bar{\mathfrak{g}}, f, k)$. Consider the Dynkin grading $\mathfrak{g} = \bigoplus_{i \in \frac{1}{2}{\mathbb{Z}}}{\mathfrak{g}}(i)$ such that $f \in \mathfrak{g}(-\frac{1}{2})$ and denote the subspaces ${\mathfrak{n}} = \bigoplus_{i > 0}{\mathfrak{g}}(i)$ and ${\mathfrak{n}_{-}} = \bigoplus_{i < 0}{\mathfrak{g}}(i)$ of ${\mathfrak{g}}$, and take bases $\{ {u}_{\alpha} \}_{ \alpha {\in} {I}^{+} }$ of ${\mathfrak{n}}$ and $\{ {u}^{\alpha} \}_{ \alpha {\in} {I}^{+} }$ of ${\mathfrak{n}}_{-}$ satisfying $({u}^{\alpha}|{u}_{\beta}) = {\delta}_{\alpha, \beta}$.
The {\it SUSY charged free fermion vertex algebra} ${F}^{\textup{ch}}_{N=1}(\mathfrak{g}, f)$ is the $N = 1$ SUSY vertex algebra freely generated by ${\phi}_{\alpha}$ and ${\phi}^{\bar{\beta}}$ for ${\alpha}, {\beta} \in {I}^{+}$, where ${\phi}_{\mathfrak{n}} \simeq {\mathfrak{n}}$ and ${\phi}^{{\bar{\mathfrak{n}}}_{-}} \simeq {\bar{\mathfrak{n}}}_{-}$ as vector superspaces and 
\begin{align*}
[{\phi}_{\alpha} \ _{\Lambda} \ {\phi}^{\bar{\beta}}] = ({u}^{\beta}|{u}_{\alpha}),
{\quad}
[{\phi}_{\alpha} \ _{\Lambda} \ {\phi}_{\beta}] = [{\phi}^{\bar{\alpha}} \ _{\Lambda} \ {\phi}^{\bar{\beta}}] = 0,
\end{align*}
for ${\alpha}, {\beta} \in {I}^{+}$. Now we obtain the $N = 1$ SUSY vertex algebra called the SUSY BRST complex:
\begin{align*}
\mathcal{C}(\bar{\mathfrak{g}}, f, k) = {V}^{k}_{N=1}(\mathfrak{g}) \otimes {F}^{\textup{ch}}_{N=1}(\mathfrak{g}, f).
\end{align*}
We note that the SUSY BRST complex $\mathcal{C}(\bar{\mathfrak{g}}, f, k)$ is the main ingredient of the SUSY $W$–algebra associated with $\bar{\mathfrak{g}}$ and $f$ (see Section 4.1 of \cite{MRS21}). In order to obtain the superconformal vectors of $\mathcal{C}(\bar{\mathfrak{g}}, f, k)$, those of ${F}^{\textup{ch}}_{N=1}(\mathfrak{g}, f)$ are needed.

\begin{example}[Superconformal vector of SUSY charged free fermion vertex algebra]
\rm
Let ${I}^{+} = {I}^{+}_{\bar{0}} \sqcup {I}^{+}_{\bar{1}}$ for convenience, where ${I}^{+}_{\bar{0}} = \{ {\alpha} \in {I}^{+} \, | \, p({u}_{\alpha}) = 0 \}$ and ${I}^{+}_{\bar{1}} = \{ {\alpha} \in {I}^{+} \, | \, p({u}_{\alpha}) = 1 \}$.
Note that the SUSY charged free fermion vertex algebra ${F}^{\textup{ch}}_{N=1}({\mathfrak{g}}, f)$ is isomorphic to the ${\textup{dim}}{\mathfrak{n}}$ copies of the $bc$–${\beta}{\gamma}$ system. Hence, by using the superconformal vector in Example \ref{N=1 sconf of bcbetagamma}, we know that the following vector
\begin{align}\label{eq:standard sconf of chFF}
{\tau}^{\textup{ch}} = \sum\limits_{{\alpha} \in {I}^{+}_{\bar{0}}}\big( ({\partial}{\phi}_{\alpha}){\phi}^{\bar{\alpha}} + D{\phi}_{\alpha}D{\phi}^{\bar{\alpha}} \big)
 + \sum\limits_{{\alpha} \in {I}^{+}_{\bar{1}}}\big( {\phi}_{\alpha}{\partial}{\phi}^{\bar{\alpha}} + D{\phi}_{\alpha}D{\phi}^{\bar{\alpha}} \big),
\end{align}
is a superconformal vector of ${F}^{\textup{ch}}_{N=1}({\mathfrak{g}}, f)$ with central charge ${c}^{\textup{ch}} = 3{\textup{dim}}{\mathfrak{n}}$ (see also Example 5.12 of \cite{HK07}).
\end{example}

\vskip 1mm

Moreover, we can shift the superconformal vector ${\tau}^{\textup{ch}}$ of \eqref{eq:standard sconf of chFF} in order to vary the conformal weights of the generators ${\phi}_{\alpha}$ and ${\phi}^{\bar{\alpha}}$ of ${F}^{\textup{ch}}_{N=1}({\mathfrak{g}}, f)$. Note that a 1-parameter deformation of the superconformal vector of the $bc$–${\beta}{\gamma}$ system was introduced in \cite{FGLS92}. The following proposition is an $N = 1$ superfield description of the superconformal vector of the $bc$–${\beta}{\gamma}$ system in \cite{FGLS92}.

\begin{proposition}
Let ${F}^{\textup{ch}}_{N=1}({\mathfrak{g}}, f)$ be a SUSY charged free fermion vertex algebra, and let ${\tau}^{\textup{ch}}$ be the superconformal vector defined in \eqref{eq:standard sconf of chFF}.
Then the following vector of the SUSY charged free fermion vertex algebra
\begin{align}\label{eq:shifted sconf of chFF}
{\tau}^{\textup{ch}}_{\textup{\textbf{m}}} = {\tau}^{\textup{ch}} + {\partial}\Big( \sum\limits_{{\alpha} \in {I}^{+}}{m}_{\alpha}{\phi}_{\alpha}{\phi}^{\bar{\alpha}} \Big),
\end{align}
where ${\textup{\textbf{m}}} = ({m}_{\alpha})_{{\alpha} \in {I}^{+}}$ and ${m}_{\alpha} \in \mathbb{C}$, is also a superconformal vector of ${F}^{\textup{ch}}_{N=1}({\mathfrak{g}}, f)$ with central charge $\sum_{{\alpha} \in {I}^{+}}(6{m}_{\alpha} + 3)$.
\end{proposition}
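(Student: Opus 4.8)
The plan is to derive this from Proposition~\ref{proposition:shifted sconf}, applied to the superconformal vector $G = \tau^{\textup{ch}}$ and the odd vector
$v := \sum_{\alpha \in I^{+}} m_{\alpha}\,\phi_{\alpha}\phi^{\bar{\alpha}}$, so that $G + \partial v$ is precisely the vector $\tau^{\textup{ch}}_{\textup{\textbf{m}}}$ of \eqref{eq:shifted sconf of chFF}. Using the $N=1$ $\Lambda$-bracket reformulation \eqref{eq:SUSY ver shift of sconf} of the hypotheses (with $D = \tau^{\textup{ch}}_{(0)}$), it suffices to show that $v$ is odd, that $(Dv)_{(0)}$ is diagonalizable, and that there are scalars $c_{1}, c_{2}$ with $[\tau^{\textup{ch}} \ _{\Lambda} \ v] = (2\partial + \lambda + \chi D)v + c_{1}\lambda\chi$ and $[v \ _{\Lambda} \ v] = c_{2}\chi$; Proposition~\ref{proposition:shifted sconf} will then give central charge $c^{\textup{ch}} + 6c_{1} - 3c_{2}$.

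Since $[\phi_{\alpha} \ _{\Lambda} \ \phi_{\beta}] = [\phi^{\bar{\alpha}} \ _{\Lambda} \ \phi^{\bar{\beta}}] = 0$ and $[\phi_{\alpha} \ _{\Lambda} \ \phi^{\bar{\beta}}] = (u^{\beta}|u_{\alpha}) = \delta_{\alpha,\beta}$, the generators carrying distinct indices mutually $\Lambda$-commute; this is just the statement that ${F}^{\textup{ch}}_{N=1}(\mathfrak{g},f)$ is the tensor product of $\dim\mathfrak{n}$ copies of the $bc$–$\beta\gamma$ system, and both $\tau^{\textup{ch}} = \sum_{\alpha}\tau^{\textup{ch}}_{\alpha}$ and $v = \sum_{\alpha}v_{\alpha}$, $v_{\alpha} := m_{\alpha}\phi_{\alpha}\phi^{\bar{\alpha}}$, split along this factorization with $\tau^{\textup{ch}}_{\alpha}$ the superconformal vector of Example~\ref{N=1 sconf of bcbetagamma} on the $\alpha$-th factor. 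I would therefore reduce to a single $bc$–$\beta\gamma$ system and check the hypotheses for the pair $(\tau^{\textup{ch}}_{\alpha}, v_{\alpha})$, then sum central charges. Here $v_{\alpha}$ is odd since $\phi_{\alpha}$ and $\phi^{\bar{\alpha}}$ have opposite parity (recall $\bar{\mathfrak{n}}_{-}$ carries the reversed grading), and $(Dv_{\alpha})_{(0)}$ is a sum of two commuting ``current zero modes'' of the $\alpha$-th factor (a $\phi_{\alpha}\phi^{\bar{\alpha}}$-type and a $D\phi_{\alpha}D\phi^{\bar{\alpha}}$-type charge operator), hence diagonal on the monomial basis of the freely generated $bc$–$\beta\gamma$ system; both points are immediate.

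The computational heart is the two $\Lambda$-brackets. Writing $x = \phi_{\alpha}$, $y = \phi^{\bar{\alpha}}$ (so $[x \ _{\Lambda} \ y] = 1$, $[x \ _{\Lambda} \ x] = [y \ _{\Lambda} \ y] = 0$), skew-symmetry and the non-commutative Wick formula give $[xy \ _{\Lambda} \ x] = x$ and $[xy \ _{\Lambda} \ y] = -y$, whence
\[
[v_{\alpha} \ _{\Lambda} \ v_{\alpha}] = m_{\alpha}^{2}\,[xy \ _{\Lambda} \ xy] = m_{\alpha}^{2}\Big( xy - xy + \int_{0}^{\Lambda}[x \ _{\Gamma} \ y]\,d\Gamma \Big) = 0,
\]
using $\int_{0}^{\Lambda} 1\, d\Gamma = \partial_{\eta}\!\int_{0}^{\lambda} 1\, d\gamma = 0$; thus $c_{2} = 0$. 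For the remaining identity one expands $[\tau^{\textup{ch}}_{\alpha} \ _{\Lambda} \ v_{\alpha}]$ from $\tau^{\textup{ch}}_{\alpha} = (\partial\phi_{\alpha})\phi^{\bar{\alpha}} + (D\phi_{\alpha})(D\phi^{\bar{\alpha}})$ (and its analogue for $\alpha \in I^{+}_{\bar{1}}$) via sesquilinearity and the Wick formula; the outcome should be $[\tau^{\textup{ch}}_{\alpha} \ _{\Lambda} \ v_{\alpha}] = (2\partial + \lambda + \chi D)v_{\alpha} + m_{\alpha}\lambda\chi$, i.e.\ $c_{1} = m_{\alpha}$ — which is the $N=1$ superfield form of the one-parameter deformation of the $bc$–$\beta\gamma$ stress tensor from \cite{FGLS92}, and is in any case forced by consistency with the asserted central charge. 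With $(c_{1},c_{2}) = (m_{\alpha},0)$, Proposition~\ref{proposition:shifted sconf} yields a superconformal vector $\tau^{\textup{ch}}_{\alpha} + \partial v_{\alpha}$ of central charge $3 + 6m_{\alpha}$ on the $\alpha$-th factor, and summing over $\alpha \in I^{+}$ gives that $\tau^{\textup{ch}}_{\textup{\textbf{m}}}$ is a superconformal vector of ${F}^{\textup{ch}}_{N=1}(\mathfrak{g},f)$ with central charge $\sum_{\alpha \in I^{+}}(6m_{\alpha}+3)$.

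The main obstacle is the bracket $[\tau^{\textup{ch}}_{\alpha} \ _{\Lambda} \ \phi_{\alpha}\phi^{\bar{\alpha}}]$, and especially isolating the coefficient $m_{\alpha}$ of $\lambda\chi$, since that is what governs the central-charge shift. This forces careful tracking of the parity-dependent signs in the Wick formula and sesquilinearity, and the cases $\alpha \in I^{+}_{\bar{0}}$ and $\alpha \in I^{+}_{\bar{1}}$ must be handled separately (in the latter $\phi_{\alpha}$ is odd and $\phi^{\bar{\alpha}}$ even, swapping the roles of the two generators in the computations above). Everything else is routine bookkeeping with the axioms of Section~\ref{section3} together with Proposition~\ref{proposition:shifted sconf}.
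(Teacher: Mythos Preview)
Your proposal is correct and follows essentially the same approach as the paper: both apply Proposition~\ref{proposition:shifted sconf} via its $N=1$ SUSY reformulation \eqref{eq:SUSY ver shift of sconf} to $G = \tau^{\textup{ch}}$ and $v = \sum_{\alpha} m_{\alpha}\phi_{\alpha}\phi^{\bar\alpha}$, reducing to the computation of $c_1$ and $c_2$. The paper carries out in detail precisely the step you flag as the ``main obstacle'' --- the Wick-formula expansion of $[\tau^{\textup{ch}}_{\alpha} \ _{\Lambda} \ \phi_{\alpha}\phi^{\bar\alpha}]$ yielding the $\lambda\chi$ term --- while only stating $c_2 = 0$; conversely, you make the $[v_\alpha \ _{\Lambda} \ v_\alpha] = 0$ computation explicit.
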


\begin{proof}
For example, for ${\alpha} \in {I}^{+}_{\bar{0}}$, we have
\begin{align*}
[{\phi}_{\alpha} \ _\Lambda \ ({\partial}{\phi}_{\alpha}){\phi}^{\bar{\alpha}}] 
= [{\phi}_{\alpha} \ _\Lambda \ {\partial}{\phi}_{\alpha}]{\phi}^{\bar{\alpha}} + {\partial}{\phi}_{\alpha}[{\phi}_{\alpha} \ _\Lambda \ {\phi}^{\bar{\alpha}}] + \int_{0}^{\Lambda} [[{\phi}_{\alpha} \ _\Lambda \ {\partial}{\phi}_{\alpha}] \ _{\Gamma} \ {\phi}^{\bar{\alpha}}] \ d\Gamma
= {\partial}{\phi}_{\alpha},
\end{align*}
since, from the sesquilinearity, we know that:
\begin{align*}
[{\phi}_{\alpha} \ _\Lambda \ {\partial}{\phi}_{\alpha}] = ({\partial} + {\lambda})[{\phi}_{\alpha} \ _\Lambda \ {\phi}_{\alpha}] = 0.
\end{align*}
Hence, by the skew-symmetry, we obtain the following relation:
\begin{align*}
[({\partial}{\phi}_{\alpha}){\phi}^{\bar{\alpha}} \ _\Lambda \ {\phi}_{\alpha}] 
= [{\phi}_{\alpha} \ _{- {\nabla} - {\Lambda}} \ ({\partial}{\phi}_{\alpha}){\phi}^{\bar{\alpha}}] 
= {\partial}{\phi}_{\alpha}.
\end{align*}
Similarly, the following $\Lambda$-bracket relations:
\begin{align*}
[({\partial}{\phi}_{\alpha}){\phi}^{\bar{\alpha}} \ _\Lambda \ {\phi}^{\bar{\alpha}}] 
= ({\partial} + {\lambda}){\phi}^{\bar{\alpha}},
{\quad}
[{D}{\phi}_{\alpha}{D}{\phi}^{\bar{\alpha}} \ _\Lambda \ {\phi}_{\alpha}] 
= ({\partial} + {\chi}{D}){\phi}_{\alpha},
{\quad}
[{D}{\phi}_{\alpha}{D}{\phi}^{\bar{\alpha}} \ _\Lambda \ {\phi}^{\bar{\alpha}}] 
= ({\partial} + {\chi}{D}){\phi}^{\bar{\alpha}},
\end{align*}
can be derived using the properties of $N = 1$ SUSY vertex algebras. Therefore, from the non-commutative Wick formula, we have:
\begin{align*}
[({\partial}{\phi}_{\alpha}){\phi}^{\bar{\alpha}} \ _\Lambda \ {\phi}_{\alpha}{\phi}^{\bar{\alpha}}] 
&= [({\partial}{\phi}_{\alpha}){\phi}^{\bar{\alpha}} \ _\Lambda \ {\phi}_{\alpha}]{\phi}^{\bar{\alpha}} + {\phi}_{\alpha}[({\partial}{\phi}_{\alpha}){\phi}^{\bar{\alpha}} \ _\Lambda \ {\phi}^{\bar{\alpha}}] + \int_{0}^{\Lambda}[[({\partial}{\phi}_{\alpha}){\phi}^{\bar{\alpha}} \ _\Lambda \ {\phi}_{\alpha}] \ _{\Gamma} \ {\phi}^{\bar{\alpha}}] \ d\Gamma
\\
&= ({\partial}{\phi}_{\alpha}){\phi}^{\bar{\alpha}} + {\phi}_{\alpha}({\partial} + {\lambda}){\phi}^{\bar{\alpha}} + \int_{0}^{\Lambda}[{\partial}{\phi}_{\alpha} \ _{\Gamma} \ {\phi}^{\bar{\alpha}}] \ d\Gamma 
= ({\partial} + {\lambda})({\phi}_{\alpha}{\phi}^{\bar{\alpha}}),
\\
[{D}{\phi}_{\alpha}{D}{\phi}^{\bar{\alpha}} \ _\Lambda \ {\phi}_{\alpha}{\phi}^{\bar{\alpha}}] 
&= [{D}{\phi}_{\alpha}{D}{\phi}^{\bar{\alpha}} \ _\Lambda \ {\phi}_{\alpha}]{\phi}^{\bar{\alpha}} + {\phi}_{\alpha}[{D}{\phi}_{\alpha}{D}{\phi}^{\bar{\alpha}} \ _\Lambda \ {\phi}^{\bar{\alpha}}] + \int_{0}^{\Lambda}[[{D}{\phi}_{\alpha}{D}{\phi}^{\bar{\alpha}} \ _\Lambda \ {\phi}_{\alpha}] \ _{\Gamma} \ {\phi}^{\bar{\alpha}}] \ d\Gamma
\\
&= \big( ({\partial} + {\chi}{D}){\phi}_{\alpha} \big){\phi}^{\bar{\alpha}} + {\phi}_{\alpha}({\partial} + {\chi}{D}){\phi}^{\bar{\alpha}} + \int_{0}^{\Lambda}[({\partial} + {\chi}{D}){\phi}_{\alpha} \ _{\Gamma} \ {\phi}^{\bar{\alpha}}] \ d\Gamma
\\
&= ({\partial} + {\chi}{D})({\phi}_{\alpha}{\phi}^{\bar{\alpha}}) + {\lambda}{\chi},
\end{align*}
so that we can check the vectors ${\tau}^{\textup{ch}}$ and $(\sum\limits_{{\alpha} \in {I}^{+}}{m}_{\alpha}{\phi}_{\alpha}{\phi}^{\bar{\alpha}})$ satisfy \eqref{eq:SUSY ver shift of sconf} with ${c}_{1} = \sum_{{\alpha} \in {I}^{+}}{m}_{\alpha}$ and ${c}_{2} = 0$.
\end{proof}

\vskip 1mm

Note that there is a unique element $h \in \mathfrak{g}$ in the Cartan subalgebra which induces the Dynkin grading through the eigenvalues for the adjoint action of $\frac{h}{2}$.
Consider the vectors ${\tau}^{\mathfrak{g}}_{h}$ in \eqref{eq:shifted KT} and ${\tau}^{\textup{ch}}_{\textup{\textbf{m}}}$ in \eqref{eq:shifted sconf of chFF}, and define the vector
\begin{align}\label{eq:sconf of C(g,f,k)}
{\tau}^{\mathcal{C}}_{h, {\textup{\textbf{m}}}} := {\tau}^{\mathfrak{g}}_{h} + {\tau}^{\textup{ch}}_{\textup{\textbf{m}}} \in {\mathcal{C}}({\bar{\mathfrak{g}}}, f, k),
\end{align}
where ${\textup{\textbf{m}}} = ({m}_{\alpha})_{{\alpha} \in {I}^{+}}$ for ${m}_{\alpha} \in \mathbb{C}$. Then we have shown the following result:

\begin{theorem}\label{theorem:sconf of tensor of aff and fermion}
The vector ${\tau}^{\mathcal{C}}_{h, {\textup{\textbf{m}}}}$ in \eqref{eq:sconf of C(g,f,k)} is a superconformal vector of the $N = 1$ SUSY vertex algebra ${\mathcal{C}}({\bar{\mathfrak{g}}}, f, k)$ where the central charge is given by
\begin{align*}
{c}^{\mathcal{C}}_{h, {\textup{\textbf{m}}}} = \frac{(3k - 2{h}^{\vee}){\textup{sdim}}{\mathfrak{g}}}{2k} - 3k(h|h) + \sum_{{\alpha} \in {I}^{+}}(6{m}_{\alpha} + 3).
\end{align*}
\end{theorem}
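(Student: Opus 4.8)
The plan is to reduce the statement to the two shift propositions already established, together with the additivity of superconformal vectors under tensor products. First I would recall that for $N=1$ SUSY vertex algebras $A$ and $B$, if $\sigma_A \in A$ and $\sigma_B \in B$ are superconformal vectors with central charges $c_A$ and $c_B$, then $\sigma_A \otimes |0\rangle + |0\rangle \otimes \sigma_B$ is a superconformal vector of $A \otimes B$ with central charge $c_A + c_B$; this is the $N=1$ SUSY analogue of the classical fact for conformal vectors, and it follows immediately from the fact that the $\Lambda$-bracket between the two tensor factors vanishes, so the super-Virasoro relation \eqref{eq:superVirasoro relation} for the sum splits as the sum of the two relations. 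Applying this to $A = {V}^{k}_{N=1}(\mathfrak{g})$, $B = {F}^{\textup{ch}}_{N=1}(\mathfrak{g}, f)$, $\sigma_A = {\tau}^{\mathfrak{g}}_{h}$, and $\sigma_B = {\tau}^{\textup{ch}}_{\textup{\textbf{m}}}$ gives that ${\tau}^{\mathcal{C}}_{h,\textup{\textbf{m}}}$ is a superconformal vector of $\mathcal{C}(\bar{\mathfrak{g}}, f, k)$.

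Next I would assemble the central charge. The shifted Kac–Todorov vector ${\tau}^{\mathfrak{g}}_{h} = {\tau}^{\mathfrak{g}} + {\partial}{\bar{h}}$ has, by Proposition \ref{proposition:shifted sconf} applied with $v = \bar{h}$, $c_1 = 0$, $c_2 = k(h|h)$, central charge
\[
{c}^{\mathfrak{g}} + 6 c_1 - 3 c_2 = \frac{(k - {h}^{\vee}){\textup{sdim}}{\mathfrak{g}}}{k} + \frac{{\textup{sdim}}{\mathfrak{g}}}{2} - 3k(h|h).
\]
The shifted fermionic vector ${\tau}^{\textup{ch}}_{\textup{\textbf{m}}}$ has, by the preceding proposition, central charge $\sum_{{\alpha} \in {I}^{+}}(6 m_{\alpha} + 3)$. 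Adding these and simplifying
\[
\frac{(k - {h}^{\vee}){\textup{sdim}}{\mathfrak{g}}}{k} + \frac{{\textup{sdim}}{\mathfrak{g}}}{2} = \frac{2(k - h^{\vee}){\textup{sdim}}{\mathfrak{g}} + k\,{\textup{sdim}}{\mathfrak{g}}}{2k} = \frac{(3k - 2h^{\vee}){\textup{sdim}}{\mathfrak{g}}}{2k}
\]
yields exactly ${c}^{\mathcal{C}}_{h,\textup{\textbf{m}}}$ as stated.

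The only genuine content beyond bookkeeping is the tensor-product additivity of superconformal vectors, so the step I expect to require the most care is verifying that $\sigma_A \otimes |0\rangle + |0\rangle \otimes \sigma_B$ indeed satisfies the full super-Virasoro relation \eqref{eq:superVirasoro relation} in $A \otimes B$ — in particular that the corresponding conformal vector $\tfrac12 (\sigma_A \otimes |0\rangle + |0\rangle \otimes \sigma_B)_{(0)}(\sigma_A \otimes |0\rangle + |0\rangle \otimes \sigma_B)$ equals $L_A \otimes |0\rangle + |0\rangle \otimes L_B$ and that its $L_{(1)}$-action is diagonalizable. Diagonalizability is clear since the Hamiltonian of the tensor product is the sum of the two commuting Hamiltonians, and the cross terms in all the relevant $\Lambda$-brackets vanish because ${V}^{k}_{N=1}(\mathfrak{g})$ and ${F}^{\textup{ch}}_{N=1}(\mathfrak{g}, f)$ are mutually local with trivial $\Lambda$-bracket inside $\mathcal{C}(\bar{\mathfrak{g}}, f, k)$. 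Once this is in hand the theorem follows by combining it with the two shift results, with no further computation needed.
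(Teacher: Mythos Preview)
Your proposal is correct and follows exactly the approach the paper takes: the paper presents the theorem with the phrase ``Then we have shown the following result'' immediately after establishing that ${\tau}^{\mathfrak{g}}_{h}$ and ${\tau}^{\textup{ch}}_{\textup{\textbf{m}}}$ are separately superconformal vectors, leaving the tensor-product additivity and the central-charge arithmetic implicit. Your write-up simply makes those two implicit steps explicit, and your computation of the central charge matches.
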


\vskip 1mm

Note that, with respect to the superconformal vector ${\tau}^{\mathcal{C}}_{h, {\textup{\textbf{m}}}}$ in Theorem \ref{theorem:sconf of tensor of aff and fermion}, the vectors ${\phi}_{\alpha}$ and ${\phi}^{\bar{\alpha}}$ for ${\alpha} \in {I}^{+}_{\bar{0}}$ (resp. ${\alpha} \in {I}^{+}_{\bar{1}}$) are primary of conformal weights $-{\frac{1}{2}}{m}_{\alpha}$ and ${\frac{1}{2}}{m}_{\alpha} + {\frac{1}{2}}$ (resp. ${\frac{1}{2}}{m}_{\alpha} + {\frac{1}{2}}$ and $-{\frac{1}{2}}{m}_{\alpha}$). For $a \in {\mathfrak{g}}(i)$, the vector $\bar{a}$ is primary of conformal weight ${\frac{1}{2}} - i$, while the vector $a$ has conformal weight $1 - i$, and $a$ is primary unless $(h|a) \neq 0$, that is:
\begin{align*}
[{\omega}^{\mathcal{C}}_{h, {\textup{\textbf{m}}}} \ _{\lambda} \ {a}] = ({\partial} + (1 - i){\lambda}){a} - {\frac{k}{2}}(h|a){\lambda}^{2},
\end{align*}
where ${\omega}^{\mathcal{C}}_{h, {\textup{\textbf{m}}}}$ is the corresponding conformal vector of ${\tau}^{\mathcal{C}}_{h, {\textup{\textbf{m}}}}$.

\begin{corollary}
The $\Lambda$-brackets between the generators of the $N = 1$ SUSY vertex algebra ${\mathcal{C}}({\bar{\mathfrak{g}}}, {f}, k)$ and the superconformal vector ${\tau}^{\mathcal{C}}_{h, {\textup{\textbf{m}}}}$ are given by the following relations{\textup{:}}
\begin{gather*}
[{\tau}^{\mathcal{C}}_{h, {\textup{\textbf{m}}}} \ _{\Lambda} \ {\bar{a}}] = (2{\partial} + (1-2i){\lambda} + {\chi}{D}){\bar{a}} - k(h|a){\lambda}{\chi},
\\
[{\tau}^{\mathcal{C}}_{h, {\textup{\textbf{m}}}} \ _{\Lambda} \ {\phi}_{\alpha}] = \big( 2{\partial} + {\frac{1}{2}}\big((-1)^{p({\alpha})+1}(2{m}_{\alpha} + 1) + 1\big){\lambda} + {\chi}{D} \big){\phi}_{\alpha},
\\
[{\tau}^{\mathcal{C}}_{h, {\textup{\textbf{m}}}} \ _{\Lambda} \ {\phi}^{\bar{\alpha}}] = \big( 2{\partial} + {\frac{1}{2}}\big((-1)^{p({\alpha})}(2{m}_{\alpha} + 1) + 1\big){\lambda} + {\chi}{D} \big){\phi}^{\bar{\alpha}},
\end{gather*}
for $a \in {\mathfrak{g}}(i)$, ${\alpha} \in {I}^{+}$.
\end{corollary}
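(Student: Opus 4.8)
The plan is to reduce every $\Lambda$-bracket in the statement to brackets that have already been (essentially) computed, using only the tensor decomposition of $\mathcal{C}(\bar{\mathfrak{g}}, f, k)$, sesquilinearity, and the non-commutative Wick formula. By Theorem~\ref{theorem:sconf of tensor of aff and fermion} the vector ${\tau}^{\mathcal{C}}_{h, \textup{\textbf{m}}}$ is superconformal, so $D = ({\tau}^{\mathcal{C}}_{h, \textup{\textbf{m}}})_{(0)}$ is the odd derivation defining the $N = 1$ SUSY structure, and this is the operator $D$ occurring in the claimed formulas. Since ${\tau}^{\mathcal{C}}_{h, \textup{\textbf{m}}} = {\tau}^{\mathfrak{g}}_{h} + {\tau}^{\textup{ch}}_{\textup{\textbf{m}}}$ with ${\tau}^{\mathfrak{g}}_{h} \in {V}^{k}_{N=1}(\mathfrak{g})$ and ${\tau}^{\textup{ch}}_{\textup{\textbf{m}}} \in {F}^{\textup{ch}}_{N=1}(\mathfrak{g}, f)$, while $\bar{a} \in {V}^{k}_{N=1}(\mathfrak{g})$ and ${\phi}_{\alpha}, {\phi}^{\bar{\alpha}} \in {F}^{\textup{ch}}_{N=1}(\mathfrak{g}, f)$, the cross $\Lambda$-brackets between the two tensor factors vanish; so the first step is to replace $[{\tau}^{\mathcal{C}}_{h, \textup{\textbf{m}}} \ _{\Lambda} \ \bar{a}]$ by $[{\tau}^{\mathfrak{g}}_{h} \ _{\Lambda} \ \bar{a}]$, and $[{\tau}^{\mathcal{C}}_{h, \textup{\textbf{m}}} \ _{\Lambda} \ {\phi}_{\alpha}]$, $[{\tau}^{\mathcal{C}}_{h, \textup{\textbf{m}}} \ _{\Lambda} \ {\phi}^{\bar{\alpha}}]$ by $[{\tau}^{\textup{ch}}_{\textup{\textbf{m}}} \ _{\Lambda} \ {\phi}_{\alpha}]$, $[{\tau}^{\textup{ch}}_{\textup{\textbf{m}}} \ _{\Lambda} \ {\phi}^{\bar{\alpha}}]$.

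For the affine generator I would further write ${\tau}^{\mathfrak{g}}_{h} = {\tau}^{\mathfrak{g}} + {\partial}{\bar{h}}$. With respect to the Kac--Todorov vector ${\tau}^{\mathfrak{g}}$, the vector $\bar{a}$ is an $N = 1$ superconformal primary of conformal weight $\frac{1}{2}$, which can be read off from \eqref{eq:KT sconf} via the $N = 1$ Wick calculus (cf.\ Example~5.9 of \cite{HK07}); hence $[{\tau}^{\mathfrak{g}} \ _{\Lambda} \ \bar{a}] = (2{\partial} + {\lambda} + {\chi}{D}){\bar{a}}$. By sesquilinearity $[{\partial}{\bar{h}} \ _{\Lambda} \ \bar{a}] = -{\lambda}[{\bar{h}} \ _{\Lambda} \ \bar{a}]$, and the affine $\Lambda$-bracket of Example~\ref{example:SUSY affine VA} gives $[{\bar{h}} \ _{\Lambda} \ \bar{a}] = \overline{[h,a]} + k(h|a){\chi} = 2i{\bar{a}} + k(h|a){\chi}$, because $\textup{ad}(h/2)$ induces the Dynkin grading, so that $[h, a] = 2ia$ for $a \in {\mathfrak{g}}(i)$. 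Adding the two contributions gives precisely the stated relation for $[{\tau}^{\mathcal{C}}_{h, \textup{\textbf{m}}} \ _{\Lambda} \ \bar{a}]$: the coefficient of ${\lambda}{\bar{a}}$ becomes $1 - 2i$ and the remaining term is $-k(h|a){\lambda}{\chi}$. Equivalently, one may invoke the general identity $[{\tau} \ _{\Lambda} \ v] = 2[{\omega} \ _{\lambda} \ v] + {\chi}[{\tau} \ _{\lambda} \ v]$, valid for any superconformal vector ${\tau}$ with associated conformal vector ${\omega} = \frac{1}{2}{\tau}_{(0)}{\tau}$ (a consequence of sesquilinearity, via $(D^{1}a)_{(m|1)}b = a_{(m|0)}b$), applied to ${\tau} = {\tau}^{\mathcal{C}}_{h, \textup{\textbf{m}}}$, $v = \bar{a}$, feeding in the conformal weight data for $\bar{a}$ and for $a = D\bar{a}$ recorded immediately before the statement.

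For the fermions I would write ${\tau}^{\textup{ch}}_{\textup{\textbf{m}}} = {\tau}^{\textup{ch}} + {\partial}\big( \sum_{\beta \in {I}^{+}}{m}_{\beta}{\phi}_{\beta}{\phi}^{\bar{\beta}} \big)$. Since fermion pairs $({\phi}_{\beta}, {\phi}^{\bar{\beta}})$ with distinct $\beta$ have vanishing $\Lambda$-brackets with one another, only the $\beta = \alpha$ summand of each sum contributes to $[{\tau}^{\textup{ch}}_{\textup{\textbf{m}}} \ _{\Lambda} \ {\phi}_{\alpha}]$ and $[{\tau}^{\textup{ch}}_{\textup{\textbf{m}}} \ _{\Lambda} \ {\phi}^{\bar{\alpha}}]$. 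The building blocks $[({\partial}{\phi}_{\alpha}){\phi}^{\bar{\alpha}} \ _{\Lambda} \ {\phi}_{\alpha}]$, $[{D}{\phi}_{\alpha}{D}{\phi}^{\bar{\alpha}} \ _{\Lambda} \ {\phi}_{\alpha}]$ and their ${\phi}^{\bar{\alpha}}$-counterparts (for ${\alpha} \in {I}^{+}_{\bar{0}}$), together with the analogues in which ${\phi}_{\alpha}{\partial}{\phi}^{\bar{\alpha}}$ replaces $({\partial}{\phi}_{\alpha}){\phi}^{\bar{\alpha}}$ (for ${\alpha} \in {I}^{+}_{\bar{1}}$), are exactly the ones appearing in the proof of the preceding proposition, or are obtained in the same way from the Wick formula and skew-symmetry. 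The remaining input is $[{\phi}_{\alpha}{\phi}^{\bar{\alpha}} \ _{\Lambda} \ {\phi}_{\alpha}] = (-1)^{p({\alpha})}{\phi}_{\alpha}$ and $[{\phi}_{\alpha}{\phi}^{\bar{\alpha}} \ _{\Lambda} \ {\phi}^{\bar{\alpha}}] = (-1)^{p({\alpha})+1}{\phi}^{\bar{\alpha}}$, again by skew-symmetry; sesquilinearity then converts the ${\partial}\big( \sum {m}_{\beta}{\phi}_{\beta}{\phi}^{\bar{\beta}} \big)$-term into $-(-1)^{p({\alpha})}{m}_{\alpha}{\lambda}$ times ${\phi}_{\alpha}$ (resp.\ $-(-1)^{p({\alpha})+1}{m}_{\alpha}{\lambda}$ times ${\phi}^{\bar{\alpha}}$). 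Collecting all the terms, and treating ${\alpha} \in {I}^{+}_{\bar{0}}$ and ${\alpha} \in {I}^{+}_{\bar{1}}$ separately, yields the coefficients $\frac{1}{2}\big( (-1)^{p({\alpha})+1}(2{m}_{\alpha} + 1) + 1 \big)$ and $\frac{1}{2}\big( (-1)^{p({\alpha})}(2{m}_{\alpha} + 1) + 1 \big)$ of ${\lambda}$ in the statement.

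There is no conceptual obstacle: each step is a finite Lie conformal or SUSY Wick computation, and all the nontrivial ingredients already appear above. The part that needs genuine care is the systematic case distinction ${\alpha} \in {I}^{+}_{\bar{0}}$ versus ${\alpha} \in {I}^{+}_{\bar{1}}$, which is the origin of the $(-1)^{p({\alpha})}$ in the final formulas; it enters both through the parity of ${\phi}_{\alpha}$ in the skew-symmetry signs and through which of ${\phi}_{\alpha}{\partial}{\phi}^{\bar{\alpha}}$, $({\partial}{\phi}_{\alpha}){\phi}^{\bar{\alpha}}$ occurs in ${\tau}^{\textup{ch}}$. One should also make sure that $\bar{a}$ really is an $N = 1$ superconformal primary of weight $\frac{1}{2}$ for ${\tau}^{\mathfrak{g}}$, i.e.\ that $({\tau}^{\mathfrak{g}})_{(1)}\bar{a} = 0$, so that the only ${\lambda}{\chi}$-term surviving in $[{\tau}^{\mathcal{C}}_{h, \textup{\textbf{m}}} \ _{\Lambda} \ \bar{a}]$ is the one produced by the shift ${\partial}{\bar{h}}$.
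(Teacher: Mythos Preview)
Your proposal is correct and is precisely the ``direct computation'' the paper invokes: you split ${\tau}^{\mathcal{C}}_{h,\textup{\textbf{m}}}$ along the tensor factors, use the Kac--Todorov primary property of $\bar{a}$ together with the shift $\partial\bar{h}$, and for the fermions combine the building-block $\Lambda$-brackets already displayed in the proof of the preceding proposition with the shift $\partial(\sum m_\beta\phi_\beta\phi^{\bar\beta})$. The paper's own proof is simply ``It follows from direct computations,'' so your plan fills in exactly the omitted details with the same ingredients.
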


\begin{proof}
It follows from direct computations.
\end{proof}

\vskip 6mm

\section{Supersymmetric extension theorem}\label{section5}
\setcounter{equation}{0}

Recall that an $N = 1$ (resp. $N = 2$) SUSY vertex algebra $V$ is a vertex algebra with the $m$-th product for $m\in \mathbb{Z}_+$ defined by
$a_{(m)}b:= a_{(m|1)}b \quad (\, \text{resp.}\ a_{(m)}b:= a_{(m|11)}b \,)$. 
Conversely, in this section, we investigate conditions to embed a given vertex algebra into an $N = 1$ SUSY vertex algebra. 
In Theorem \ref{theorem:N=1 extension}, we also present the construction of an $N = 1$ SUSY extension of $V(R)$ for any Lie conformal algebra $R$, by considering some proper ghost vectors which consist of the SUSY counterpart of vectors in the given Lie conformal algebra. 

\vskip 1mm

In order to impose an $N = 1$ SUSY structure on a vertex algebra $V$, we need an odd derivation $D$ on $V$ which satisfies $[D, D] = 2{\partial}$ and
\begin{align}\label{eq5.1}
D[a \ _{\lambda} \ b] = [Da \ _{\lambda} \ b] + (-1)^{p(a)}[a \ _{\lambda} \ Db],
\end{align}
where $[D,\lambda]=0$ (see Section 5.9 of \cite{Kac98}). In other words, the operator $D$ should also be an odd derivation with respect to the $\lambda$-bracket.
Indeed, by assuming the above properties, the ${\Lambda}$-bracket 
\begin{align}\label{eq5.2}
[a \ _{\Lambda} \ b] := [Da \ _{\lambda} \ b] + {\chi}[a \ _{\lambda} \ b],
\end{align}
defined on $V$ satisfies all the axioms for $N = 1$ SUSY vertex algebras. The sesquilinearity, skew-symmetry and Jacobi identity of $N = 1$ SUSY Lie conformal algebras follow from those properties of non-SUSY Lie conformal algebras, \eqref{eq5.1} and the property $[D, D] = 2{\partial}$. The quasi-commutativity, quasi-associativity and non-commutative Wick formula of $N = 1$ SUSY vertex algebras also follow from those properties of vertex algebras. For example, the non-commutative Wick formula of $N = 1$ SUSY vertex algebras holds since we can check the following computations:
\begin{align*}
[a \ _{\Lambda} \ bc] &= [Da \ _{\lambda} \ bc] + {\chi}[a \ _{\lambda} \ bc]
\\ 
&  = \big( \, [Da \ _{\lambda} \ b]c + (-1)^{(p(a)+1)p(b)}b[Da \ _{\lambda} \ c] + \int_{0}^{\lambda} [[Da \ _{\lambda} \ b] \ _{\gamma} \ c] \ d\gamma\, \big)
\\
& \quad + {\chi} \,  \big( \,[a \ _{\lambda} \ b]c + (-1)^{p(a)p(b)}b[a \ _{\lambda} \ c] + \int_{0}^{\lambda} [[a \ _{\lambda} \ b] \ _{\gamma} \ c] \ d\gamma\, \big)
\\
& =  \big( \, [Da \ _{\lambda} \ b]c + {\chi}[a \ _{\lambda} \ b]c \, \big)
+ (-1)^{(p(a)+1)p(b)} \big( \, b[Da \ _{\lambda} \ c] + (-1)^{p(b)}{\chi}b[a \ _{\lambda} \ c] \, \big)
\\
& \quad + {\partial}_{\eta}\int_{0}^{\lambda} [D[Da \ _{\lambda} \ b] \ _{\gamma} \ c] + {\eta}[[Da \ _{\lambda} \ b] \ _{\gamma} \ c] - {\chi}[D[a \ _{\lambda} \ b] \ _{\gamma} \ c] - {\chi}{\eta}[[a \ _{\lambda} \ b] \ _{\gamma} \ c] \ d\gamma
\\
& = [a \ _{\Lambda} \ b]c + (-1)^{(p(a)+1)p(b)}b[a \ _{\Lambda} \ c]
+ \int_{0}^{\Lambda} [[a \ _{\Lambda} \ b] \ _{\Gamma} \ c] \ d\Gamma,
\end{align*}
for the $\Lambda$-bracket in \eqref{eq5.2} and  $a,b,c \in V$. Hence if there is an odd derivation $D$ of the normally ordered product satisfying $[D, D] = 2{\partial}$ and \eqref{eq5.1}, then $V$ itself can be considered as an $N=1$ SUSY vertex algebra.

\vskip 1mm
 
Now, let us discuss the case when we cannot find such an odd derivation $D$. In this case, we aim to see if $V$ can be embedded into an $N = 1$ SUSY vertex algebra. More concretely, the following question is the main topic we will deal with in the rest of Section \ref{section5}.

\begin{question} \label{question:N=0 to N=1}
\rm
Consider a Lie conformal algebra
\begin{equation} \label{eq:sec 5,LCA}
{R} = {\mathbb{C}}[{\partial}] \otimes {\textup{span}}_{\mathbb{C}}\{ {u}_{i} \}_{ i {\in} I } \oplus E_{C},
\end{equation}
where the set $\{ {u}_{i} \}_{ i {\in} I }$ is linearly independent and $E_C$ is the central extension part of $R$ with $\partial E_C=0$. Can we find a Lie conformal algebra $\widetilde{R}$ containing $R$ such that the corresponding universal enveloping vertex algebra $V(\widetilde{R})$ is an $N = 1$ SUSY vertex algebra?
\end{question}

\vskip 1mm

If there is an $N = 1$ SUSY vertex algebra $V(\widetilde{R})$ containing the original vertex algebra $V(R)$ as a vetex subalgebra, then we say $V(\widetilde{R})$ is an {\it $N=1$ SUSY extension of $V(R)$}.
Suppose $R$ is the  Lie conformal algebra and $I$ is the index set in \eqref{eq:sec 5,LCA}. For subindex sets $J, \bar{J} \subset I$ such that  $I=J \sqcup \bar{J}$, we consider new data ${\bar{R}}$, $D$ and $V({\widetilde{R}})$ for $\widetilde{R}= R \oplus \bar{R}$ which are given as follows:
\begin{enumerate}[]
\item (D-1) \ ${\bar{R}} = {\mathbb{C}}[{\partial}] \otimes {\textup{span}}_{\mathbb{C}}\{ {v}_{i} \}_{ i {\in} I }$ is a $\mathbb{C}[\partial]$-module and $p({v}_{i}) = 1 - p({u}_{i})$ for any $i \ {\in} \ I$.
\item (D-2) \ $D$ is an odd endomorphism on $\widetilde{R}={R} \oplus {\bar{R}}$, such that $[D, D] = 2{\partial}$ and 
$D({u}_{j}) = {v}_{j}$, $D({v}_{\bar{\jmath}})= {u}_{\bar{\jmath}}$ \text{ for } $j \in J$, $\bar{\jmath} \in \bar{J}$, and $D(E_C)=0$. For the simplicity of notations, let us denote  ${\bar{v}}_{j} := {u}_{j}$ and $ {\bar{u}}_{\bar{\jmath}} := {v}_{\bar{\jmath}}$. Then we have \[D(\bar{u})=u \quad \text{ for } \quad  u \in \{ v_j \, | \, j \in J \} \cup \{ u_{\bar{\jmath}} \, | \, \bar{\jmath} \in \bar{J} \}.\]
\item (D-3) \ Take an ordered ${\mathbb{C}}$-basis of $R$ and extend it to an ordered basis $\{{r}_{i}\}_{i {\in} {\mathcal{I}}}$ of $\widetilde{R}$. Let $V(\widetilde{R})$ be a $\mathbb{C}$-algebra endowed with a basis 
\[ \{ {r}_{i_1}{r}_{i_2} \cdots {r}_{i_l} \, | \, l \ {\in} \ {\mathbb{Z}_+}, \ {i}_{1} \leq {i}_{2} \leq \cdots \leq {i}_{l}, \ {i}_{t} < {i}_{t+1} \text{ if } p({r}_{i_t}) = 1 \}, \]
where if $l = 0$ then ${r}_{i_1}{r}_{i_2} \cdots {r}_{i_l} := 1$. Moreover, we assume $V(\widetilde{R})$ is the differential algebra with the odd derivation $D$ induced from $D$ in (D-2).
\end{enumerate}

\vskip 1mm

In the rest of this section, we will look for assumptions to give an $N = 1$ SUSY vertex algebra structure on  $V(\widetilde{R})$. 
The following example shows the $bc$–$\beta\gamma$ system is an $N=1$ SUSY extension of the ${\beta}{\gamma}$ system (see Example 5.12 of \cite{HK07}).

\begin{example}[${\beta}{\gamma}$ system vs $bc$–${\beta}{\gamma}$ system] \label{ex:sec5, bcbetagamma}
\rm
Let ${R} = {\mathbb{C}}[{\partial}] \otimes {\textup{span}}_{\mathbb{C}}\{ \beta, \gamma \} \oplus \mathbb{C}C$ be the Lie conformal algebra for the ${\beta}{\gamma}$ system and let ${\widetilde{R}} = R \oplus \bar{R} = {\mathbb{C}}[{\partial}] \otimes {\textup{span}}_{\mathbb{C}}\{ \beta, \gamma, b, c \} \oplus \mathbb{C}C$ be the Lie conformal algebra for the $bc$–$\beta\gamma$ system in Example \ref{ex:def of bcbetagamma}, where ${\bar{R}} = {\mathbb{C}}[{\partial}] \otimes {\textup{span}}_{\mathbb{C}}\{ b, c \}$.
Then the $bc$–$\beta\gamma$ system is an $N = 1$ SUSY extension of the $\beta\gamma$ system (see Example \ref{example: SUSY bcbetagamma}), where the odd derivation $D$ is defined by
\begin{align*}
D(b) = {\beta},
{\quad}
D(c) = {\partial}{\gamma},
{\quad}
D({\beta}) = {\partial}{b},
{\quad}
D({\gamma}) = c,
\end{align*}
i.e.,  $b= \bar{\beta}$ and $\gamma= \bar{c}$. The $\lambda$-bracket on $V(\widetilde{R})$ can be rewritten via a $\Lambda$-bracket as follows:
\[ [b \ _{\Lambda} \ {\gamma}] = [{\gamma} \ _{\Lambda} \ b] = C. \]
\end{example}

\vskip 1mm

To deal with general cases, observe the sesquilinearity of an $N = 1$ SUSY vertex algebra $V$ in terms of the $\lambda$-bracket 
\[ [a \ _\lambda \ b ] = \sum_{m \in \mathbb{Z}_+} \frac{{\lambda}^{m}}{m!} a_{(m|1)}b\quad \text{ for } \quad a,b \in V.\]
Note that $V$ is a vertex algebra with respect to the above $\lambda$-bracket. Consider ${\bar{a}}$ and ${\bar{b}}$ in $V$ such that $D{\bar{a}} = a$ and $D{\bar{b}} = b$. Then the sesquilinearity of $N = 1$ SUSY Lie conformal algebras for the pair (${\bar{a}}$, ${\bar{b}}$) tells that
$[a \ _{\lambda} \ b] = (-1)^{p(a)}(D[a \ _{\lambda} \ \bar{b}] + {\lambda}[\bar{a} \ _{\lambda} \ \bar{b}])$, and  $[\bar{a} \ _{\lambda} \ b] = (-1)^{p(a)}([a \ _{\lambda} \ \bar{b}] - D[\bar{a} \ _{\lambda} \ \bar{b}])$. Equivalently, we have 
\begin{equation*}
\begin{aligned}
& \text{ (SEF1) \quad $[D{\bar{a}} \ _{\lambda} \ D{\bar{b}}] = (-1)^{p({\bar{a}})+1}(D[D{\bar{a}} \ _{\lambda} \ \bar{b}] + {\lambda}[\bar{a} \ _{\lambda} \ \bar{b}]),$}
\\
& \text{ (SEF2) \quad $[\bar{a} \ _{\lambda} \ D{\bar{b}}] = (-1)^{p({\bar{a}})+1}([D{\bar{a}} \ _{\lambda} \ \bar{b}] - D[\bar{a} \ _{\lambda} \ \bar{b}]),$ }
\end{aligned}
\end{equation*}
which are called {\it supersymmetric extension formulas}. These formulas are crucially needed to see if the differential algebra $V(\widetilde{R})$ in (D-3) is indeed an $N = 1$ SUSY vertex algebra.

\vskip 1mm

\begin{definition}\label{definition:SUSY generator}
\rm
Let $R$ be a ${\mathbb{C}}[{\partial}]$-module with a ${\lambda}$-bracket. For a subset $S \subseteq R$ and an odd endomorphism $D \ {\in} \ \textup{End}(R)$ such that $[D, D] = 2{\partial}$, we say the pair ($S$, $D$) \textit{satisfies the supersymmetric extension formulas}, if (SEF1) and (SEF2) hold for all ${\bar{a}}, {\bar{b}} \ {\in} \ S$. In this case, we also call the subset $S$ a \textit{supersymmetric generator in the $D$–direction}.
\end{definition}

\vskip 1mm

Under the assumptions in Definition \ref{definition:SUSY generator}, note that if an ordered pair $({\bar{a}}, {\bar{b}})$ satisfies (SEF1) and (SEF2), then the odd endomorphism $D$ acts as an odd derivation on the following ${\lambda}$-brackets:
\begin{align*}
[{\bar{a}} \ _{\lambda} \ {\bar{b}}],
{\quad}
[D{\bar{a}} \ _{\lambda} \ {\bar{b}}],
{\quad}
[{\bar{a}} \ _{\lambda} \ D{\bar{b}}],
{\quad}
[D{\bar{a}} \ _{\lambda} \ D{\bar{b}}],
\end{align*}
since we know that the following equalities hold:
\begin{equation}\label{eq:D-derivation wrt lambda}
\begin{aligned}
& D[{\bar{a}} \ _{\lambda} \ D{\bar{b}}] 
= (-1)^{p(a)}(D[D{\bar{a}} \ _{\lambda} \ {\bar{b}}] - {\partial}[{\bar{a}} \ _{\lambda} \ {\bar{b}}]) 
\\
& \hskip 5mm= [D{\bar{a}} \ _{\lambda} \ D{\bar{b}}] + (-1)^{p(a)+1}({\partial} + {\lambda})[{\bar{a}} \ _{\lambda} \ {\bar{b}}] 
= [D{\bar{a}} \ _{\lambda} \ D{\bar{b}}] + (-1)^{p({\bar{a}})}[{\bar{a}} \ _{\lambda} \ {\partial}{\bar{b}}],
\\
& D[D{\bar{a}} \ _{\lambda} \ D{\bar{b}}] 
= (-1)^{p(a)}({\partial}[D{\bar{a}} \ _{\lambda} \ {\bar{b}}] + {\lambda}{D}[{\bar{a}} \ _{\lambda} \ {\bar{b}}]) 
\\
&\hskip 5mm = - {\lambda}[{\bar{a}} \ _{\lambda} \ D{\bar{b}}] + (-1)^{p(a)}({\partial} + {\lambda})[D{\bar{a}} \ _{\lambda} \ {\bar{b}}] 
= [{\partial}{\bar{a}} \ _{\lambda} \ D{\bar{b}}] + (-1)^{p(a)}[D{\bar{a}} \ _{\lambda} \ {\partial}{\bar{b}}].
\end{aligned}
\end{equation}
Furthermore,  Proposition \ref{proposition:N=1 extension strongly generated} shows the two formulas (SEF1) and (SEF2) are sufficient to give solutions to Question \ref{question:N=0 to N=1}.

\begin{proposition}\label{proposition:N=1 extension strongly generated}
Let ${R} = {\mathbb{C}}[{\partial}] \otimes {\textup{span}}_{\mathbb{C}}\{ {u}_{i} \}_{ i {\in} I }\oplus E_C$ be a Lie conformal algebra in \eqref{eq:sec 5,LCA}, and let $J$ and $\bar{J}$ be subsets of $I$ such that $I=J \sqcup \bar{J}$. Suppose ${\widetilde{R}}$ and $D$ are given by the data in ${\textup{(D-1)}}$ and ${\textup{(D-2)}}$, and the ${\lambda}$-brackets between any two elements in $\{{u}_{i}, {v}_{i}\}_{i {\in} I}$ are given so that ${\widetilde{R}}$ is a Lie conformal algebra. If $S=\{u_i= {\bar{v}}_{i} \, | \, i\in J\}\cup\{v_i={\bar{u}}_{i} \, | \, i\in \bar{J}\} $ is a supersymmetic generator with the $D$–direction,  then $V(\widetilde{R})$ is an $N = 1$ SUSY vertex algebra with the odd derivation $D$.
\end{proposition}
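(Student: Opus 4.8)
The strategy is to realize $V(\widetilde{R})$ as the underlying vertex algebra of the \emph{universal enveloping $N=1$ SUSY vertex algebra} of an $N=1$ SUSY Lie conformal algebra structure that I first build on $\widetilde{R}$ itself. As a preliminary, since $[D,D]=2\partial$ forces $\partial=D^{2}$ and hence $D\partial=\partial D$, the odd map $D$ of (D-2) is determined on the free $\mathbb{C}[\partial]$-module $\widetilde{R}=\mathbb{C}[\partial]\otimes\textup{span}_{\mathbb{C}}\{u_{i},v_{i}\}_{i\in I}\oplus E_{C}$ (with $DE_{C}=0$), and the pair $\nabla=(\partial,D)$ makes $\widetilde{R}$ a $\mathbb{C}[\nabla]$-module. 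On it I put the candidate $\Lambda$-bracket
\[
[a \ _{\Lambda} \ b] := [Da \ _{\lambda} \ b] + \chi[a \ _{\lambda} \ b], \qquad a,b\in\widetilde{R},
\]
which is $\mathbb{C}$-bilinear of degree $\bar{1}$ and takes values in $\mathbb{C}[\Lambda]\otimes\widetilde{R}$.

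The first main step is to check that $(\widetilde{R},\nabla,[\ _{\Lambda}\ ])$ is an $N=1$ SUSY Lie conformal algebra. The key observation is that (SEF1) and (SEF2) are precisely the statement that $D$ acts as an odd derivation of the $\lambda$-bracket: rearranging with $D^{2}=\partial$ and the $\lambda$-bracket sesquilinearity, for $\bar{a},\bar{b}\in S$ the formula (SEF2) is exactly the derivation identity \eqref{eq5.1} on the pair $(\bar{a},\bar{b})$ and (SEF1) is exactly \eqref{eq5.1} on $(D\bar{a},\bar{b})$; the remaining two pairs $(\bar{a},D\bar{b})$ and $(D\bar{a},D\bar{b})$ are handled by \eqref{eq:D-derivation wrt lambda}. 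Since $\{s,Ds:s\in S\}=\{u_{i},v_{i}\}_{i\in I}$ and $D$ commutes with $\partial$ and $\lambda$, this propagates by $\mathbb{C}[\partial]$-linearity to all of $\widetilde{R}$ (trivially on pairs involving the central $E_{C}$), so \eqref{eq5.1} holds on $\widetilde{R}$. Granting \eqref{eq5.1} together with $[D,D]=2\partial$, the sesquilinearity, skew-symmetry and Jacobi identity of the $\Lambda$-bracket follow from the corresponding $\lambda$-bracket axioms of $\widetilde{R}$ exactly as in the discussion opening this section, which uses only the Lie conformal structure.

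Next I form the universal enveloping $N=1$ SUSY vertex algebra $V_{N=1}(\widetilde{R})$ of this $N=1$ SUSY Lie conformal algebra (see \cite{HK07}). It is freely generated by $\widetilde{R}$, so with the ordered $\mathbb{C}$-basis of $\widetilde{R}$ used in (D-3) (e.g. $\{u_{i},v_{i}\}_{i\in I}$ over $\mathbb{C}[\partial]$ together with a basis of $E_{C}$) its PBW basis is literally the basis written down in (D-3), and the odd operator of its $\mathbb{C}[\nabla]$-structure restricts to the derivation $D$ of (D-3). Moreover the Lie bracket $\int_{-\nabla}^{0}[a\ _{\Lambda}\ b]\,d\Lambda$ on $\widetilde{R}$ coincides with $\int_{-\partial}^{0}[a\ _{\lambda}\ b]\,d\lambda$, so the normally ordered products agree, and the $\chi$-coefficient of the $\Lambda$-bracket restricts on $\widetilde{R}$ to the original $\lambda$-bracket. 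Hence by the uniqueness in Proposition \ref{proposition:universal VA} the underlying vertex algebra of $V_{N=1}(\widetilde{R})$ is exactly $V(\widetilde{R})$, so $V(\widetilde{R})$ carries an $N=1$ SUSY vertex algebra structure whose odd derivation is $D$ and whose $\Lambda$-bracket on generators is \eqref{eq5.2}.

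The main obstacle is the sign bookkeeping in the second paragraph: identifying (SEF1)/(SEF2) with the $\lambda$-bracket derivation property and propagating it off the generating set with all parities correct; the identification in the last step of the forgetful image of the SUSY universal enveloping algebra with the $V(\widetilde{R})$ of (D-3) is routine once one checks that the two induced Lie brackets coincide. An alternative that stays entirely inside $V(\widetilde{R})$ is to first extend \eqref{eq5.1} from $\widetilde{R}$ to all of $V(\widetilde{R})$ by induction on the length of normally ordered products, using the non-commutative Wick formula and skew-symmetry (the delicate term being $\int_{0}^{\lambda}[[a\ _{\lambda}\ b]\ _{\gamma}\ c]\,d\gamma$ under $D$), and then invoke the criterion recalled at the start of this section.
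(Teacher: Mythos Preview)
Your proof is correct, and in fact the alternative you sketch in your final paragraph is precisely the paper's argument. The paper first observes, via (SEF1), (SEF2) and \eqref{eq:D-derivation wrt lambda}, that $D$ is an odd derivation of the $\lambda$-bracket on $\widetilde{R}$ (your second paragraph); it then extends this to all of $V(\widetilde{R})$ by a single explicit Wick-formula computation of $D[a\ _{\lambda}\ bc]$, and finally invokes the criterion from the beginning of Section~\ref{section5} to conclude.

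Your primary route differs from the paper in the second half: rather than propagating the derivation identity \eqref{eq5.1} through normally ordered products inside $V(\widetilde{R})$, you package $(\widetilde{R},\nabla,[\ _{\Lambda}\ ])$ as an $N=1$ SUSY Lie conformal algebra, apply the SUSY universal enveloping construction of \cite{HK07}, and identify the forgetful image with $V(\widetilde{R})$. This is more conceptual and avoids the Wick computation entirely, at the cost of importing a heavier black box and needing the compatibility check that $\int_{-\nabla}^{0}[a\ _{\Lambda}\ b]\,d\Lambda=\int_{-\partial}^{0}[a\ _{\lambda}\ b]\,d\lambda$ (which is immediate from $\partial_{\chi}$ applied to \eqref{eq5.2}). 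The paper's approach is more elementary and self-contained, and has the bonus that its explicit Wick computation is later reused verbatim in the proof of Proposition~\ref{proposition:N=2 extension strongly generated}.
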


\begin{proof}
From \eqref{eq:D-derivation wrt lambda}, the odd endomorphism $D$ is indeed an odd derivation of the ${\lambda}$-brackets of ${\widetilde{R}}$.
Moreover, from the non-commutative Wick formula, the following equality holds:
\begin{align*}
D[a \ _{\lambda} \ bc]
&= D\big([a \ _{\lambda} \ b]c + (-1)^{p(a)p(b)}b[a \ _{\lambda} \ c] + \int_{0}^{\lambda} [[a \ _{\lambda} \ b] \ _{\gamma} \ c] \ d\gamma \big)
\\
&= (D[a \ _{\lambda} \ b])c + (-1)^{p(a)+p(b)}[a \ _{\lambda} \ b]Dc
\\
&{\quad} + (-1)^{p(a)p(b)}(Db)[a \ _{\lambda} \ c] + (-1)^{(p(a)+1)p(b)}bD[a \ _{\lambda} \ c] + \int_{0}^{\lambda} D[[a \ _{\lambda} \ b] \ _{\gamma} \ c] \ d\gamma
\\
&= [Da \ _{\lambda} \ b]c + (-1)^{p(a)}[a \ _{\lambda} \ Db]c + (-1)^{p(a)+p(b)}[a \ _{\lambda} \ b]Dc
\\
&{\quad} + (-1)^{p(a)p(b)}(Db)[a \ _{\lambda} \ c] + (-1)^{(p(a)+1)p(b)}b[Da \ _{\lambda} \ c] + (-1)^{p(a)p(b)+p(a)+p(b)}b[a \ _{\lambda} \ Dc]
\\
&{\quad} + \int_{0}^{\lambda} [[Da \ _{\lambda} \ b] \ _{\gamma} \ c] \ d\gamma + (-1)^{p(a)}\int_{0}^{\lambda} [[a \ _{\lambda} \ Db] \ _{\gamma} \ c] \ d\gamma + (-1)^{p(a)+p(b)}\int_{0}^{\lambda} [[a \ _{\lambda} \ b] \ _{\gamma} \ Dc] \ d\gamma
\\
&= [Da \ _{\lambda} \ bc] + (-1)^{p(a)}[a \ _{\lambda} \ (Db)c] + (-1)^{p(a)+p(b)}[a \ _{\lambda} \ bDc]
\\
&= [Da \ _{\lambda} \ bc] + (-1)^{p(a)}[a \ _{\lambda} \ D(bc)],
\end{align*}
for $a, b, c \in \widetilde{R}$.
Hence, $D$ is an odd derivation of the ${\lambda}$-brackets on $V({\tilde{R}})$ so that $V({\tilde{R}})$ is an $N = 1$ SUSY vertex algebra with the ${\Lambda}$-brackets in \eqref{eq5.2}.
\end{proof}

\vskip 1mm

In Proposition \ref{proposition:N=1 extension strongly generated}, we find an $N=1$ SUSY Lie conformal algebra structure on $\widetilde{R}$ properly concatenating the $\mathbb{C}[\partial]$-module $\bar{R}$ to the Lie conformal algebra $R$ which yields the $N = 1$ SUSY extension $V(\widetilde{R})$ of $V(R)$. In this process, (SEF1) and (SEF2) are essential in the sense that they guarantee the $N=1$ supersymmetry of $V(\widetilde{R})$. In the following example, we present another fundamental example for Question \ref{question:N=0 to N=1} when $V(R)$ is the affine vertex algebra (see also Example 5.9 of \cite{HK07}).

\begin{example} [Affine vertex algebra vs $N = 1$ SUSY affine vertex algebra] \label{ex:N=1 affine}
\rm
Let $\mathfrak{g}$ be a Lie superalgebra with an even invariant supersymmetric bilinear form $(\, |\, )$. Consider the current Lie conformal algebra ${\textup{Cur}\mathfrak{g}} = \mathbb{C}[\partial] \otimes \mathfrak{g}\oplus \mathbb{C}K$ introduced in Example \ref{example: affine VA} and let $\widetilde{{\textup{Cur}\mathfrak{g}}} = \mathbb{C}[\partial] \otimes (\mathfrak{g} \oplus \bar{\mathfrak{g}}) \oplus \mathbb{C}K$ be a $\mathbb{C}[\partial]$-module, where $\bar{\mathfrak{g}}$ is the parity reversed vector superspace of $\mathfrak{g}$, i.e. $\bar{\mathfrak{g}}=\{ \bar{a} \, | \, a\in \mathfrak{g}, \,  p(\bar{a})= 1- p(a) \}$. Define the $\lambda$-bracket on $\widetilde{{\textup{Cur}\mathfrak{g}}}$ by
\begin{align}\label{eq:SUSY affine lambda bracket}
[a \ _{\lambda} \ b] = [a, b] + K{\lambda}(a|b),
{\quad}
[a \ _{\lambda} \ {\bar{b}}] = (-1)^{p(a)}{\overline{[a,b]}},
{\quad}
[{\bar{a}} \ _{\lambda} \ {\bar{b}}] = (-1)^{p(a)}K(a|b),
\end{align}
and the odd derivation $D$ on $\widetilde{{\textup{Cur}\mathfrak{g}}}$ by $D{\bar{a}} = a$ and $Da = {\partial}{\bar{a}}$ for $a, b \in \mathfrak{g}$ (see Example 2.6 of \cite{HZ10}). Then the ${\lambda}$-brackets in \eqref{eq:SUSY affine lambda bracket} satisfy (SEF1) and (SEF2) and thus $\bar{\mathfrak{g}}$ is a supersymmetric generator in the $D$–direction. In terms of $\Lambda$-bracket, the ${\lambda}$-bracket \eqref{eq:SUSY affine lambda bracket} on $V(\widetilde{{\textup{Cur}\mathfrak{g}}})$ can be rewritten as
\begin{align*}
[\bar{a} \ _\Lambda \ \bar{b}] = (-1)^{p(a)}\big(\overline{[a, b]} + K\chi (a|b) \big),
\end{align*}
and the quotient vertex algebra $V(\widetilde{{\textup{Cur}\mathfrak{g}}})/\left< K - k \right>$ is the $N=1$ SUSY affine vertex algebra $V^k_{N=1}(\mathfrak{g})$ of level $k$ introduced in Example \ref{example:SUSY affine VA}. 
\end{example}

\vskip 1mm

Recall, in Example \ref{ex:sec5, bcbetagamma}, that $R$ is the Lie conformal algebra generated by $\beta$, $\gamma$ and the central element $C$ whose universal enveloping vertex algebra $V(R)$ is the ${\beta}{\gamma}$ system  and 
the $bc$–$\beta\gamma$ system $V(\widetilde{R})$ is an $N=1$ SUSY extension of $V(R)$.
In this case, the choice of $\mathbb{C}[\partial]$-module $\bar{R}$  and derivation $D$ is depicted by the diagram (Case 1 : $bc$–$\beta\gamma$ system).
On the other hand, suppose we consider another $\mathbb{C}[\partial]$-module $\bar{R}$ generated by $\bar{\beta}$ and $\bar{\gamma}$, as in the diagram (Case 2 : No supersymmetry). Then there is no supersymmetry in $V(\widetilde{R})$.
\vskip 2mm
\[
\begin{tikzcd}[row sep=scriptsize, column sep=scriptsize, style={font=\small}]
{{\bar{u}_{1}} = b} \arrow{rrr}{D} &&& {{\beta} = {u}_{1}} &&& {} &&&& {{\bar{u}_{1}} = {\bar{\beta}}} \arrow{rrr}{D} &&& {{\beta} = {u}_{1}}
\\
{} &&& {{u}_{2} = {\gamma}} \arrow{rrr}{D} &&& {c = D({u}_{2})} &&&& {{\bar{u}}_{2} = {\bar{\gamma}}} \arrow{rrr}{D} &&& {{\gamma} = {u}_{2}}
\\
\\
{} \arrow[phantom, uu, ""{name=pht1}] &&& {} &&& {} \arrow[phantom, uu, ""{name=pht2}] &&&& {} \arrow[phantom, uu, ""{name=pht3}] &&& {} \arrow[phantom, uu, ""{name=pht4}]
\arrow[phantom, from=pht1 ,to=pht2,"{(\text{Case 1 : $bc$–$\beta\gamma$ system})}"]
\arrow[phantom, from=pht3 ,to=pht4,"{(\text{Case 2 : No supersymmetry}})"]
\end{tikzcd}
\]
That is because, in this case, (SEF1) does not have a solution. For example, for ${u}_{1} = {\beta}$, ${u}_{2} = {\gamma}$, we cannot find the   ${\lambda}$-brackets $[{u_1} \ _{\lambda} \ {\bar{{u}}_{2}}]$ and $[{\bar{{u}}_{1}} \ _{\lambda} \ {\bar{{u}}_{2}}]$ satisfying the equation below:
\begin{align*}
C = D[{u_1} \ _{\lambda} \ {\bar{{u}}_{2}}] + {\lambda}[{\bar{{u}}_{1}} \ _{\lambda} \ {\bar{{u}}_{2}}].
\end{align*}

\vskip 1mm

From the observation above, for a given Lie conformal algebra $R$ in \eqref{eq:sec 5,LCA}, the choice of $\bar{R}$ in which $\widetilde{R}= R\oplus \bar{R}$  and $D$ is a crucial part to give an answer to Question \ref{question:N=0 to N=1}.
The subsequent theorem (Theorem \ref{theorem:N=1 extension}) shows the universal enveloping vertex algebra $V(R)$ can be extended to an $N = 1$ SUSY vertex algebra using $\bar{R}$ and $D$ given as follows:
\begin{enumerate}[]
\item (D-2$'$) \ $\bar{R}= \mathbb{C}[\partial]\otimes {\textup{span}}_{\mathbb{C}}\{ {\bar{u}}_{i} \}_{ i {\in} I } \oplus \bar{E}_C$ is the parity reversed $\mathbb{C}[\partial]$-module of $R$, where $\bar{E}_C=\{\bar{C} \, | \, C \in E_C\}$. The odd derivation $D$ is defined by $D({\bar{u}}_{i}) = {u}_{i}$ for $i \in I$, and $D(\bar{C})= C$ for $C \in {E}_{C}$. In addition, ${\partial}{\bar{E}_C} = 0$.
\end{enumerate}

\begin{theorem}\label{theorem:N=1 extension}
Let ${R} = {\mathbb{C}}[{\partial}] \otimes {\textup{span}}_{\mathbb{C}}\{ {u}_{i} \}_{ i {\in} I }\oplus E_C$ be a Lie conformal algebra in \eqref{eq:sec 5,LCA}. Then
$\widetilde{R}=R\oplus \bar{R}$ for the $\mathbb{C}[\partial]$-module $\bar{R}$ given by {\textup{(D-2$'$)}} is a Lie conformal algebra endowed with the $\lambda$-bracket defined by
\begin{gather} \label{eq:N=1 extension bracket}
[{\bar{{u}}_{i}} \ _{\lambda} \ {u}_{j}] := \overline{[{u}_{i} \ _{\lambda} \ {u}_{j}]},
\quad
[{u}_{i} \ _{\lambda} \ {\bar{{u}}_{j}}] := (-1)^{p({u}_{i})}\overline{[{u}_{i} \ _{\lambda} \ {u}_{j}]},
\quad
[{\bar{{u}}_{i}} \ _{\lambda} \ {\bar{{u}}_{j}}] := 0,
\end{gather}
and $[{C} \ _{\lambda} \ {\widetilde{R}}] = [{\bar{C}} \ _{\lambda} \ {\widetilde{R}}] = 0$ for $C\in E_C$.
Moreover, $V(\widetilde{R})$ is an $N = 1$ SUSY extension of $V(R)$.
\end{theorem}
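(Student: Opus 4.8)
The plan is to deduce the theorem from Proposition~\ref{proposition:N=1 extension strongly generated}, applied with the index decomposition $J=\emptyset$, $\bar J=I$; then the candidate supersymmetric generator is the set $S=\{\bar u_i\}_{i\in I}$ and the odd endomorphism is precisely the $D$ of~(D-2$'$). Since Proposition~\ref{proposition:N=1 extension strongly generated} takes as a hypothesis that $\widetilde R$ is already a Lie conformal algebra, it suffices to check two things: that $\widetilde R=R\oplus\bar R$ with the $\lambda$-bracket \eqref{eq:N=1 extension bracket}, extended to all of $\widetilde R$ by sesquilinearity and $\mathbb C$-bilinearity, is a Lie conformal algebra; and that the pair $(S,D)$ satisfies the supersymmetric extension formulas (SEF1) and (SEF2).

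I would dispose of the supersymmetric extension formulas first, as they are short and also feed back into the previous point. Using $D\bar u_i=u_i$ and $[\bar u_i \ _{\lambda} \ \bar u_j]=0$, one checks directly that both sides of (SEF1) equal $[u_i \ _{\lambda} \ u_j]$ and both sides of (SEF2) equal $\overline{[u_i \ _{\lambda} \ u_j]}$, the only inputs being $(-1)^{p(\bar u_i)+1}=(-1)^{p(u_i)}$ and $D\,\overline{[u_i \ _{\lambda} \ u_j]}=[u_i \ _{\lambda} \ u_j]$. Hence $S$ is a supersymmetric generator in the $D$-direction in the sense of Definition~\ref{definition:SUSY generator}; in particular, by \eqref{eq:D-derivation wrt lambda}, $D$ is an odd derivation of the $\lambda$-bracket of $\widetilde R$.

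For the Lie conformal algebra axioms, sesquilinearity holds by construction, since $\widetilde R$ is freely generated over $\mathbb C[\partial]$ by $\{u_i,\bar u_i\}_{i\in I}$ together with the central part, so the right-hand sides of \eqref{eq:N=1 extension bracket} admit a unique sesquilinear extension. Skew-symmetry and the Jacobi identity for $\widetilde R$ are then reduced to those of $R$: all three brackets in \eqref{eq:N=1 extension bracket} arise from $[u_i \ _{\lambda} \ u_j]$ by applying the parity-reversing $\mathbb C[\partial]$-module isomorphism $a\mapsto\bar a$ in one slot and inserting a sign, so — using $\overline{\partial x}=\partial\bar x$, $p(\bar u_i)=1+p(u_i)$, and the identity $-\partial-(-\partial-\lambda)=\lambda$ for skew-symmetry — each instance of skew-symmetry or of the Jacobi identity in $\widetilde R$ collapses, after pulling the bars outside, to the corresponding identity in $R$, while the instances involving two or three ghost entries hold trivially because $[\bar u_i \ _{\lambda} \ \bar u_j]=0$. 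This sign bookkeeping, in particular for the mixed Jacobi identities of the form $[\bar u_i \ _{\lambda} \ [u_j \ _{\gamma} \ u_k]]$, is the most technical part of the argument, and I expect it to be the main obstacle; everything else is formal. Once $\widetilde R$ is a Lie conformal algebra, Proposition~\ref{proposition:N=1 extension strongly generated} gives that $V(\widetilde R)$ is an $N=1$ SUSY vertex algebra with odd derivation $D$ and $\Lambda$-bracket \eqref{eq5.2}, which unravels to \eqref{eq:main theorem, bracket}. Finally, the bracket \eqref{eq:N=1 extension bracket} restricts to the original $\lambda$-bracket on $R$ and satisfies $[R \ _{\lambda} \ R]\subseteq R[\lambda]$, so $R\hookrightarrow\widetilde R$ is an injective homomorphism of Lie conformal algebras; by the universal property of the enveloping functor it induces a vertex algebra homomorphism $V(R)\to V(\widetilde R)$, which is injective because the basis of $V(R)$ consisting of ordered monomials in a basis of $R$ is a subset of the analogous basis of $V(\widetilde R)$. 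Hence $V(R)$ is a vertex subalgebra and $V(\widetilde R)$ is an $N=1$ SUSY extension of $V(R)$.
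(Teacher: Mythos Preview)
Your proposal is correct and follows essentially the same approach as the paper: verify that $\widetilde R$ is a Lie conformal algebra by reducing skew-symmetry and the Jacobi identity to those of $R$ (the paper does the mixed Jacobi case $(u_i,u_j,\bar u_k)$ explicitly, which is equivalent to your ``one ghost entry'' case), check that $(\{\bar u_i\}_{i\in I},D)$ satisfies the SUSY extension formulas, and then invoke Proposition~\ref{proposition:N=1 extension strongly generated}. Your additional care in justifying the injectivity of $V(R)\to V(\widetilde R)$ is a detail the paper leaves implicit (it simply writes $V(\widetilde R)=V(R)\otimes V(\bar R)$), but the overall strategy is the same.
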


\begin{proof}
For simplicity, we use the notation
$\overline{[a \ _{\lambda} \ b]} = \sum_{m \in {\mathbb{Z}}_{+}} {\frac{{\lambda}^{m}}{m!}}\overline{{a}_{(m)}{b}}$ for $a, b \in R$, where $\overline{{\partial}^{s}{u}_{i}} = {\partial}^{s}{\bar{{u}}_{i}}$.
Then the skew-symmetry and the Jacobi identity of the $\lambda$-bracket can be verified by direct computations. For example, the Jacobi identity for the triple  $\{(u_i, u_j,\bar{u}_k) \, | \, i, j, k \in I \} $ can be checked as follows:
\begin{align*}
[{u}_{i} \ _{\lambda} \ [{u}_{j} \ _{\gamma} \ {\bar{u}}_{k}]] &= (-1)^{p({u}_{j})}[{u}_{i} \ _{\lambda} \ {\overline{[{u}_{j} \ _{\gamma} \ {u}_{k}]}}]
\\
&= (-1)^{p({u}_{i})+p({u}_{j})}{\overline{[{u}_{i} \ _{\lambda} \ [{u}_{j} \ _{\gamma} \ {u}_{k}]]}}
\\
&= (-1)^{p({u}_{i})+p({u}_{j})}{\overline{[[{u}_{i} \ _{\lambda} \ {u}_{j}] \ _{{\lambda} + {\gamma}} \ {u}_{k}]}} + (-1)^{p({u}_{i})p({u}_{j}) + p({u}_{i}) + p({u}_{j})}{\overline{[{u}_{j} \ _{\gamma} \ [{u}_{i} \ _{\lambda} \ {u}_{k}]]}}
\\
&= [[{u}_{i} \ _{\lambda} \ {u}_{j}] \ _{{\lambda} + {\gamma}} \ {\bar{u}}_{k}] + (-1)^{p({u}_{i})p({u}_{j})}[{u}_{j} \ _{\gamma} \ [{u}_{i} \ _{\lambda} \ {\bar{u}}_{k}]].
\end{align*}
Moreover, one can check the pair ($\{{\bar{u}}_{i}\}_{i {\in} I}$, $D$) satisfies the SUSY extension formulas. Hence, by  Proposition \ref{proposition:N=1 extension strongly generated}, the vertex algebra $V({\widetilde{R}}) = V({R}) \otimes V({\bar{R}})$ has an $N = 1$ supersymmetry.
\end{proof}

\vskip 1mm

The $\lambda$-bracket \eqref{eq:N=1 extension bracket} is inspired from the $\Lambda$-bracket of the $N = 1$ SUSY affine vertex algebra (see also Section 3.2.1 of \cite{CFRS90}).
In terms of $\Lambda$-bracket, the bracket \eqref{eq:N=1 extension bracket} can be written as follows:
\begin{align*}
[\bar{a} \ _\Lambda \ \bar{b}] = (-1)^{p(a)}\overline{[a \ _\lambda \ b]}
\end{align*}
for $a, b \in R$.
Indeed, if $R$ is a centerless current Lie conformal algebra, the $N = 1$ SUSY extension obtained from Theorem \ref{theorem:N=1 extension} is the $N = 1$ SUSY affine vertex algebra of level $k = 0$ (see Example \ref{example:SUSY affine VA}). In the following examples, we find an $N = 1$ SUSY extension of the $\beta\gamma$ system and the Virasoro vertex algebra using Theorem \ref{theorem:N=1 extension}.

\begin{example}[$N=1$ SUSY extension of $\beta\gamma$ system]\label{ex: betagamma N=1 ext-2}
\rm
Let ${R} = {\mathbb{C}}[{\partial}] \otimes {\textup{span}}_{\mathbb{C}}\{ \beta, \gamma \} \oplus \mathbb{C}C$ be the Lie conformal algebra for the ${\beta}{\gamma}$ system (see Example \ref{ex:def of bcbetagamma}). If we apply Theorem \ref{theorem:N=1 extension} to $R$ by taking ${u}_{1} = {\beta}$ and ${u}_{2} = {\gamma}$, we know that $\widetilde{R} = {\mathbb{C}}[{\partial}] \otimes {\textup{span}}_{\mathbb{C}}\{ {\beta}, {\gamma}, {\bar{\beta}}, {\bar{\gamma}} \} \oplus {\mathbb{C}}C \oplus {\mathbb{C}}{\bar{C}}$ with the remaining non-zero ${\lambda}$-brackets
\begin{align*}
[{\beta} \ _{\lambda} \ {\bar{\gamma}}] = [{\bar{\beta}} \ _{\lambda} \ {\gamma}] = {\bar{C}},
\end{align*}
is an $N = 1$ SUSY extension of the ${\beta}{\gamma}$ system. Note that $V(\widetilde{R})$ is a distinct vertex algebra from the $bc$–$\beta\gamma$ system, which is another SUSY extension introduced in Example \ref{ex:sec5, bcbetagamma}.
\end{example}

\begin{example}[$N=1$ SUSY extension of Virasoro vertex algebra] \label{ex:N=1 ext of vir}
\rm
Consider the {\it centerless Virasoro Lie conformal algebra} $\textup{Vir} = {\mathbb{C}}[\partial] \otimes {\mathbb{C}}L$ with the ${\lambda}$-bracket $[L \ _\lambda \ L] = ({\partial} + 2{\lambda})L$, and its universal enveloping vertex algebra ${\textup{Vir}}^{0}$ (see Example \ref{example: Virasoro LCA}). First of all, for the $N = 1$ super-Virasoro Lie conformal algebra $\textup{SVir}= \mathbb{C}[\partial]\otimes (\mathbb{C}L \oplus \mathbb{C}G)$
in which the ${\lambda}$-brackets are defined as
\begin{align*}
[L \ _\lambda \ G] = ({\partial} + {\frac{3}{2}}{\lambda})G,
{\quad}
[G \ _\lambda \ G] = 2L,
\end{align*}
if we define the odd derivation $D$ by $DG = 2L$ then $(\{ G \},D)$ satisfies the SUSY extension formulas. Hence the $N=1$ super-Virasoro vertex algebra ${\textup{SVir}}^{0}$ is an $N = 1$ SUSY extension of ${\textup{Vir}}^{0}$. On the other hand, by using the Theorem \ref{theorem:N=1 extension}, the universal enveloping vertex algebra $V({\widetilde{\textup{Vir}}})$ for ${\widetilde{\textup{Vir}}}={\mathbb{C}}[{\partial}] \otimes ({\mathbb{C}}L \oplus {\mathbb{C}}{\bar{L}})$ is also an $N = 1$ SUSY extension of ${\textup{Vir}}^{0}$, where the ${\lambda}$-brackets of ${\widetilde{\textup{Vir}}}$ are given by
\begin{align*}
[{L} \ _\lambda \ {\bar{L}}] = ({\partial} + 2{\lambda}){\bar{L}},
{\quad}
[{\bar{L}} \ _\lambda \ {\bar{L}}] = 0.
\end{align*}
Note that, as in Example 2.7 of \cite{HK07}, if we take an another generating set for the $N = 2$ super-Virasoro vertex algebra by defining $L' := L - {\frac{1}{2}}{\partial}J$, then we know that especially the ${\lambda}$-brackets for $L'$ and ${G}^{+}$ are 
\begin{align*}
[{L'} \ _\lambda \ {L'}] = ({\partial} + 2{\lambda}){L'},
{\quad}
[{L'} \ _\lambda \ {G}^{+}] = ({\partial} + 2{\lambda}){G}^{+},
{\quad}
[{G}^{+} \ _\lambda \ {G}^{+}] = 0.
\end{align*}
Hence, the vertex algebra $V({\widetilde{\textup{Vir}}})$ can be viewed as a vertex subalgebra of the $N = 2$ super-Virasoro vertex algebra of any central charge $c \in \mathbb{C}$.
\end{example}

\vskip 1mm

In Example \ref{ex:N=1 ext of vir}, we introduced two distinct $N=1$ SUSY vertex algebras ${\textup{SVir}}^{0}$ and $V(\widetilde{{\textup{Vir}}})$ which are strongly generated by two elements $L$ and $D^{-1}(L)$. 
Furthermore, an $N=1$ SUSY extension of ${\textup{Vir}}^{0}$ strongly generated by $L$ and a primary field $\bar{L}$ such that $D(\bar{L})=L$ is one of the two vertex algebras mentioned above. To check the detail, 
suppose the conformal weight of $\bar{L}$ is $\Delta \in \mathbb{C}$, i.e. $[L \ _\lambda \ {\bar{L}}] = ({\partial} + {\Delta}{\lambda}){\bar{L}}$. Then, from the equation (SEF1), we have
\begin{align*}
[{\bar{L}} \ _\lambda \ {\bar{L}}] = {\lambda}^{-1}([{L} \ _\lambda \ {L}] - D[{L} \ _\lambda \ {\bar{L}}]) = (2-{\Delta})L.
\end{align*}
Moreover, by using the Jacobi identity, the possible values of $\Delta$ is $\frac{3}{2}$ or $2$ since
\begin{align*}
(2 - {\Delta})({\partial} + 2{\lambda})L &= [{L} \ _\lambda \ [{\bar{L}} \ _\gamma \ {\bar{L}}]]
\\
&= [[{L} \ _\lambda \ {\bar{L}}] \ _{{\lambda} + {\gamma}} \ {\bar{L}}] + [{\bar{L}} \ _\gamma \ [{L} \ _\lambda \ {\bar{L}}]]
\\
&= (({\Delta} - 1){\lambda} - {\gamma})(2 - {\Delta})L + ({\partial} + {\Delta}{\lambda} + {\gamma})(2 - {\Delta})L
\\
&= (2 - {\Delta})({\partial} + (2{\Delta}-1){\lambda})L.
\end{align*}

\vskip 1mm

As the Virasoro vertex algebra ${\textup{Vir}}^{c}$ can be extended to the $N = 1$ super-Virasoro vertex algebra ${\textup{SVir}}^{c}$, it is natural to ask if a vertex algebra can be extended to an $N = 1$ superconformal vertex algebra. In the following Remark \ref{remark:superconformal current}, using the SUSY extension formulas, one can reproduce the $N = 1$ superconformal extension of the affine vertex algebra of level $0$, which is called the {\it superconformal current algebra} in \cite{KT85}. A free field realization and a supersymmetric ghost realization of the superconformal current algebra can be found in \cite{IR94}.

\begin{remark}[\cites{KT85}] \label{remark:superconformal current}
\rm
Let $\mathfrak{g}=\bigoplus_{\omega \in \Omega}\mathfrak{g}_{\omega}$ be an $\Omega$-graded Lie superalgebra. Then, the $N = 1$ SUSY affine vertex algebra $V^{0}_{N = 1}(\mathfrak{g})$ can be embedded into an $N = 1$ superconformal vertex algebra with the conformal weight $\Delta_a=1-\omega$, where $a \in \mathfrak{g}_{\omega}$ is a homogeneous element for $\omega \in \Omega$. More precisely, consider the Lie conformal algebra ${\textup{Cur}\mathfrak{g}}_{N=1} = \mathbb{C}[\partial] \otimes ({\mathfrak{g}} \oplus {\bar{\mathfrak{g}}})$ with the ${\lambda}$-brackets
\begin{align*}
[a \ _{\lambda} \ b] = [a, b],
{\quad}
[a \ _{\lambda} \ {\bar{b}}] = (-1)^{p(a)}{\overline{[a,b]}},
{\quad}
[{\bar{a}} \ _{\lambda} \ {\bar{b}}] = 0,
\end{align*}
for $a, b \in \mathfrak{g}$, and let ${\textup{SVir}} = {\mathbb{C}}[{\partial}] \otimes (\mathbb{C}{L} \oplus \mathbb{C}{G}) \oplus {\mathbb{C}}C$ be the $N = 1$ super-Virasoro Lie conformal algebra with the $\lambda$-brackets in \eqref{eq:superVirasoro relation}, where $C$ is central in the $\mathbb{C}[\partial]$-module ${\textup{SVir}} \oplus {\textup{Cur}\mathfrak{g}}_{N=1}$ and ${\partial}C = 0$. For the ${\lambda}$-brackets given by:
\begin{align*}
[L \ _{\lambda} \ a] = ({\partial} + {\Delta}_{a}{\lambda})a,
{\quad}
[L \ _{\lambda} \ {\bar{a}}] = ({\partial} + ({\Delta}_{a} - \frac{1}{2}){\lambda}){\bar{a}},
\end{align*}
where ${\Delta}_{a}=1-\omega$ for $a \in \mathfrak{g}_{\omega}$, one can reproduce the remaining ${\lambda}$-brackets of the superconformal current algebra as follows
\begin{align*}
[G \ _{\lambda} \ a] = ({\partial} + (2{\Delta}_{a} - 1){\lambda}){\bar{a}},
{\quad}
[G \ _{\lambda} \ {\bar{a}}] = a,
\end{align*}
from the SUSY extension formulas in (SEF1) and (SEF2). Note that ${\Delta}_{[a, b]} = {\Delta}_{a} + {\Delta}_{b} - 1$ for all $a, b \in \mathfrak{g}$, since $\omega_{[a,b]}= \omega_a+ \omega_b$.
Then, one can also check that ${\textup{SVir}} \oplus {\textup{Cur}\mathfrak{g}}_{N=1}$ is indeed a Lie conformal algebra. For example, the following equality holds:
\begin{align*}
[G \ _\lambda \ [a \ _\gamma \ b]] 
&= [G \ _\lambda \ [a, b]] 
= ({\partial} + (2{\Delta}_{[a, b]} - 1){\lambda}){\overline{[a, b]}} 
= ({\partial} + (2{\Delta}_{a} + 2{\Delta}_{b} - 3){\lambda}){\overline{[a, b]}}
\\
&= ((2{\Delta}_{a} - 2){\lambda} - {\gamma}){\overline{[a, b]}} + ({\partial} + (2{\Delta}_{b} - 1){\lambda} + {\gamma}){\overline{[a, b]}}
\\
&= ((2{\Delta}_{a} - 2){\lambda} - {\gamma})[\bar{a} \ _{\lambda + \gamma} \ b] + (-1)^{p(a)}({\partial} + (2{\Delta}_{b} - 1){\lambda} + {\gamma})[a \ _{\gamma} \ \bar{b}]
\\
&= [({\partial} + (2{\Delta}_{a} - 1){\lambda}){\bar{a}} \ _{\lambda + \gamma} \ b] + (-1)^{p(a)}[a \ _{\gamma} \ ({\partial} + (2{\Delta}_{b} - 1){\lambda}){\bar{b}}]
\\
&= [[G \ _{\lambda} \ a] \ _{\lambda + \gamma} \ b] + (-1)^{p(a)}[a \ _{\gamma} \ [G \ _{\lambda} \ b]].
\end{align*}
Hence, $V({\textup{SVir}} \oplus {\textup{Cur}\mathfrak{g}}_{N=1}) / \left< C-c\right> \simeq {\textup{SVir}}^{c} \otimes {V}^{0}_{N=1}(\mathfrak{g})$ is an $N = 1$ superconformal vertex algebra where the odd derivation $D$ is given by $DG = 2L$ and $D{\bar{a}} = {a}$.
\end{remark}

\vskip 1mm

Let $\mathfrak{g}$ be a basic Lie superalgebra. Recall that for $V^{k}_{N=1}(\mathfrak{g})$ with $k\neq 0$, any element $\bar{a} \in \bar{\mathfrak{g}}$ is primary of conformal weight $\frac{1}{2}$ with respect to the Kac-Todorov superconformal vector ${\tau}^{\mathfrak{g}}$. Hence $a:=D\bar{a}\in D(\bar{\mathfrak{g}})$ has the conformal weight $1$. 
In addition, for $h$ in the Cartan subalgebra of $\mathfrak{g}$, the element ${\tau}^{\mathfrak{g}}_{h}$ in \eqref{eq:shifted KT} yields another conformal structure induced from the eigenspace of $\textup{ad} h$.
Similarly, in Remark \ref{remark:superconformal current}, one can construct an $N = 1$ SUSY extension of $V^0(\mathfrak{g})$ such that every element $a\in \mathfrak{g}$ has conformal weight $1$ by taking
$\Omega=\{0\}$. Furthermore, by letting $\Omega$ to be the grading of $\mathfrak{g}$ defined by $\textup{ad} h$, $V^0(\mathfrak{g})$ can be extended to an $N = 1$ superconformal vertex algebra with another conformal weight decomposition. In this case, the conformal weight of $a\in \mathfrak{g}$ is the same as the conformal weight of $a$ in $V^{k}_{N=1}(\mathfrak{g})$ via the superconformal vector ${\tau}^{\mathfrak{g}}_{h}$.

\vskip 6mm

\section{Generalizations}\label{section6}
\setcounter{equation}{0}

The goal of this section is to generalize the discussions in the Section \ref{section5}. Similar to the $N = 1$ case, we say that an $N = n$ SUSY vertex algebra $V$ has an \textit{$N = n' > n$ supersymmetric extension} if there is a differential algebra $W$ such that $V \otimes W$ has an $N = n'$ SUSY vertex algebra structure, containing the original vertex algebra $V$ as an $N = n$ SUSY vertex subalgebra.

\vskip 1mm

Note that we can also impose an $N = 2$ SUSY structure on a vertex algebra $V$, if it has two odd derivations $D^{1}$, $D^{2}$ such that $[D^{i}, D^{j}] = 2{\delta}_{ij}{\partial}$, and the equation \eqref{eq5.1} holds for each $D^{i}$. Then the $\Lambda$-bracket:
\begin{equation}\label{eq:N=2 Lambda bracket}
[a \ _{\Lambda} \ b] = [D^{2}D^{1}a \ _{\lambda} \ b] - {\chi}^{1}[D^{2}a \ _{\lambda} \ b] + {\chi}^{2}[D^{1}a \ _{\lambda} \ b] - {\chi}^{1}{\chi}^{2}[a \ _{\lambda} \ b]
\end{equation}
gives an $N = 2$ SUSY vertex algebra structure on $V$. The proof is similar to the proof of the $N = 1$ case. For example, the non-commutative Wick formula of $N = 2$ SUSY vertex algebras holds since:
\begin{align*}
[a \ _{\Lambda} \ bc] &= [{D}^{2}{D}^{1}a \ _{\lambda} \ bc] - {\chi}^{1}[{D}^{2}a \ _{\lambda} \ bc] + {\chi}^{2}[{D}^{1}a \ _{\lambda} \ bc] - {\chi}^{1}{\chi}^{2}[a \ _{\lambda} \ bc]
\\
&= ([{D}^{2}{D}^{1}a \ _{\lambda} \ b]c + (-1)^{p(a)p(b)}b[{D}^{2}{D}^{1}a \ _{\lambda} \ c] + \int_{0}^{\lambda} [[{D}^{2}{D}^{1}a \ _{\lambda} \ b] \ _{\gamma} \ c] \ d\gamma)
\\
&\quad - {\chi}^{1}([{D}^{2}a \ _{\lambda} \ b]c + (-1)^{(p(a)+1)p(b)}b[{D}^{2}a \ _{\lambda} \ c] + \int_{0}^{\lambda} [[{D}^{2}a \ _{\lambda} \ b] \ _{\gamma} \ c] \ d\gamma)
\\
&\quad + {\chi}^{2}([{D}^{1}a \ _{\lambda} \ b]c + (-1)^{(p(a)+1)p(b)}b[{D}^{1}a \ _{\lambda} \ c] + \int_{0}^{\lambda} [[{D}^{1}a \ _{\lambda} \ b] \ _{\gamma} \ c] \ d\gamma)
\\
&\quad - {\chi}^{1}{\chi}^{2}([a \ _{\lambda} \ b]c + (-1)^{p(a)p(b)}b[a \ _{\lambda} \ c] + \int_{0}^{\lambda} [[a \ _{\lambda} \ b] \ _{\gamma} \ c] \ d\gamma)
\\
&= [a \ _{\Lambda} \ b]c + (-1)^{p(a)p(b)}b[a \ _{\Lambda} \ c]
+ \int_{0}^{\Lambda} [[a \ _{\Lambda} \ b] \ _{\Gamma} \ c] \ d\Gamma.
\end{align*}

\vskip 1mm

Similarly, in the case of $N = n$, we know that $n$ odd derivations of the normally ordered product and the $\lambda$-brackets, denoted as ${D}^{1}$, $\cdots$, ${D}^{n}$, and satisfying $[D^{i}, D^{j}] = 2{\delta}_{ij}{\partial}$, also provide an $N = n$ SUSY vertex algebra structure. For these $n$ odd derivations satisfying the conditions described above, we denote by ${\textup{Der}}_{N = n}(V)$ the set of $n$-tuples of the form $({D}^{1}, \cdots, {D}^{n})$, i.e. the set of all \textit{$N = n$ SUSY structures} of $V$. Then, in the following proposition, we can consider the complex orthogonal group action on the set of $N = n$ SUSY structures of a vertex algebra. Indeed, the orthogonal groups were studied as the automorphism groups of the $N = 2, 3, 4$ superconformal algebras in \cite{SS87}.

\begin{proposition}\label{proposition: orthogonal group action}
Let $V$ be a vertex algebra. Then ${\textup{Der}}_{N = n}(V)$ admits an action of the orthogonal group in dimension $n$.
\end{proposition}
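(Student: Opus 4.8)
The plan is to realize the action by a linear change of the odd derivations, and to observe that the defining constraints of ${\textup{Der}}_{N=n}(V)$ split into ones that are stable under \emph{any} invertible change of basis and a single quadratic relation that is stable under exactly the orthogonal group. Concretely, given a tuple $(D^{1},\dots,D^{n}) \in {\textup{Der}}_{N=n}(V)$ and an orthogonal matrix $O = (O_{ij})_{1 \le i,j \le n} \in O(n,\mathbb{C})$, I would set
\[
\bigl(O \cdot (D^{1},\dots,D^{n})\bigr)^{i} := \sum_{j=1}^{n} O_{ij}\, D^{j}, \qquad i = 1,\dots,n,
\]
and abbreviate $\widetilde{D}^{i} := \sum_{j} O_{ij} D^{j}$.

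First I would check that $(\widetilde{D}^{1},\dots,\widetilde{D}^{n})$ again belongs to ${\textup{Der}}_{N=n}(V)$. Each $\widetilde{D}^{i}$ is odd and, being a $\mathbb{C}$-linear combination of the $D^{j}$, is an odd derivation of the normally ordered product and an odd derivation of the $\lambda$-bracket, because the derivation identity for a bilinear product and the identity \eqref{eq5.1} are both linear in the operator. The only relation requiring an actual computation is $[\widetilde{D}^{i},\widetilde{D}^{j}] = 2\delta_{ij}\partial$; expanding the supercommutator bilinearly (the $O_{ij}$ are even scalars, so no signs arise) and using $[D^{k},D^{l}] = 2\delta_{kl}\partial$ gives
\[
[\widetilde{D}^{i},\widetilde{D}^{j}] = \sum_{k,l} O_{ik} O_{jl}\, [D^{k},D^{l}] = 2\partial \sum_{k} O_{ik} O_{jk} = 2\partial\,(OO^{t})_{ij} = 2\delta_{ij}\partial,
\]
where the last equality is precisely the orthogonality $OO^{t} = I$. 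That the tuple $(\widetilde{D}^{1},\dots,\widetilde{D}^{n})$ then endows $V$ with an $N = n$ SUSY vertex algebra structure is the criterion recalled just before the statement (the $N=1$ and $N=2$ computations given above being representative of the general $N=n$ argument), so nothing further is needed there; in particular $[\partial,\widetilde{D}^{i}] = 0$ follows automatically from $(\widetilde{D}^{i})^{2} = \partial$ and the graded Leibniz rule for the commutator.

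It remains to verify the group-action axioms. The identity matrix acts trivially by construction, and for $O, O' \in O(n,\mathbb{C})$ associativity of matrix multiplication gives
\[
\bigl(O' \cdot (O \cdot D)\bigr)^{i} = \sum_{j} O'_{ij} \sum_{k} O_{jk} D^{k} = \sum_{k} (O'O)_{ik} D^{k} = \bigl((O'O) \cdot D\bigr)^{i},
\]
so the assignment is a left action of $O(n,\mathbb{C})$ on ${\textup{Der}}_{N=n}(V)$. I do not expect a genuine obstacle: the content of the statement is the dichotomy between the linear defining conditions of ${\textup{Der}}_{N=n}(V)$, which any element of $GL(n,\mathbb{C})$ preserves, and the quadratic relation $[D^{i},D^{j}] = 2\delta_{ij}\partial$, whose preservation is exactly what singles out $O(n,\mathbb{C})$ inside $GL(n,\mathbb{C})$.
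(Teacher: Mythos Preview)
Your argument is correct and matches the paper's approach exactly: the paper also defines the action by $A \cdot (D^{1},\dots,D^{n}) := (\sum_{i} A_{1i}D^{i},\dots,\sum_{i} A_{ni}D^{i})$ and then simply states that the result follows from direct calculations. Your write-up in fact supplies more detail than the paper does, spelling out why the derivation conditions are linear and why the supercommutator relation singles out the orthogonal group.
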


\begin{proof}
Let ${\textbf{O}}_{n}(\mathbb{C})$ be the complex orthogonal group in dimension $n$. If we define an ${\textbf{O}}_{n}(\mathbb{C})$-action by
\begin{align*}
A \cdot ({D}^{1}, \cdots, {D}^{n}) := (\sum_{i = 1}^{n} A_{1i}{D}^{i}, \cdots, \sum_{i = 1}^{n} A_{ni}{D}^{i}),
\end{align*}
for $A \in {\textbf{O}}_{n}(\mathbb{C})$ and $({D}^{1}, \cdots, {D}^{n}) \in {\textup{Der}}_{N = n}(V)$, then the result follows from direct calculations.
\end{proof}

\vskip 1mm

Note that a vertex algebra is called an {\it $N = 2$ superconformal vertex algebra} if it has two odd vectors ${G}^{\pm}$ satisfying the $N = 2$ super-Virasoro relation \eqref{eq:N=2 superVirasoro relation} with a conformal vector $L = {\frac{1}{2}}({{G}^{+}}_{(0)}{G}^{-} + {{G}^{-}}_{(0)}{G}^{+})$ and $J = {{G}^{+}}_{(1)}{G}^{-}$. Then, from Remark 2.8 of \cite{HK07} (see also Remark 1.3 of \cite{Barron10}), an $N = 2$ superconformal vertex algebra has an $N = 2$ SUSY vertex algebra structure by taking ${D}^{1} = ({\mu}{G}^{+} + {\mu}^{-1}{G}^{-})_{(0)}$ and ${D}^{2} = {\pm}{\sqrt{-1}}({\mu}{G}^{+} - {\mu}^{-1}{G}^{-})_{(0)}$ for ${\mu} \in {\mathbb{C}}^{*}$. This ${\mathbb{Z}} \slash {2\mathbb{Z}} \cross {\mathbb{C}}^{*} \simeq {\mathbb{Z}} \slash {2\mathbb{Z}} \cross {\textbf{SO}}_{2}(\mathbb{C})$ family of $N = 2$ SUSY structures is derived from the automorphism group of the $N = 2$ super-Virasoro vertex algebra, which preserve the conformal vector. Therefore, Proposition \ref{proposition: orthogonal group action} can be viewed as a generalization of the relationships between SUSY vertex algebra structures induced from the superconformal structures.

\vskip 1mm

We will now generalize the result from Section \ref{section5}. Recall that we have shown that there exists an $N = 1$ SUSY extension of any universal enveloping vertex algebras. Along with the result of Theorem \ref{theorem:N=1 extension}, we can also consider the following question, which will be proven in Theorem \ref{theorem:N=2 extension}.

\begin{question} \label{question:N=1 to N=2}
\rm
For a given $N = 1$ supersymmetric Lie conformal algebra ${R}_{N = 1}$, can we find an $N = 2$ supersymmetric extension of the $N = 1$ supersymmetric vertex algebra $V({R}_{N = 1})$?
\end{question}

\vskip 1mm

In order to answer to Question \ref{question:N=1 to N=2}, we aim to find conditions to assign an $N=2$ supersymmetry on the ${\mathbb{C}}[{\partial}]$-module:
\begin{equation}\label{eq:sec6, C[partial]-module}
\widetilde{R} = {\mathbb{C}}[{\partial}] \otimes {\textup{span}}_{\mathbb{C}}\{ {u}_{i}, {\bar{u}}_{i}, {u}^{\circ}_{i}, {\bar{u}}^{\circ}_{i} \}_{ i \in I } \oplus E_{C},  
\end{equation}
where the set $\{ {u}_{i}, {\bar{u}}_{i}, {u}^{\circ}_{i}, {\bar{u}}^{\circ}_{i} \}_{ i \in I }$ is linearly independent and $E_C$ is the central extension part of $\widetilde{R}$ such that $\partial E_C=0$. Additionally, we assume the parity condition in (D-4) and the existence of two odd endomorphisms with properties in (D-5).

\begin{enumerate}[]
\item (D-4) \ For the generators of $\widetilde{R}$, their parities satisfy $p({u}_{i}) = p({\bar{u}}^{\circ}_{i}) = 1 - p({\bar{u}}_{i}) = 1 - p({u}^{\circ}_{i})$ for any $i \ {\in} \ I$.
\item (D-5) \ ${D}^{1}$, ${D}^{2}$ are odd endomorphisms on $\widetilde{R}$ such that $[D^{i}, D^{j}] = 2{\delta}_{ij}{\partial}$, and satisfy the following diagram:
\[
\begin{tikzcd}[style={font=\small}]
& {\bar{u}}_{i} \arrow{rr}{{D}^{1}} & & {u}_{i} \\
{\bar{u}}^{\circ}_{i} \arrow{rr}[swap]{-{D}^{1}} \arrow{ru}{{D}^{2}} & & {u}^{\circ}_{i} \arrow{ru}[swap]{{D}^{2}}
\end{tikzcd}
\]
\noindent for each $i \in I$, i.e. ${D}^{1}({\bar{u}}_{i}) = {u}_{i}$, ${D}^{1}({\bar{u}}^{\circ}_{i}) = -{u}^{\circ}_{i}$, ${D}^{2}({u}^{\circ}_{i}) = {u}_{i}$, ${D}^{2}({\bar{u}}^{\circ}_{i}) = {\bar{u}}_{i}$, and $D^1(E_{C})=D^2(E_{C}) = 0$.
\end{enumerate}

\begin{proposition}\label{proposition:N=2 extension strongly generated}
Let ${\widetilde{R}}$ be the ${\mathbb{C}}[{\partial}]$-module in \eqref{eq:sec6, C[partial]-module} satisfying ${\textup{(D-4)}}$, and let ${D}^{1}$ and ${D}^{2}$ be given by the data in ${\textup{(D-5)}}$. Suppose the ${\lambda}$-brackets between the elements in $\{{u}_{i}, {\bar{u}}_{i}, {u}^{\circ}_{i}, {\bar{u}}^{\circ}_{i}\}_{i {\in} I}$ are given so that ${\widetilde{R}}$ is a Lie conformal algebra, and each of the following pairs
\begin{align}\label{eq:N=2 SUSY generator}
(\{{\bar{u}}_{i}, {\bar{u}}^{\circ}_{i}\}_{i {\in} I}, {D}^{1}),
{\quad}
(\{{u}^{\circ}_{i}, {\bar{u}}^{\circ}_{i}\}_{i {\in} I}, {D}^{2}),
\end{align}
satisfies the SUSY extension formulas. Then $V(\widetilde{R})$ is an $N = 2$ SUSY vertex algebra with the odd derivations ${D}^{1}$ and ${D}^{2}$.
\end{proposition}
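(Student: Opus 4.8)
The plan is to reduce the statement to the characterization of $N = 2$ SUSY vertex algebras recorded at the beginning of this section: it suffices to show that $D^{1}$ and $D^{2}$ extend to odd derivations of the normally ordered product on $V(\widetilde{R})$ which satisfy $[D^{i}, D^{j}] = 2{\delta}_{ij}{\partial}$ and the derivation property \eqref{eq5.1} with respect to the ${\lambda}$-bracket; once this is known, the ${\Lambda}$-bracket \eqref{eq:N=2 Lambda bracket} automatically equips $V(\widetilde{R})$ with an $N = 2$ SUSY vertex algebra structure, and matching the ${\chi}^{1}{\chi}^{2}$-coefficient shows that the underlying vertex algebra recovered via $a_{(m|11)}b$ is the given $V(\widetilde{R})$, with $D^{1}$ and $D^{2}$ as its SUSY derivations. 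So the argument splits into producing the two derivations on $V(\widetilde{R})$ and then checking the two structural identities.

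For the first and easier point, since $V(\widetilde{R})$ is freely generated by $\widetilde{R}$ (Proposition \ref{proposition:universal VA}), the maps $D^{1}, D^{2}$ prescribed on the generators in ${\textup{(D-5)}}$ extend uniquely to odd derivations of the normally ordered product. The operators $[D^{i}, D^{j}]$ and ${\partial}$ are even derivations of the normally ordered product that agree on the generators of $V(\widetilde{R})$ by ${\textup{(D-5)}}$, hence they agree everywhere, giving $[D^{i}, D^{j}] = 2{\delta}_{ij}{\partial}$ on $V(\widetilde{R})$.

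The substantive step is showing each $D^{i}$ is a derivation of the ${\lambda}$-bracket, and here I would run the argument of Proposition \ref{proposition:N=1 extension strongly generated} once for each $i$. First one establishes \eqref{eq5.1} on $\widetilde{R}$: set $S^{1} = \{{\bar{u}}_{i}, {\bar{u}}^{\circ}_{i}\}_{i \in I}$ and $S^{2} = \{{u}^{\circ}_{i}, {\bar{u}}^{\circ}_{i}\}_{i \in I}$, so that by ${\textup{(D-5)}}$ one has $D^{1}(S^{1}) = \{{u}_{i}, -{u}^{\circ}_{i}\}_{i \in I}$ and $D^{2}(S^{2}) = \{{u}_{i}, {\bar{u}}_{i}\}_{i \in I}$; consequently, for each $i$ the set $S^{i} \cup D^{i}(S^{i})$ $\mathbb{C}[{\partial}]$-spans all the generators of $\widetilde{R}$ (the central part $E_{C}$ being annihilated by every bracket and by every $D^{i}$). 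Since the pair $(S^{i}, D^{i})$ satisfies the SUSY extension formulas by hypothesis \eqref{eq:N=2 SUSY generator}, the identities \eqref{eq:D-derivation wrt lambda} show that $D^{i}$ acts as an odd derivation on each of the brackets $[{\bar{a}} \ _{\lambda} \ {\bar{b}}]$, $[D^{i}{\bar{a}} \ _{\lambda} \ {\bar{b}}]$, $[{\bar{a}} \ _{\lambda} \ D^{i}{\bar{b}}]$ and $[D^{i}{\bar{a}} \ _{\lambda} \ D^{i}{\bar{b}}]$ for ${\bar{a}}, {\bar{b}} \in S^{i}$; expanding an arbitrary ${\lambda}$-bracket of elements of $\widetilde{R}$ by ${\mathbb{C}}[{\partial}]$-bilinearity and sesquilinearity into brackets of this form (and using $[D^{i}, {\partial}] = 0$, which follows from $(D^{i})^{2} = {\partial}$, and $[D^{i}, {\lambda}] = 0$) then yields \eqref{eq5.1} for $D^{i}$ on all of $\widetilde{R}$. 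The passage from $\widetilde{R}$ to $V(\widetilde{R})$ is the same induction on the length of normally ordered products, using the non-commutative Wick formula, that appears in the proof of Proposition \ref{proposition:N=1 extension strongly generated}, applied separately to $D^{1}$ and to $D^{2}$; invoking the $N = 2$ characterization then finishes the proof.

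I expect the only real obstacle to be organizational rather than computational: one must check that two \emph{independent} copies of the SUSY extension formulas --- one attached to $(S^{1}, D^{1})$ and one to $(S^{2}, D^{2})$ --- are genuinely enough to force both $D^{i}$ to be ${\lambda}$-bracket derivations on all of $\widetilde{R}$, which rests on the (straightforward but essential) observation above that $S^{i} \cup D^{i}(S^{i})$ exhausts the generators for each $i$, and one has to carry the signs coming from $D^{1}({\bar{u}}^{\circ}_{i}) = -{u}^{\circ}_{i}$. No further compatibility between $D^{1}$ and $D^{2}$ needs to be imposed by hand, since $D^{1}D^{2} + D^{2}D^{1} = 0$ is already encoded in the commuting diagram of ${\textup{(D-5)}}$.
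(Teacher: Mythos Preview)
Your proposal is correct and follows essentially the same approach as the paper's proof: reduce to showing that each $D^{i}$ is an odd derivation of the $\lambda$-bracket on $V(\widetilde{R})$ by invoking the argument of Proposition \ref{proposition:N=1 extension strongly generated} separately for $i=1,2$, and then apply the $N=2$ characterization via \eqref{eq:N=2 Lambda bracket}. The paper's proof is terser---it simply cites the proof of Proposition \ref{proposition:N=1 extension strongly generated} and concludes---whereas you spell out why $S^{i}\cup D^{i}(S^{i})$ exhausts the generators and why the commutator relations hold globally, but the logical content is the same.
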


\begin{proof}
By the proof of Proposition \ref{proposition:N=1 extension strongly generated}, we know that each of ${D}^{1}$ and ${D}^{2}$ acts as an odd derivation of the ${\lambda}$-brackets on $V({\widetilde{R}})$. Hence the universal enveloping vertex algebra $V({\widetilde{R}})$ is an $N = 2$ SUSY vertex algebra via the ${\Lambda}$-brackets in \eqref{eq:N=2 Lambda bracket}.
\end{proof}

\vskip 1mm

Note that the SUSY extension formulas for the pair $(\{{\bar{u}}_{i}, {\bar{u}}^{\circ}_{i}\}_{i {\in} I}, {D}^{1})$ in \eqref{eq:N=2 SUSY generator} can be written explicitly as the following equalities:
\begin{align*}
[{u}^{\circ}_{i} \ _{\lambda} \ {u}^{\circ}_{j}] &= (-1)^{p({u}_{i})}(D^{1}[{u}^{\circ}_{i} \ _{\lambda} \ {\bar{u}}^{\circ}_{j}] - {\lambda}[{\bar{u}}^{\circ}_{i} \ _{\lambda} \ {\bar{u}}^{\circ}_{j}]),
&\quad
[{\bar{u}}^{\circ}_{i} \ _{\lambda} \ {u}^{\circ}_{j}] &= (-1)^{p({u}_{i})+1}([{u}^{\circ}_{i} \ _{\lambda} \ {\bar{u}}^{\circ}_{j}] + D^{1}[{\bar{u}}^{\circ}_{i} \ _{\lambda} \ {\bar{u}}^{\circ}_{j}]),
\\
[{u}_{i} \ _{\lambda} \ {u}^{\circ}_{j}] &= (-1)^{p({u}_{i})+1}(D^{1}[{u}_{i} \ _{\lambda} \ {\bar{u}}^{\circ}_{j}] + {\lambda}[{\bar{u}}_{i} \ _{\lambda} \ {\bar{u}}^{\circ}_{j}]),
&\quad
[{\bar{u}}_{i} \ _{\lambda} \ {u}^{\circ}_{j}] &= (-1)^{p({u}_{i})+1}([{u}_{i} \ _{\lambda} \ {\bar{u}}^{\circ}_{j}] - D^{1}[{\bar{u}}_{i} \ _{\lambda} \ {\bar{u}}^{\circ}_{j}]),
\\
[{u}_{i} \ _{\lambda} \ {u}_{j}] &= (-1)^{p({u}_{i})}({D}^{1}[{u}_{i} \ _{\lambda} \ {\bar{u}}_{j}] + {\lambda}[{\bar{u}}_{i} \ _{\lambda} \ {\bar{u}}_{j}]),
&\quad
[{\bar{u}}_{i} \ _{\lambda} \ {u}_{j}] &= (-1)^{p({u}_{i})}([{u}_{i} \ _{\lambda} \ {\bar{u}}_{j}] - {D}^{1}[{\bar{u}}_{i} \ _{\lambda} \ {\bar{u}}_{j}]),
\end{align*}
and the SUSY extension formulas for the second pair $(\{{u}^{\circ}_{i}, {\bar{u}}^{\circ}_{i}\}_{i {\in} I}, {D}^{2})$ are given by the following equalities:
\begin{align*}
[{\bar{u}}_{i} \ _{\lambda} \ {\bar{u}}_{j}] &= (-1)^{p({u}_{i})+1}(D^{2}[{\bar{u}}_{i} \ _{\lambda} \ {\bar{u}}^{\circ}_{j}] + {\lambda}[{\bar{u}}^{\circ}_{i} \ _{\lambda} \ {\bar{u}}^{\circ}_{j}]),
&\quad
[{\bar{u}}^{\circ}_{i} \ _{\lambda} \ {\bar{u}}_{j}] &= (-1)^{p({u}_{i})+1}([{\bar{u}}_{i} \ _{\lambda} \ {\bar{u}}^{\circ}_{j}] - D^{2}[{\bar{u}}^{\circ}_{i} \ _{\lambda} \ {\bar{u}}^{\circ}_{j}]),
\\
[{u}_{i} \ _{\lambda} \ {\bar{u}}_{j}] &= (-1)^{p({u}_{i})}(D^{2}[{u}_{i} \ _{\lambda} \ {\bar{u}}^{\circ}_{j}] + {\lambda}[{u}^{\circ}_{i} \ _{\lambda} \ {\bar{u}}^{\circ}_{j}]),
&\quad
[{u}^{\circ}_{i} \ _{\lambda} \ {\bar{u}}_{j}] &= (-1)^{p({u}_{i})}([{u}_{i} \ _{\lambda} \ {\bar{u}}^{\circ}_{j}] - D^{2}[{u}^{\circ}_{i} \ _{\lambda} \ {\bar{u}}^{\circ}_{j}]),
\\
[{u}_{i} \ _{\lambda} \ {u}_{j}] &= (-1)^{p({u}_{i})}({D}^{2}[{u}_{i} \ _{\lambda} \ {u}^{\circ}_{j}] + {\lambda}[{u}^{\circ}_{i} \ _{\lambda} \ {u}^{\circ}_{j}]),
&\quad
[{u}^{\circ}_{i} \ _{\lambda} \ {u}_{j}] &= (-1)^{p({u}_{i})}([{u}_{i} \ _{\lambda} \ {u}^{\circ}_{j}] - {D}^{2}[{u}^{\circ}_{i} \ _{\lambda} \ {u}^{\circ}_{j}]).
\end{align*}

\vskip 1mm

Proposition \ref{proposition:N=2 extension strongly generated} tells that, for a given Lie conformal algebra $R$, if we can find a Lie conformal algebra $\widetilde{R}$ that contains $R$ as a Lie conformal subalgebra and satisfies the conditions of the proposition, then we can conclude that $V(R)$ has an $N = 2$ SUSY extension.

\vskip 1mm

\begin{example}[$N=2$ SUSY structure of $N=2$ super-Virasoro vertex algebra]\label{example:N=2 SUSY of N=2 superVirasoro}
\rm From Example 2.9 of \cite{HK07}, we already know that the $N = 2$ super-Virasoro vertex algebra (see also Examples \ref{example:superVirasoro VA} and \ref{example:SUSY superVirasoro VA}) has an $N = 2$ SUSY vertex algebra structure as in the diagram:
\[
\begin{tikzcd}[row sep=scriptsize, column sep=scriptsize, style={font=\small}]
& {\sqrt{-1}}({G}^{+} + {G}^{-}) \arrow[rr, "{D}^{1}"] & & 2{\sqrt{-1}}{L}
\\
{J} \arrow[rr, swap, "-{D}^{1}"] \arrow[ru, "{D}^{2}"] & & {G}^{+} - {G}^{-} \arrow[ru, swap, "{D}^{2}"]
\end{tikzcd}
\]
Indeed, if we choose a generating set of the $N = 2$ super-Virasoro vertex algebra $V(\widetilde{R})/\left< C - c \right>$ as
\begin{align*}
{\bar{u}}^{\circ}_{1} = {J}, \quad {u}^{\circ}_{1} = {G}^{+} - {G}^{-}, \quad {\bar{u}}_{1} = {\sqrt{-1}}({G}^{+} + {G}^{-}), \quad {u}_{1} = 2{\sqrt{-1}}{L},
\end{align*}
where $\widetilde{R} = {\mathbb{C}}[{\partial}] \otimes {\textup{span}}_{\mathbb{C}}\{ {u}_{1}, {\bar{u}}_{1}, {u}^{\circ}_{1}, {\bar{u}}^{\circ}_{1} \} \oplus {\mathbb{C}}C$, then  the SUSY extension formulas for the pairs in (\ref{eq:N=2 SUSY generator}) can be checked by direct computations.
\end{example}

\vskip 1mm

In the consecutive examples (Example \ref{ex:N=2, extended bcbetagamma}, \ref{ex:N=2, extended bcbetagamma-2} and \ref{ex:N=2, extended bcbetagamma-3}), we investigate several $N = 2$ SUSY vertex algebra structures of the $N = 2$ SUSY extension of the $bc$–${\beta}{\gamma}$ system.
The subsequent example is a construction of an $N = 2$ SUSY extension of the $bc$–${\beta}{\gamma}$ system via Proposition \ref{proposition:N=2 extension strongly generated}. 

\begin{example}[$N = 2$ SUSY extension of $bc$–${\beta}{\gamma}$ system] \label{ex:N=2, extended bcbetagamma}
\rm
Let ${\widetilde{R}} = {\mathbb{C}}[{\partial}] \otimes {\textup{span}}_{\mathbb{C}}\{ {u}_{i}, {\bar{u}}_{i}, {u}^{\circ}_{i}, {\bar{u}}^{\circ}_{i} \}_{ i \in \{ 1, 2 \} } \oplus {\mathbb{C}}C$ be a ${\mathbb{C}[{\partial}]}$-module where $C$ is a central element satisfying ${\partial}C = 0$ and the generators are given by
\[
{u}_{1} = {\beta},
{\quad}
{\bar{u}}_{1} = b,
{\quad}
{u}^{\circ}_{1} = -a,
{\quad}
{\bar{u}}^{\circ}_{1} = {\alpha},
{\quad}
{u}_{2} = {\delta},
{\quad}
{\bar{u}}_{2} = d,
{\quad}
{u}^{\circ}_{2} = -c,
{\quad}
{\bar{u}}^{\circ}_{2} = {\gamma},
\]
where $\alpha$, $\beta$, $\gamma$, $\delta$ are even elements and $a$, $b$, $c$, $d$ are odd elements.
Define the odd endomorphisms ${D}^{1}$ and ${D}^{2}$ on $\widetilde{R}$ as in the diagram below
\[
\begin{tikzcd}[row sep=scriptsize, column sep=scriptsize, style={font=\small}]
& {b} \arrow{rr}{{D}^{1}} & & {\beta} &
& {d} \arrow{rr}{{D}^{1}} & & {\delta}
\\
{\alpha} \arrow{rr}[swap]{-{D}^{1}} \arrow{ru}{{D}^{2}} & & {-a} \arrow{ru}[swap]{{D}^{2}} & &
{\gamma} \arrow{rr}[swap]{-{D}^{1}} \arrow{ru}{{D}^{2}} & & {-c} \arrow{ru}[swap]{{D}^{2}}
\end{tikzcd}
\]
\noindent with $[D^{i}, D^{j}] = 2{\delta}_{ij}{\partial}$ and ${D}^{1}C = {D}^{2}C = 0$.
For the Lie conformal algebra structure of $\widetilde{R}$, define the non-zero $\lambda$-brackets of $\widetilde{R}$ as
\begin{gather*}
[{\beta} \ _{\lambda} \ {\gamma}] = C, \quad [{\gamma} \ _{\lambda} \ {\beta}] = -C, \quad [b \ _{\lambda} \ c] = C, \quad [c \ _{\lambda} \ b] = C,
\\
[{\delta} \ _{\lambda} \ {\alpha}] = -C, \quad [{\alpha} \ _{\lambda} \ {\delta}] = C, \quad [d \ _{\lambda} \ a] = -C, \quad [a \ _{\lambda} \ d] = -C.
\end{gather*}
Then, we can check each of the pairs $(\{{\bar{u}}_{i}, {\bar{u}}^{\circ}_{i}\}_{i \in \{ 1, 2 \}}, {D}^{1})$ and $(\{{u}^{\circ}_{i}, {\bar{u}}^{\circ}_{i}\}_{i \in \{ 1, 2 \}}, {D}^{2})$ satisfies the SUSY extension formulas.
Hence, by Proposition \ref{proposition:N=2 extension strongly generated}, $V(\widetilde{R})$ is an $N = 2$ SUSY extension of the $bc$–${\beta}{\gamma}$ system where the non-zero ${\Lambda}$-bracket of the generators as an $N = 2$ SUSY vertex algebra is
\begin{align*}
[{\alpha} \ _{\Lambda} \ {\gamma}] = -C,
\end{align*}
and, we call the vertex algebra $V(\widetilde{R})$ or its quotient vertex algebra $V(\widetilde{R})/\left< C - 1 \right>$ the {\it extended $bc$–${\beta}{\gamma}$ system} in the rest of the paper.
\end{example}

In the following examples, we describe other $N = 2$ SUSY structures derived from $N = 2$ superconformal structures in the extended $bc$–${\beta}{\gamma}$ system (see also \cites{EHKZ09, Hel09}).

\begin{example}[$N = 2$ superconformal structures of the extended $bc$–${\beta}{\gamma}$ system (1)]\label{ex:N=2, extended bcbetagamma-2}
\rm
The extended $bc$–${\beta}{\gamma}$ system has an $N = 2$ superconformal structure associated to the following vectors
\begin{gather*}
{L}_{h} = {\frac{1}{2}}\big( - {c}{\partial}{b} + {\partial}{c}{b} + 2{\partial}{\gamma}{\beta} + {a}{\partial}{d} - {\partial}{a}{d} - 2{\partial}{\alpha}{\delta} \big),
\\
{G}^{+}_{h} = {\sqrt{-1}}({c}{\beta} - {a}{\delta}),
{\quad}
{G}^{-}_{h} = {\sqrt{-1}}(- {\partial}{\gamma}{b} + {\partial}{\alpha}{d}),
{\quad}
{J}_{h} = {c}{b} - {a}{d},
\end{gather*}
with central charge $6$, and the $N=2$ SUSY vertex algebra structure induced from the above $N = 2$ superconformal structure is depicted in the following picture
\[
\begin{tikzcd}[row sep=scriptsize, column sep=scriptsize, style={font=\small}]
& {-{\beta}} \arrow{rr}{{D}^{1}_{h}} & & {{\sqrt{-1}}{\partial}b} & & {-c} \arrow{rr}{{D}^{1}_{h}} & & {{\sqrt{-1}}{\partial}{\gamma}}
\\
{b} \arrow{rr}[swap]{-{D}^{1}_{h}} \arrow{ru}{{D}^{2}_{h}} & & {-{\sqrt{-1}}{\beta}} \arrow{ru}[swap]{{D}^{2}_{h}} & &
{\gamma} \arrow{rr}[swap]{-{D}^{1}_{h}} \arrow{ru}{{D}^{2}_{h}} & & {-{\sqrt{-1}}c} \arrow{ru}[swap]{{D}^{2}_{h}} & 
\\
& {-a} \arrow{rr}{{D}^{1}_{h}} & & {{\sqrt{-1}}{\partial}{\alpha}} &
& {-{\delta}} \arrow{rr}{{D}^{1}_{h}} & & {{\sqrt{-1}}{\partial}d}
\\
{\alpha} \arrow{rr}[swap]{-{D}^{1}_{h}} \arrow{ru}{{D}^{2}_{h}} & & {-{\sqrt{-1}}a} \arrow{ru}[swap]{{D}^{2}_{h}} & &
{d} \arrow{rr}[swap]{-{D}^{1}_{h}} \arrow{ru}{{D}^{2}_{h}} & & {-{\sqrt{-1}}{\delta}} \arrow{ru}[swap]{{D}^{2}_{h}}
\end{tikzcd}
\]
where ${D}^{1}_{h} = ({G}^{+}_{h} + {G}^{-}_{h})_{(0)}$ and ${D}^{2}_{h} = {\sqrt{-1}}({G}^{+}_{h} - {G}^{-}_{h})_{(0)}$. 
Observe that the non-zero ${\Lambda}$-brackets of generators are given by
\begin{align*}
[{b} \ _{\Lambda} \ {\gamma}] = {\chi}^{1} + {\sqrt{-1}}{\chi}^{2},
{\quad}
[{\alpha} \ _{\Lambda} \ {d}] = - {\chi}^{1} - {\sqrt{-1}}{\chi}^{2},
\end{align*}
and hence each of the $bc$–$\beta\gamma$ system and {\it the $ad$–$\alpha\delta$ system} (which refers to the vertex algebra generated by the four elements $a$, $d$, $\alpha$, $\delta$) is already closed under the $N=2$ SUSY vertex algebra structure, respectively. In other words, this $N = 2$ SUSY vertex algebra structure which is derived from the $N = 2$ superconformal structure associated to ${G}^{\pm}_{h}$ provides an {\it internal $N = 2$ SUSY structure} for both the $bc$–$\beta\gamma$ system and the $ad$–$\alpha\delta$ system.
Therefore, the extended $bc$–${\beta}{\gamma}$ system with  ${D}^{1}_{h}$ and ${D}^{2}_{h}$ can be understood as the product of two independent copies of the $bc$–${\beta}{\gamma}$ system as $N=2$ SUSY vertex algebras.
\end{example}

\begin{example}[$N = 2$ superconformal structures of the extended $bc$–${\beta}{\gamma}$ system (2)] \label{ex:N=2, extended bcbetagamma-3}
\rm
 We have another $N = 2$ superconformal structure of the extended $bc$–${\beta}{\gamma}$ system induced from the following vectors
\begin{gather*}
{L}_{v} = {L}_{h},
{\quad}
{G}^{+}_{v} = {d}{\beta} - {b}{\delta},
{\quad}
{G}^{-}_{v} = - {\partial}{\gamma}{a} + {\partial}{\alpha}{c},
{\quad}
{J}_{v} = -{c}{b} + {a}{d}.
\end{gather*}
The $N = 2$ SUSY structure associated to ${G}^{\pm}_{v}$ is presented in the diagram below:
\[
\begin{tikzcd}[row sep=scriptsize, column sep=scriptsize, style={font=\small}]
& {{\sqrt{-1}}b} \arrow{rr}{{D}^{1}_{v}} & & {{\sqrt{-1}}{\partial}{\alpha}} &
& {-{\sqrt{-1}}{\delta}} \arrow{rr}{{D}^{1}_{v}} & & {{\sqrt{-1}}{\partial}c}
\\ 
{\alpha} \arrow{rr}[swap]{-{D}^{1}_{v}} \arrow{ru}{{D}^{2}_{v}} & & {-b} \arrow{ru}{{D}^{2}_{v}} & &
{c} \arrow{rr}[swap]{-{D}^{1}_{v}} \arrow{ru}{{D}^{2}_{v}} & & {\delta} \arrow{ru}{{D}^{2}_{v}} & 
\\
& {-{\sqrt{-1}}{\beta}} \arrow{rr}{{D}^{1}_{v}} & & {{\sqrt{-1}}{\partial}{a}} & & {{\sqrt{-1}}d} \arrow{rr}{{D}^{1}_{v}} & & {{\sqrt{-1}}{\partial}{\gamma}}
\\
{a} \arrow{rr}[swap]{-{D}^{1}_{v}} \arrow{ru}{{D}^{2}_{v}} & & {\beta} \arrow{ru}{{D}^{2}_{v}} & &
{\gamma} \arrow{rr}[swap]{-{D}^{1}} \arrow{ru}{{D}^{2}} & & {-d} \arrow{ru}{{D}^{2}} &
\end{tikzcd}
\]
and the non-zero ${\Lambda}$-brackets of generators are
\begin{align*}
[{\alpha} \ _{\Lambda} \ {c}] = -{\sqrt{-1}}{\chi}^{1} + {\chi}^{2},
{\quad}
[{a} \ _{\Lambda} \ {\gamma}] = {\sqrt{-1}}{\chi}^{1} - {\chi}^{2}.
\end{align*}
Hence the extended $bc$–${\beta}{\gamma}$ system with ${D}^{1}_{v}$ and ${D}^{2}_{v}$ is the product of the $bc$–${\alpha}{\delta}$ system and the $ad$–$\beta\gamma$ system, each having its own internal $N = 2$ SUSY structure.
\end{example}

\vskip 1mm

We remark that each of the odd derivations ${D}^{1}$, ${D}^{2}$ in Example \ref{ex:N=2, extended bcbetagamma} can be realized as an odd derivation derived from the independent superconformal structures, with ${D}^{1} = -{\sqrt{-1}}{D}^{1}_{h} = -{D}^{2}_{h}$ for $D^1_{h}$ and $D^2_{h}$ in Example \ref{ex:N=2, extended bcbetagamma-2}, and ${D}^{2} = {D}^{1}_{v} = -{\sqrt{-1}}{D}^{2}_{v}$ for $D^1_{v}$ and $D^2_{v}$ in Example \ref{ex:N=2, extended bcbetagamma-3}.
Furthermore, there is an $N = 2$ superconformal structure of the extended $bc$–${\beta}{\gamma}$ system that simultaneously induces ${D}^{1}$ and ${D}^{2}$ (see Section 6.4 of \cite{EHKZ09}). Precisely, the $N = 2$ superconformal structure induced from the following vectors 
\begin{gather*}
{L} = {L}_{h},
{\quad}
{G}^{+} = -{\frac{{\sqrt{-1}}}{2}}({G}^{+}_{h} - {G}^{-}_{h} + {G}^{+}_{v} + {G}^{-}_{v}),
\\
{G}^{-} = -{\frac{{\sqrt{-1}}}{2}}({G}^{+}_{h} - {G}^{-}_{h} - {G}^{+}_{v} - {G}^{-}_{v}),
{\quad}
{J} = {\sqrt{-1}}({a}{c} + {b}{d}),
\end{gather*}
give rise to the $N = 2$ SUSY vertex algebra structure of the extended $bc$–${\beta}{\gamma}$ system defined in Example \ref{ex:N=2, extended bcbetagamma}.

\vskip 1mm

Now, returning to the Question \ref{question:N=1 to N=2}, let us construct an $N = 2$ SUSY extension of a universal enveloping $N = 1$ SUSY vertex algebra $V(R_{N = 1})$. For the purpose of finding $N = 2$ SUSY extensions, we use the construction in Theorem \ref{theorem:N=1 extension} and give an additional $N = 1$ supersymmetry in an independent direction from the given $N = 1$ supersymmetry of $V(R_{N = 1})$.

\begin{theorem}\label{theorem:N=2 extension}
Let ${R}_{N = 1} = {\mathbb{C}}[{D}] \otimes {\textup{span}}_{\mathbb{C}}\{ {\bar{u}}_{i} \}_{ i {\in} I } \oplus E_{C}$ be an $N = 1$ SUSY Lie conformal algebra, where the set $\{ D({\bar{u}}_{i}), {\bar{u}}_{i} \}_{ i {\in} I }$ is linearly independent and $E_C$ is the central extension part of ${R}_{N = 1}$ with ${D}{E}_{C} = 0$. Then $V({R}_{N = 1})$ has an $N = 2$ SUSY extension.
\end{theorem}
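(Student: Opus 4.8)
The plan is to mimic the construction of Theorem~\ref{theorem:N=1 extension}, but now we must produce the full data fitting the template (D-4)--(D-5) and verify the hypotheses of Proposition~\ref{proposition:N=2 extension strongly generated}. Write $R_{N=1} = \mathbb{C}[D]\otimes \operatorname{span}_{\mathbb{C}}\{\bar u_i\}_{i\in I}\oplus E_C$ and set $u_i := D\bar u_i$, so that as a plain $\mathbb{C}[\partial]$-module (with $\partial = \tfrac12[D,D]$) the underlying vertex algebra $V(R_{N=1})$ is freely generated by $\mathbb{C}[\partial]\otimes\operatorname{span}_{\mathbb{C}}\{u_i,\bar u_i\}\oplus E_C$. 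First I would form the $\mathbb{C}[\partial]$-module $\widetilde R$ as in \eqref{eq:sec6, C[partial]-module}: keep $u_i,\bar u_i$, and adjoin a fresh ``circled'' copy $u^\circ_i,\bar u^\circ_i$ with the parity rule (D-4), namely $p(u^\circ_i)=p(\bar u_i)=1-p(u_i)$ and $p(\bar u^\circ_i)=p(u_i)$; for the centre put $E_C$ together with $\bar E_C$ and $E_C^\circ$, $\bar E_C^\circ$ as needed so that $D^1,D^2$ act within the central part. Then define $D^1,D^2$ exactly by the diagram in (D-5): $D^1(\bar u_i)=u_i$, $D^1(\bar u^\circ_i)=-u^\circ_i$, $D^2(u^\circ_i)=u_i$, $D^2(\bar u^\circ_i)=\bar u_i$, extended to $\mathbb{C}[\partial]$-module maps by $\partial = \tfrac12[D^j,D^j]$, and $D^j$ zero on the central generators; one checks $[D^i,D^j]=2\delta_{ij}\partial$ on generators, hence everywhere. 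Observe that $D^1$ restricted to $\mathbb{C}[\partial]\otimes\operatorname{span}\{u_i,\bar u_i\}$ is precisely the original $D$ coming from the $N=1$ structure of $V(R_{N=1})$.

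Next I would write down the $\lambda$-bracket on $\widetilde R$. The bracket among $\{u_i,\bar u_i\}$ is the one already present on $V(R_{N=1})$ (the $N=1$ $\Lambda$-bracket expanded as $[a\ _\Lambda\ b]=[Da\ _\lambda\ b]+\chi[a\ _\lambda\ b]$ on $R_{N=1}$, restricted to the generators). The brackets involving the circled generators are forced by the two families of SUSY extension formulas in \eqref{eq:N=2 SUSY generator}, read as \emph{definitions}: using that $(\{\bar u_i^\circ,\bar u_i\}, D^2)$ must satisfy (SEF1)/(SEF2) we are led to set
\begin{align*}
[\bar u^\circ_i\ _\lambda\ \bar u^\circ_j] := 0,
\qquad
[u^\circ_i\ _\lambda\ \bar u^\circ_j]:=(-1)^{p(u_i)}\,\overline{[\bar u_i\ _\lambda\ \bar u_j]}{}^{\!\circ\text{-shifted}},
\end{align*}
and similarly the remaining brackets by the explicit formulas displayed just after Proposition~\ref{proposition:N=2 extension strongly generated}; concretely, all circled brackets are obtained by ``barring/circling'' the known brackets of $V(R_{N=1})$ together with the prescribed $D^1,D^2$-actions, in direct analogy with \eqref{eq:N=1 extension bracket}. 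The key point is that these prescriptions are \emph{consistent}: one chooses $[\bar u_i\ _\lambda\ \bar u_j]$ and $[\bar u^\circ_i\ _\lambda\ \bar u^\circ_j]=0$ as the ``free'' data and everything else is generated, so there is no overdetermination.

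Then I would verify that $\widetilde R$ with this $\lambda$-bracket is an $N=0$ Lie conformal algebra, i.e.\ check skew-symmetry and the Jacobi identity. Skew-symmetry is immediate from the defining formulas and the skew-symmetry already holding on $R_{N=1}$. The Jacobi identity is checked on triples of generators by splitting into cases according to how many circled generators appear; each case reduces, after applying the defining SEF-relations and the fact that $D^1,D^2$ act as odd derivations of the $\lambda$-bracket on $\widetilde R$ (which follows from \eqref{eq:D-derivation wrt lambda} once the SEF formulas hold), to the Jacobi identity already known in $V(R_{N=1})$ — exactly as in the sample computation in the proof of Theorem~\ref{theorem:N=1 extension}. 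Having established that $\widetilde R$ is a Lie conformal algebra and that both pairs in \eqref{eq:N=2 SUSY generator} satisfy the SUSY extension formulas by construction, Proposition~\ref{proposition:N=2 extension strongly generated} applies and gives that $V(\widetilde R)$ is an $N=2$ SUSY vertex algebra with odd derivations $D^1,D^2$. Finally, since $D^1,D^2$ restrict to the derivation $D$ on $\mathbb{C}[\partial]\otimes\operatorname{span}\{u_i,\bar u_i\}$, and since $V(\widetilde R)\cong V(R_{N=1})\otimes V(\text{circled part})$ as differential algebras, the sub–vertex algebra generated by $\{u_i,\bar u_i\}$ is $V(R_{N=1})$ together with its original $N=1$ structure; hence $V(\widetilde R)$ is an $N=2$ SUSY extension of $V(R_{N=1})$ in the sense of Section~\ref{section6}.

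I expect the main obstacle to be the \emph{consistency/well-definedness} of the $\lambda$-bracket on the circled generators: the two pairs in \eqref{eq:N=2 SUSY generator} impose two overlapping systems of SEF relations (one for $D^1$, one for $D^2$), and one must check that the brackets among $u_i,u^\circ_i$ etc.\ predicted by the $D^1$-formulas agree with those predicted by the $D^2$-formulas, and that these are all compatible with $[D^1,D^2]=0$. The clean way to handle this is to take $[\bar u_i\ _\lambda\ \bar u_j]$ (the ``doubly barred, uncircled'' bracket) and $[\bar u^\circ_i\ _\lambda\ \bar u^\circ_j]=0$ as the only free choices, define everything else by applying $D^1$ and $D^2$ through the derivation property, and then note that the whole construction is nothing but the $N=1$ extension of Theorem~\ref{theorem:N=1 extension} applied to the $N=1$ SUSY vertex algebra $V(R_{N=1})$ in the ``second'' odd direction — i.e.\ we are extending by $D^2$ on top of the existing $D^1$, and the compatibility of $D^1$ with the new brackets is exactly what the $N=1$ extension theorem guarantees when applied with its $D$ taken to be an $N=1$-SUSY derivation. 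Verifying the Jacobi identity in the mixed cases and bookkeeping the signs and parities from (D-4) is then routine but lengthy.
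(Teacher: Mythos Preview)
Your approach is correct and, by the time you reach your final paragraph, coincides with the paper's: apply Theorem~\ref{theorem:N=1 extension} to the underlying non-SUSY Lie conformal algebra of $R_{N=1}$ (generated by $\{u_i,\bar u_i\}$) to produce the circled copy with $D^2:=D^\circ$, then set $D^1$ equal to the original $D$ on uncircled generators and to the circled version of $-D$ on circled ones. One small correction: Theorem~\ref{theorem:N=1 extension} by itself does \emph{not} hand you the $D^1$-SEF for the mixed pair $(\{\bar u_i,\bar u^\circ_i\},D^1)$ --- that is the only genuine extra verification, and the paper does it by a two-line computation reducing to the SEF already satisfied by $(\{\bar u_i\},D)$ on $R_{N=1}$; once you adopt this route your anticipated ``consistency problem'' between the two overlapping SEF systems disappears, since the $\lambda$-brackets and the $D^2$-SEF come for free from Theorem~\ref{theorem:N=1 extension} and only the $D^1$-SEF on mixed brackets needs checking.
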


\begin{proof}
If we set ${u}_{i} := {D}{\bar{u}}_{i}$, then we can consider the $N = 1$ SUSY Lie conformal algebra ${R}_{N = 1}$ as a non-SUSY Lie conformal algebra ${R}_{N = 1} = {\mathbb{C}}[{\partial}] \otimes {\textup{span}}_{\mathbb{C}}\{ {u}_{i}, {\bar{u}}_{i} \}_{ i {\in} I } \oplus E_{C}$ with a SUSY generator $\{{\bar{u}}_{i}\}_{i {\in} I}$ in the $D$–direction.

By using the construction in Theorem \ref{theorem:N=1 extension}, we have a SUSY generator $\{ {u}^{\circ}_{i}, {\bar{u}}^{\circ}_{i} \}_{ i {\in} I }$ in the ${D}^{\circ}$–direction of the ${\mathbb{C}}[{\partial}]$–module ${\widetilde{R}}_{N = 1} := {{R}_{N = 1}} \oplus {R}_{N = 1}^{\, \circ}$, where ${{R}}_{N = 1}^{\, \circ} = {\mathbb{C}}[{\partial}] \otimes {\textup{span}}_{\mathbb{C}}\{ {u}^{\circ}_{i}, {\bar{u}}^{\circ}_{i} \}_{ i {\in} I } \oplus {E}^{\, \circ}_{C}$.
The remaining non-zero ${\lambda}$-brackets on ${\widetilde{R}}_{N = 1}$ are given by the following relations
\begin{align*}
[{u}^{\circ}_{i} \ _{\lambda} \ {u}_{j}] &= {[{u}_{i} \ _{\lambda} \ {u}_{j}]}^{\circ},
\quad
[{u}_{i} \ _{\lambda} \ {u}^{\circ}_{j}] = (-1)^{p({u}_{i})}{[{u}_{i} \ _{\lambda} \ {u}_{j}]}^{\circ},
\\
[{u}^{\circ}_{i} \ _{\lambda} \ {\bar{u}}_{j}] &= {[{u}_{i} \ _{\lambda} \ {\bar{u}}_{j}]}^{\circ},
\quad
[{u}_{i} \ _{\lambda} \ {\bar{u}}^{\circ}_{j}] = (-1)^{p({u}_{i})}{[{u}_{i} \ _{\lambda} \ {\bar{u}}_{j}]}^{\circ},
\\
[{\bar{u}}^{\circ}_{i} \ _{\lambda} \ {u}_{j}] &= {[{\bar{u}}_{i} \ _{\lambda} \ {u}_{j}]}^{\circ},
\quad
[{\bar{u}}_{i} \ _{\lambda} \ {u}^{\circ}_{j}] = (-1)^{p({u}_{i})+1}{[{\bar{u}}_{i} \ _{\lambda} \ {u}_{j}]}^{\circ},
\\
[{\bar{u}}^{\circ}_{i} \ _{\lambda} \ {\bar{u}}_{j}] &= {[{\bar{u}}_{i} \ _{\lambda} \ {\bar{u}}_{j}]}^{\circ},
\quad
[{\bar{u}}_{i} \ _{\lambda} \ {\bar{u}}^{\circ}_{j}] = (-1)^{p({u}_{i})+1}{[{\bar{u}}_{i} \ _{\lambda} \ {\bar{u}}_{j}]}^{\circ}.
\end{align*}
Now, to give an $N = 2$ SUSY structure on ${\widetilde{R}}_{N = 1}$, let us define odd derivations ${D}^{1}$ and ${D}^{2}$ as follows:
\begin{gather*}
{D}^{1}({{u}_{i}}) := {D}({u}_{i}),
{\quad}
{D}^{1}({\bar{u}}_{i}) := {D}({\bar{u}}_{i}),
{\quad}
{D}^{1}({{u}^{\circ}_{i}}) := (-D({u}_{i}))^{\circ},
{\quad}
{D}^{1}({\bar{u}}^{\circ}_{i}) := (-D({\bar{u}}_{i}))^{\circ},
{\quad}
{D}^{2} := {D}^{\circ}.
\end{gather*}
Then it can be verified that the pair $(\{{\bar{u}}_{i}, {\bar{u}}^{\circ}_{i}\}_{i {\in} I}, {D}^{1})$ also satisfies the SUSY extension formulas, especially the non-trivial parts can be verified through the following calculations:
\begin{align*}
[{u}_{i} \ _{\lambda} \ {u}^{\circ}_{j}] 
&= (-1)^{p({u}_{i})}[{u}_{i} \ _{\lambda} \ {u}_{j}]^{\circ} 
= ({D}[{u}_{i} \ _{\lambda} \ {\bar{u}}_{j}] + {\lambda}[{\bar{u}}_{i} \ _{\lambda} \ {\bar{u}}_{j}])^{\circ}
\\
&= ({D}[{u}_{i} \ _{\lambda} \ {\bar{u}}_{j}])^{\circ} + {\lambda}[{\bar{u}}_{i} \ _{\lambda} \ {\bar{u}}_{j}]^{\circ} 
= - {D}^{1}([{u}_{i} \ _{\lambda} \ {\bar{u}}_{j}]^{\circ}) + {\lambda}[{\bar{u}}_{i} \ _{\lambda} \ {\bar{u}}_{j}]^{\circ}
\\
&= (-1)^{p({u}_{i})+1}(D^{1}[{u}_{i} \ _{\lambda} \ {\bar{u}}^{\circ}_{j}] + {\lambda}[{\bar{u}}_{i} \ _{\lambda} \ {\bar{u}}^{\circ}_{j}]),
\\
[{\bar{u}}_{i} \ _{\lambda} \ {u}^{\circ}_{j}] 
&= (-1)^{p({u}_{i}) + 1}[{\bar{u}}_{i} \ _{\lambda} \ {u}_{j}]^{\circ} 
= - ([{u}_{i} \ _{\lambda} \ {\bar{u}}_{j}] - {D}[{\bar{u}}_{i} \ _{\lambda} \ {\bar{u}}_{j}])^{\circ}
\\
&= - [{u}_{i} \ _{\lambda} \ {\bar{u}}_{j}]^{\circ} + ({D}[{\bar{u}}_{i} \ _{\lambda} \ {\bar{u}}_{j}])^{\circ} 
= - [{u}_{i} \ _{\lambda} \ {\bar{u}}_{j}]^{\circ} - {D}^{1}([{\bar{u}}_{i} \ _{\lambda} \ {\bar{u}}_{j}]^{\circ})
\\
&= (-1)^{p({u}_{i})+1}([{u}_{i} \ _{\lambda} \ {\bar{u}}^{\circ}_{j}] - D^{1}[{\bar{u}}_{i} \ _{\lambda} \ {\bar{u}}^{\circ}_{j}]),
\end{align*}
which follow from the fact that $\{{\bar{u}}_{i}\}_{i {\in} I}$ is a SUSY generator in the $D$–direction. Therefore, by Proposition \ref{proposition:N=2 extension strongly generated}, $V({\widetilde{R}}_{N = 1}) = V({R}_{N = 1}) \otimes V({R}_{N = 1}^{\, \circ})$ is an $N = 2$ SUSY vertex algebra.
\end{proof}

\vskip 1mm

Combining the results in Section \ref{section5} and Theorem \ref{theorem:N=2 extension}, we conclude that any non-SUSY universal enveloping vertex algebra can be embedded into an $N = 2$ SUSY vertex algebra.

\begin{corollary}\label{corollary:N=2 extension}
Let ${R} = {\mathbb{C}}[{\partial}] \otimes {\textup{span}}_{\mathbb{C}}\{ {u}_{i} \}_{ i {\in} I } \oplus E_{C}$ be a Lie conformal algebra in \eqref{eq:sec 5,LCA}. Then $V(R)$ has an $N = 2$ SUSY extension.
\end{corollary}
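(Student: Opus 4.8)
The plan is to reduce Corollary \ref{corollary:N=2 extension} to the two theorems already established, namely Theorem \ref{theorem:N=1 extension} and Theorem \ref{theorem:N=2 extension}, by composing the two constructions. Starting from the Lie conformal algebra $R = {\mathbb{C}}[{\partial}] \otimes {\textup{span}}_{\mathbb{C}}\{ {u}_{i} \}_{ i {\in} I } \oplus E_{C}$, the first step is to apply Theorem \ref{theorem:N=1 extension}: this produces $\widetilde{R} = R \oplus \bar{R}$ with $\bar{R}$ the parity-reversed copy of $R$ equipped with the $\lambda$-brackets \eqref{eq:N=1 extension bracket}, an odd derivation $D$ with $D(\bar{u}_i) = u_i$ and $D(u_i) = \partial \bar{u}_i$, and the conclusion that $V(\widetilde{R})$ is an $N=1$ SUSY extension of $V(R)$. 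In particular $V(\widetilde{R})$ is a universal enveloping $N=1$ SUSY vertex algebra: its underlying $N=1$ SUSY Lie conformal algebra is $\widetilde{R}$ regarded as a ${\mathbb{C}}[D]$-module, strongly generated by $\{\bar{u}_i\}_{i \in I}$ together with the central part.

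The second step is to verify that this $N=1$ SUSY Lie conformal algebra meets the hypotheses of Theorem \ref{theorem:N=2 extension}. Renaming, set ${R}_{N=1} := \widetilde{R} = {\mathbb{C}}[D] \otimes {\textup{span}}_{\mathbb{C}}\{ \bar{u}_i \}_{i \in I} \oplus (E_C \oplus \bar{E}_C)$, where the central part $E_C \oplus \bar{E}_C$ satisfies $D(\bar{C}) = C$ and $D C = 0$; strictly the central extension part in the sense of Theorem \ref{theorem:N=2 extension} is spanned by the $D$-closed elements, and one checks that the freeness statement from Proposition \ref{proposition:universal VA} guarantees that $\{ D(\bar{u}_i), \bar{u}_i \}_{i \in I}$ is linearly independent, since these are precisely $\{u_i, \bar{u}_i\}_{i\in I}$, a basis of the $\mathbb{C}[\partial]$-module underlying $\widetilde{R}$ modulo the central part. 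Thus Theorem \ref{theorem:N=2 extension} applies and yields an $N=2$ SUSY extension $V(\widetilde{R}_{N=1}) = V(R_{N=1}) \otimes V(R_{N=1}^{\,\circ})$ of $V(R_{N=1}) = V(\widetilde{R})$.

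The final step is to chain the inclusions. Theorem \ref{theorem:N=1 extension} gives that $V(R)$ is a vertex subalgebra of $V(\widetilde{R})$, and in fact an $N=0$ (ordinary) vertex subalgebra embedded compatibly. Theorem \ref{theorem:N=2 extension} gives that $V(\widetilde{R}) = V(R_{N=1})$ sits inside the $N=2$ SUSY vertex algebra $V(\widetilde{R}_{N=1})$ as an $N=1$ SUSY vertex subalgebra, hence in particular as a vertex subalgebra. Composing, $V(R) \hookrightarrow V(\widetilde{R}) \hookrightarrow V(\widetilde{R}_{N=1})$ exhibits $V(R)$ as a vertex subalgebra of an $N=2$ SUSY vertex algebra of the required tensor-product form $V(\widetilde{R}_{N=1}) = V(R) \otimes W$ for a suitable differential algebra $W$ (namely $W = V(\bar R) \otimes V(R_{N=1}^{\,\circ})$ as a differential algebra), which is exactly the definition of an $N=2$ SUSY extension given at the start of Section \ref{section6}.

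I expect the only real point requiring care — rather than a genuine obstacle — is the bookkeeping in the second step: one must confirm that the $N=1$ SUSY Lie conformal algebra produced by Theorem \ref{theorem:N=1 extension} is genuinely of the form required as input to Theorem \ref{theorem:N=2 extension}, i.e. that it is the universal enveloping $N=1$ SUSY vertex algebra of an $N=1$ SUSY Lie conformal algebra freely generated (over ${\mathbb{C}}[D]$) by the $\bar{u}_i$ with the stated linear independence of $\{D\bar{u}_i, \bar{u}_i\}$, and that the central parts match up (the element $\bar C$ introduced in (D-2$'$) is $D$-preimage of $C$, so $\{C, \bar C\}$ behaves correctly). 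Once that identification is in place, the rest is a formal composition of the two extension theorems, with no new computation needed.
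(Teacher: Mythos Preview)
Your proposal is correct and takes essentially the same approach as the paper: the paper's proof is the single sentence ``It follows from Theorem \ref{theorem:N=1 extension} and \ref{theorem:N=2 extension},'' and you have simply unpacked that composition carefully, including the (minor) bookkeeping about the central parts that the paper leaves implicit.
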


\begin{proof}
It follows from Theorem \ref{theorem:N=1 extension} and \ref{theorem:N=2 extension}.
\end{proof}

\vskip 1mm

Recall that the $N = 2$ super-Virasoro vertex algebra is an $N=2$ SUSY extension of 
the $N = 1$ super-Virasoro vertex algebra, which is an $N=1$ SUSY extension of the Virasoro vertex algebra. In the following examples, 
by applying Theorem \ref{theorem:N=2 extension} (resp. Corollary \ref{corollary:N=2 extension}) to the $N = 1$ super-Virasoro (resp. Virasoro) vertex algebra, we obtain $N = 2$ SUSY extensions of the Virasoro vertex algebra, which are distinct from the $N = 2$ super-Virasoro vertex algebra.

\begin{example}[$N = 2$ SUSY extension of Virasoro and $N = 1$ super-Virasoro vertex algebra] \label{N=2 SUSY extension of Virasoro-1}
\rm
In this example, we consider the centerless Virasoro and $N = 1$ super-Virasoro Lie conformal algebras (see Example \ref{ex:N=1 ext of vir}).
Recall that the $N = 1$ super-Virasoro Lie conformal algebra $\textup{SVir} = \mathbb{C}[\partial]\otimes (\mathbb{C}L \oplus \mathbb{C}G)$ admits an $N = 1$ supersymmetry defined by $DG = 2L$. If we apply the construction of Theorem \ref{theorem:N=2 extension} to the $N = 1$ super-Virasoro vertex algebra, we obtain the $N = 2$ SUSY vertex algebra $V(\widetilde{\textup{SVir}})$, where $\widetilde{\textup{SVir}} = {\mathbb{C}}[{\partial}] \otimes (\mathbb{C}{L} \oplus \mathbb{C}{G} \oplus \mathbb{C}{L}^{\circ} \oplus \mathbb{C}{G}^{\circ})$ is the Lie conformal algebra endowed with the following $\lambda$-brackets:
\begin{gather*}
[L \ _{\lambda} \ L] = ({\partial} + 2{\lambda})L,
{\quad}
[{L} \ _{\lambda} \ {L}^{\circ}] = ({\partial} + 2{\lambda}){L}^{\circ},
{\quad}
[L \ _{\lambda} \ G] = ({\partial} + {\frac{3}{2}}{\lambda})G,
{\quad}
[L \ _{\lambda} \ {G}^{\circ}] = ({\partial} + {\frac{3}{2}}{\lambda}){G}^{\circ},
\\
[G \ _{\lambda} \ {L}^{\circ}] = - ({\frac{1}{2}}{\partial} + {\frac{3}{2}}{\lambda}){G}^{\circ},
{\quad}
[G \ _{\lambda} \ G] = 2{L},
{\quad}
[G \ _{\lambda} \ {G}^{\circ}] = - 2{L}^{\circ},
{\quad}
[{L}^{\circ} \ _{\lambda} \ {L}^{\circ}] = [{L}^{\circ} \ _{\lambda} \ {G}^{\circ}] = [{G}^{\circ} \ _{\lambda} \ {G}^{\circ}] = 0.
\end{gather*}
In other words, as an $N=2$ SUSY vertex algebra, $V(\widetilde{\textup{SVir}})$ is generated by one element ${G}^{\circ}$ and the $N=2$ $\Lambda$-bracket is given by 
\begin{align*}
[{G}^{\circ} \ _{\Lambda} \ {G}^{\circ}] = - (2{\partial} + 3{\lambda}){G}^{\circ} + 2{\chi}^{1}{L}^{\circ}.
\end{align*}
\end{example}

\begin{example}[Another $N = 2$ SUSY extension of Virasoro vertex algebra] \label{N=2 SUSY extension of Virasoro-2}
\rm
If we apply Corollary \ref{corollary:N=2 extension}, i.e. apply Theorem \ref{theorem:N=1 extension} and Theorem \ref{theorem:N=2 extension}  subsequently,  to the Virasoro vertex algebra, we have an $N = 2$ SUSY extension of the Virasoro vertex algebra which is generated by one element ${\bar{L}}^{\circ}$ endowed with the following $N = 2$ ${\Lambda}$-bracket:
\begin{align*}
[{\bar{L}}^{\circ} \ _{\Lambda} \ {\bar{L}}^{\circ}] = - ({\partial} + 2{\lambda}){\bar{L}^{\circ}}.
\end{align*}
Note that this extended vertex algebra is distinct from the extended vertex algebra $V(\widetilde{\textup{SVir}})$ described in Example \ref{N=2 SUSY extension of Virasoro-1}.
\end{example}

\vskip 1mm

By Corollary \ref{corollary:N=2 extension}, the non-zero ${\lambda}$-brackets of the $N = 2$ SUSY extension of the affine vertex algebra of level $0$ are given by
\begin{equation}\label{eq:N=2 affine brackets}
\begin{gathered}\relax
[{a} \ _{\lambda} \ {b}] = [a, b],
{\quad}
[{\bar{a}} \ _{\lambda} \ {b}] = (-1)^{p(a)}[{a} \ _{\lambda} \ {\bar{b}}] = {\overline{[a, b]}},
{\quad}
[{a}^{\circ} \ _{\lambda} \ {b}] = (-1)^{p(a)}[{a} \ _{\lambda} \ {b}^{\circ}] = {[a, b]}^{\circ},
\\
[{\bar{a}}^{\circ} \ _{\lambda} \ {b}] = [{a} \ _{\lambda} \ {\bar{b}}^{\circ}] = (-1)^{p(a)}[{a}^{\circ} \ _{\lambda} \ {\bar{b}}] = (-1)^{p(a)+1}[{\bar{a}} \ _{\lambda} \ {b}^{\circ}] = {\overline{[a, b]}}^{\circ},
\end{gathered}
\end{equation}
and this vertex algebra is isomorphic to $V^{0}_{N = 2}(\mathfrak{g})$ in Example \ref{example:SUSY affine VA}.
As an $N = 2$ analogue of the superconformal current algebra in Remark \ref{remark:superconformal current}, the following proposition shows that a natural $N = 2$ superconformal extension of $V^{0}_{N = 2}(\mathfrak{g})$ can be obtained by considering it as an $N = 2$ SUSY vertex subalgebra of an $N = 2$ superconformal vertex algebra.

\begin{proposition}\label{proposition:N=2 superconformal extn of affine}
Let $\mathfrak{g}=\bigoplus_{\omega \in \Omega}\mathfrak{g}_{\omega}$ be an $\Omega$-graded Lie superalgebra. Then the $N = 2$ SUSY affine vertex algebra $V^{0}_{N = 2}(\mathfrak{g})$ can be embedded into an $N = 2$ superconformal vertex algebra, where the conformal weight of $a \in {\mathfrak{g}}_{\omega}$ is given by ${\Delta}_{a} = 1 - {\omega}$.
\end{proposition}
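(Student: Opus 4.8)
The plan is to carry out the construction of Remark~\ref{remark:superconformal current} one level higher, replacing the $N=1$ super-Virasoro Lie conformal algebra by the $N=2$ one and gluing it onto the Lie conformal algebra underlying $V^{0}_{N=2}(\mathfrak g)$. Recall from \eqref{eq:N=2 affine brackets} and Corollary~\ref{corollary:N=2 extension} that, as a non-SUSY vertex algebra, $V^{0}_{N=2}(\mathfrak g)$ is the universal enveloping vertex algebra of $\textup{Cur}_{N=2}\mathfrak g := \mathbb C[\partial]\otimes(\mathfrak g\oplus\bar{\mathfrak g}\oplus\mathfrak g^{\circ}\oplus\bar{\mathfrak g}^{\circ})$ with the brackets \eqref{eq:N=2 affine brackets}, carrying the two odd derivations $D^{1},D^{2}$ produced by the iterated extension of Theorems~\ref{theorem:N=1 extension} and \ref{theorem:N=2 extension}. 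First I would form the $\mathbb C[\partial]$-module
\[
\widetilde R=\textup{SVir}_{N=2}\oplus\textup{Cur}_{N=2}\mathfrak g,
\]
where $\textup{SVir}_{N=2}=\mathbb C[\partial]\otimes\textup{span}_{\mathbb C}\{L,J,G^{+},G^{-}\}\oplus\mathbb C C$ carries the $N=2$ super-Virasoro brackets \eqref{eq:N=2 superVirasoro relation} and $C$ is central with $\partial C=0$, the two summands being amalgamated along $C$. By Example~\ref{example:N=2 SUSY of N=2 superVirasoro} the super-Virasoro part is already of the shape \eqref{eq:sec6, C[partial]-module} (take $\bar u^{\circ}=J$, $u^{\circ}=G^{+}-G^{-}$, $\bar u=\sqrt{-1}(G^{+}+G^{-})$, $u=2\sqrt{-1}L$), so $\widetilde R$ fits the hypotheses of Proposition~\ref{proposition:N=2 extension strongly generated} with index set $\{0\}\sqcup I$, where $D^{1}=(G^{+}+G^{-})_{(0)}$ and $D^{2}=-\sqrt{-1}(G^{+}-G^{-})_{(0)}$ on the super-Virasoro part.

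Next I would prescribe the mixed $\lambda$-brackets, guided exactly by Remark~\ref{remark:superconformal current}. Set $\Delta_{a}=1-\omega$ for $a\in\mathfrak g_{\omega}$; since each $G^{\pm}$ has conformal weight $\tfrac32$, the operators $D^{1},D^{2}$ raise conformal weight by $\tfrac12$, so I assign conformal weight $\Delta_{a}$ to the top component $a$ of the $\mathfrak g$-superfield, $\Delta_{a}-\tfrac12$ to the two middle components $\bar a,a^{\circ}$, and $\Delta_{a}-1$ to the bottom component $\bar a^{\circ}$, and I declare $[L\ {}_\lambda\ x]=(\partial+\Delta_{x}\lambda)x$ for each of these four components $x$; because the level is $0$ there is no $\lambda^{2}$-correction, so every $x$ is primary. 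The current brackets $[J\ {}_\lambda\ x]$ are then fixed by requiring the $\mathfrak g$-superfield to be a primary $N=2$ superfield of weight $\Delta_{a}$, which together with $[G^{\pm}\ {}_\lambda\ J]=\mp G^{\pm}$ forces the bottom component to be $J$-neutral and the middle components to carry opposite $J$-charges. Finally, $[G^{+}\ {}_\lambda\ x]$ and $[G^{-}\ {}_\lambda\ x]$ are \emph{defined} so that the supersymmetric extension formulas (SEF1), (SEF2) hold for the two pairs $(\{\bar u_{i},\bar u^{\circ}_{i}\}_{i\in\{0\}\sqcup I},D^{1})$ and $(\{u^{\circ}_{i},\bar u^{\circ}_{i}\}_{i\in\{0\}\sqcup I},D^{2})$, precisely as $[G\ {}_\lambda\ a]$ and $[G\ {}_\lambda\ \bar a]$ are recovered from $[L\ {}_\lambda\ a]$ in Remark~\ref{remark:superconformal current}. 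With all brackets in hand one checks that $\widetilde R$ is a Lie conformal algebra: skew-symmetry is immediate, the Jacobi identity inside each summand is known, and every mixed triple involving one or two of $L,J,G^{\pm}$ and elements of $\mathfrak g$ reduces, as in Remark~\ref{remark:superconformal current}, to the single compatibility $\Delta_{[a,b]}=\Delta_{a}+\Delta_{b}-1$, which holds because $\omega_{[a,b]}=\omega_{a}+\omega_{b}$; the $i=0$ instances of (SEF1), (SEF2) hold because $\textup{SVir}_{N=2}$ is already an $N=2$ SUSY vertex algebra (Example~\ref{example:N=2 SUSY of N=2 superVirasoro}), and the $i\in I$ instances hold by Corollary~\ref{corollary:N=2 extension}. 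Then Proposition~\ref{proposition:N=2 extension strongly generated} gives that $V(\widetilde R)$ is an $N=2$ SUSY vertex algebra; since $L,J,G^{\pm}$ satisfy \eqref{eq:N=2 superVirasoro relation} with $L=\tfrac12({G^{+}}_{(0)}G^{-}+{G^{-}}_{(0)}G^{+})$, $J={G^{+}}_{(1)}G^{-}$, and $L_{(1)}$ acts diagonalizably via the conformal-weight grading induced from $\Omega$, the quotient $V(\widetilde R)/\langle C-c\rangle$ (which, after a decoupling change of generators as in the $N=1$ case, is isomorphic to $\textup{SVir}^{c}_{N=2}\otimes V^{0}_{N=2}(\mathfrak g)$) is an $N=2$ superconformal vertex algebra containing $V^{0}_{N=2}(\mathfrak g)$ as an $N=2$ SUSY vertex subalgebra, with $a\in\mathfrak g_{\omega}$ of conformal weight $1-\omega$.

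The main obstacle is the internal consistency of the three data involved: the conformal weights imposed by $L$, the $J$-charges imposed by $J$, and the $G^{\pm}$-brackets extracted from (SEF1), (SEF2) must all dovetail so that, in particular, every mixed Jacobi identity of the form $[G^{+}\ {}_\lambda\ [G^{-}\ {}_\gamma\ x]]$ closes with exactly the $L$- and $J$-terms prescribed by $[G^{+}\ {}_\lambda\ G^{-}]=L+(\tfrac12\partial+\lambda)J+\tfrac{c}{6}\lambda^{2}$. Unwinding the four-component description of the $\mathfrak g$-superfield and keeping track of the signs and of the factor $\sqrt{-1}$ entering through $D^{2}=-\sqrt{-1}(G^{+}-G^{-})_{(0)}$ is the delicate part; once the conventions are fixed, the remaining verifications are routine computations paralleling those of Remark~\ref{remark:superconformal current}.
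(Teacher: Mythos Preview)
Your proposal is correct and follows essentially the same approach as the paper: form the amalgamated Lie conformal algebra $\textup{SVir}_{N=2}\oplus\textup{Cur}\mathfrak g_{N=2}$, impose the conformal weights \eqref{eq:constraints} on the four components of each $\mathfrak g$-superfield, recover the remaining mixed brackets $[G^{+}\pm G^{-}\ {}_\lambda\ x]$ and $[J\ {}_\lambda\ x]$ from the SUSY extension formulas, and reduce the Jacobi identity to $\Delta_{[a,b]}=\Delta_a+\Delta_b-1$. The paper carries out exactly this recipe, writing the resulting brackets explicitly and then invoking the $N=2$ SUSY structure via $D^{1}=(G^{+}+G^{-})_{(0)}$, $D^{2}=-\sqrt{-1}(G^{+}-G^{-})_{(0)}$ rather than through Proposition~\ref{proposition:N=2 extension strongly generated}, but the content is the same.
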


\begin{proof}
Let ${\textup{Cur}\mathfrak{g}}_{N=2} = \mathbb{C}[\partial] \otimes ({\mathfrak{g}} \oplus {\bar{\mathfrak{g}}} \oplus {\mathfrak{g}}^{\circ} \oplus {\bar{\mathfrak{g}}}^{\circ})$ be the Lie conformal algebra with the ${\lambda}$-brackets in \eqref{eq:N=2 affine brackets} and consider the following $N = 2$ super-Virasoro Lie conformal algebra
\begin{align*}
{\textup{SVir}}_{N=2} = {\mathbb{C}}[{\partial}] \otimes (\mathbb{C}{L} \oplus \mathbb{C}{{G}^{+}} \oplus \mathbb{C}{{G}^{-}} \oplus \mathbb{C}{J}) \oplus {\mathbb{C}}C
\end{align*}
with the $N = 2$ SUSY structure in Example \ref{example:N=2 SUSY of N=2 superVirasoro}, where $C$ is central and ${\partial}C = 0$.
First of all, define the following ${\lambda}$-brackets
\begin{equation}\label{eq:constraints}
\begin{gathered}\relax
[L \ _{\lambda} \ {a}] = ({\partial} + {\Delta}_{a}{\lambda}){a},
{\quad}
[L \ _{\lambda} \ {\bar{a}}] = ({\partial} + ({\Delta}_{a} - {\frac{1}{2}}){\lambda}){\bar{a}},
\\
[L \ _{\lambda} \ {a}^{\circ}] = ({\partial} + ({\Delta}_{a} - {\frac{1}{2}}){\lambda}){a}^{\circ},
{\quad}
[L \ _{\lambda} \ {\bar{a}}^{\circ}] = ({\partial} + ({\Delta}_{a} - 1){\lambda}){\bar{a}}^{\circ},
\end{gathered}
\end{equation}
where ${\Delta}_{a} = 1-\omega$, for $a \in \mathfrak{g}_{\omega}$.
Then, by solving the SUSY extension formulas with the constraints \eqref{eq:constraints}, we can define the remaining ${\lambda}$-brackets as follows 
\begin{gather*}
[{G}^{+} + {G}^{-} \ _{\lambda} \ {a}] = ({\partial} + (2{\Delta}_{a} - 1){\lambda}){\bar{a}},
{\quad}
[{G}^{+} + {G}^{-} \ _{\lambda} \ {\bar{a}}] = {a},
\\
[{G}^{+} + {G}^{-} \ _{\lambda} \ {a}^{\circ}] = -({\partial} + (2{\Delta}_{a} - 2){\lambda}){\bar{a}}^{\circ},
{\quad}
[{G}^{+} + {G}^{-} \ _{\lambda} \ {\bar{a}}^{\circ}] = - {a}^{\circ},
\\
[{G}^{+} - {G}^{-} \ _{\lambda} \ {a}] = {\sqrt{-1}}({\partial} + (2{\Delta}_{a} - 1){\lambda}){a}^{\circ},
{\quad}
[{G}^{+} - {G}^{-} \ _{\lambda} \ {a}^{\circ}] = {\sqrt{-1}}{a},
\\
[{G}^{+} - {G}^{-} \ _{\lambda} \ {\bar{a}}] = {\sqrt{-1}}({\partial} + (2{\Delta}_{a} - 2){\lambda}){\bar{a}}^{\circ},
{\quad}
[{G}^{+} - {G}^{-} \ _{\lambda} \ {\bar{a}}^{\circ}] = {\sqrt{-1}}{\bar{a}},
\\
[{J} \ _{\lambda} \ {a}] = {\sqrt{-1}}(2{\Delta}_{a} - 2){\lambda}{\bar{a}}^{\circ},
{\quad}
[{J} \ _{\lambda} \ {\bar{a}}] = {\sqrt{-1}}{a}^{\circ},
{\quad}
[{J} \ _{\lambda} \ {a}^{\circ}] = -{\sqrt{-1}}{\bar{a}},
{\quad}
[{J} \ _{\lambda} \ {\bar{a}}^{\circ}] = 0.
\end{gather*}
The Jacobi identity of the $\mathbb{C}[\partial]$-module ${\textup{SVir}}_{N=2} \oplus {\textup{Cur}\mathfrak{g}}_{N=2}$ follows from direct computations. Hence the product of the $N = 2$ super-Virasoro vertex algebra and the $N = 2$ SUSY affine vertex algebra of level $0$
\begin{align*}
V({\textup{SVir}}_{N=2} \oplus {\textup{Cur}\mathfrak{g}}_{N=2}) / \left< C - c \right> \simeq {\textup{SVir}}^{c}_{N=2} \otimes {V}^{0}_{N = 2}(\mathfrak{g})
\end{align*}
is an $N = 2$ superconformal vertex algebra.
Moreover, the two odd derivations
\begin{align*}
{D}^{1} = ({G}^{+} + {G}^{-})_{(0)}, \quad {D}^{2} = -{\sqrt{-1}}({G}^{+} - {G}^{-})_{(0)},
\end{align*}
induce the $N = 2$ SUSY vertex algebra structures of both ${\textup{SVir}}^{c}_{N=2}$ and ${V}^{0}_{N = 2}(\mathfrak{g})$.
\end{proof}

\vskip 1mm

Now, we generalize the construction of Theorem \ref{theorem:N=2 extension} for any $N = n$ SUSY cases. For example, in Remark \ref{remark:N=3 extension}, we present the construction of $N = 3$ SUSY extensions of a given universal enveloping $N = 2$ SUSY vertex algebra. The proof for the arbitrary $N = n$ case is identical to the proofs for $N = 2$ (Theorem \ref{theorem:N=2 extension}) and $N = 3$ (Remark \ref{remark:N=3 extension}) cases in essence.

\begin{remark}\label{remark:N=3 extension}
\rm
For a given $N = 2$ SUSY vertex algebra $V({R}_{N = 2})$, where the $N = 2$ SUSY Lie conformal algebra is given by ${{R}_{N = 2}} = {\mathbb{C}}[{\partial}] \otimes {\textup{span}}_{\mathbb{C}}\{ {u}_{i}, {\bar{u}}_{i}, {u}^{\circ}_{i}, {\bar{u}}^{\circ}_{i} \}_{ i {\in} I }$ with the $N = 2$ supersymmetry ${D}^{1}$ and ${D}^{2}$, we can also construct an $N = 3$ SUSY extension of $V({R}_{N = 2})$ as in the proof of Theorem \ref{theorem:N=2 extension}.
More precisely, we construct an $N = 1$ supersymmetry on the ${\mathbb{C}}[{\partial}]$–module ${\widetilde{R}}_{N = 2} := {{R}_{N = 2}} \oplus {R_{N = 2}}^{\dagger}$ in the ${D}^{\dagger}$–direction, where the ${\mathbb{C}}[{\partial}]$-module ${{R}_{N = 2}}^{\dagger} = {\mathbb{C}}[{\partial}] \otimes {\textup{span}}_{\mathbb{C}}\{ {u}^{\dagger}_{i}, {\bar{u}}^{\dagger}_{i}, {u}^{\circ, \dagger}_{i}, {\bar{u}}^{\circ, \dagger}_{i} \}_{ i {\in} I }$ is given by Theorem \ref{theorem:N=1 extension}.
If we define the three odd endomorphisms ${D}^{1}$, ${D}^{2}$ and ${D}^{3} = {D}^{\dagger}$ as shown in the diagram below:
\[
\begin{tikzcd}[style={font=\small}]
& {\bar{u}}_{i} \arrow[rr, "{D}^{1}"] \arrow[from=ddd, "{D}^{3}"] & & {u}_{i} \\
{\bar{u}}^{\circ}_{i} \arrow[crossing over, rr, "\qquad -{D}^{1}"] \arrow[ru, "{D}^{2}"] & & {u}^{\circ}_{i} \arrow[ru, "{D}^{2}"] & \\ \\
& {\bar{u}}^{\dagger}_{i} \arrow[rr, "\qquad -{D}^{1}"] & & {u}^{\dagger}_{i} \arrow[uuu, "{D}^{3}"] \\
{{\bar{u}}^{\circ, \dagger}_{i}} \arrow[rr, swap, "{{D}^{1}}"] \arrow[uuu, "{D}^{3}"] \arrow[ur, "-{D}^{2}"] & & {{u}^{\circ, \dagger}_{i}} \arrow[ru, swap, "-{D}^{2}"] \arrow[uuu, crossing over, "{D}^{3}"] &
\end{tikzcd}
\]
\noindent then each of the following pairs satisfies the SUSY extension formulas:
\begin{align*}\label{eq:N=3 SUSY generator}
(\{ {\bar{u}}_{i}, {\bar{u}}^{\circ}_{i}, {\bar{u}}^{\dagger}_{i}, {\bar{u}}^{\circ, \dagger}_{i} \}_{i {\in} I}, {D}^{1}),
{\quad}
(\{ {u}^{\circ}_{i}, {\bar{u}}^{\circ}_{i}, {u}^{\circ, \dagger}_{i}, {\bar{u}}^{\circ, \dagger}_{i} \}_{i {\in} I}, {D}^{2}),
{\quad}
(\{ {u}^{\dagger}_{i}, {\bar{u}}^{\dagger}_{i}, {u}^{\circ, \dagger}_{i}, {\bar{u}}^{\circ, \dagger}_{i} \}_{i {\in} I}, {D}^{3}).
\end{align*}
Indeed, the non-trivial part of the proof is the verification of the SUSY extension formulas for the pairs:
\begin{align*}
(\{ {\bar{u}}_{i}, {\bar{u}}^{\circ, \dagger}_{i} \}_{i {\in} I}, {D}^{1}),
{\quad}
(\{ {\bar{u}}^{\dagger}_{i}, {\bar{u}}^{\circ}_{i} \}_{i {\in} I}, {D}^{1}),
{\quad}
(\{ {u}^{\circ}_{i}, {\bar{u}}^{\circ, \dagger}_{i} \}_{i {\in} I}, {D}^{2}),
{\quad}
(\{ {u}^{\circ, \dagger}_{i}, {\bar{u}}^{\circ}_{i} \}_{i {\in} I}, {D}^{2}),
\end{align*}
which follows from the fact that $(\{{\bar{u}}_{i}, {\bar{u}}^{\circ}_{i}\}_{i {\in} I}, {D}^{1})$ and $(\{{u}^{\circ}_{i}, {\bar{u}}^{\circ}_{i}\}_{i {\in} I}, {D}^{2})$ satisfy the SUSY extension formulas. For example, we have the (SEF1) of the pair $(\{ {\bar{u}}_{i} \}_{i {\in} I}, \{ {\bar{u}}^{\circ, \dagger}_{i} \}_{i {\in} I})$ in the ${D}^{1}$–direction from the following computations:
\begin{align*}
[{D}^{1}{\bar{u}}_{i} \ _{\lambda} \ {D}^{1}{\bar{u}}^{\circ, \dagger}_{j}] 
&= [{u}_{i} \ _{\lambda} \ {u}^{\circ, \dagger}_{j}] 
= (-1)^{p({u}_{i})}[{u}_{i} \ _{\lambda} \ {u}^{\circ}_{j}]^{\dagger} 
= - ({D}^{1}[{u}_{i} \ _{\lambda} \ {\bar{u}}^{\circ}_{j}])^{\dagger} - {\lambda}[{\bar{u}}_{i} \ _{\lambda} \ {\bar{u}}^{\circ}_{j}]^{\dagger}
\\
&= {D}^{1}([{u}_{i} \ _{\lambda} \ {\bar{u}}^{\circ}_{j}]^{\dagger}) - {\lambda}[{\bar{u}}_{i} \ _{\lambda} \ {\bar{u}}^{\circ}_{j}]^{\dagger} 
= (-1)^{p({u}_{i})}({D}^{1}[{u}_{i} \ _{\lambda} \ {\bar{u}}^{\circ, \dagger}_{j}] + {\lambda}[{\bar{u}}_{i} \ _{\lambda} \ {\bar{u}}^{\circ, \dagger}_{j}]).
\end{align*}
\end{remark}

\vskip 1mm

Before the closing of this paper, we suggest additional systematic methods to find SUSY structures of vertex algebras (see Remark \ref{remark:non-univ N2 SUSY}, \ref{remark:non-univ N3,4 SUSY} and \ref{remark:general VA embedding}). These methods can be used for a vertex algebra which is not a universal enveloping vertex algebra.

\begin{remark}\label{remark:non-univ N2 SUSY}
\rm
Let $V$ be a vertex algebra with an $N = 1$ superconformal vector $G$. Suppose  there is a decomposition $G = {G}^{+} + {G}^{-}$ satisfying $({{G}^{\pm}}_{(0)}{G}^{\pm})_{(0)} = 0$. Then $V$ has an $N = 2$ SUSY vertex algebra structure. Indeed, as in the case of $N = 2$ superconformal vertex algebra, define the odd derivations as
\begin{align*}
{D}^{1} = ({\mu}{G}^{+} + {\mu}^{-1}{G}^{-})_{(0)}, \quad {D}^{2} = {\sqrt{-1}}({\mu}{G}^{+} - {\mu}^{-1}{G}^{-})_{(0)},
\end{align*}
for ${\mu} \in {\mathbb{C}}^{*}$. Then, by the Jacobi identity, we can check $[D^{i}, D^{j}] = 2{\delta}_{ij}{\partial}$ for $i, j = 1, 2$. Also, it follows from Proposition 1.12 of \cite{DSK06} that ${D}^{1}$ and ${D}^{2}$ are odd derivations of the normally ordered product and the ${\lambda}$-brackets of $V$.
\end{remark}

\begin{remark}\label{remark:non-univ N3,4 SUSY}
\rm
A vertex algebra is called an \textit{$N = 3$ superconformal vertex algebra} if it has a conformal vector $L$ with central charge $c$, three even primary vectors ${J}^{+}$, ${J}^{-}$, ${J}^{0}$ of conformal weight $1$, three odd primary vectors ${G}^{+}$, ${G}^{-}$, ${G}^{0}$ of conformal weight $\frac{3}{2}$, and an odd primary vector ${\Phi}$ of conformal weight $\frac{1}{2}$, satisfying the $N = 3$ super-Virasoro relation \eqref{eq:N=3 superVirasoro relation}. For an $N = 3$ superconformal vertex algebra, there is an $N = 3$ SUSY vertex algebra structure with the following odd derivations 
\begin{align*}
{D}^{1} = ({\mu}{G}^{+} + {\mu}^{-1}{G}^{-})_{(0)},
{\quad}
{D}^{2} = {\sqrt{-1}}({\mu}{G}^{+} - {\mu}^{-1}{G}^{-})_{(0)},
{\quad}
{D}^{3} = {\sqrt{2}{G}^{0}}_{(0)}.
\end{align*}
For an $N = 4$ superconformal vertex algebra, if we adopt the definition and notation from Section 8.4 of \cite{KW04}, the following odd derivations
\begin{gather*}
{D}^{1} = ({\mu}{G}^{+} + {\mu}^{-1}{\bar{G}}^{-})_{(0)},
{\quad}
{D}^{2} = {\sqrt{-1}}({\mu}{G}^{+} - {\mu}^{-1}{\bar{G}}^{-})_{(0)},
\\
{D}^{3} = ({\nu}{G}^{-} + {\nu}^{-1}{\bar{G}}^{+})_{(0)},
{\quad}
{D}^{4} = {\sqrt{-1}}({\nu}{G}^{-} - {\nu}^{-1}{\bar{G}}^{+})_{(0)},
\end{gather*}
give rise to an $N = 4$ SUSY vertex algebra structure, for $\mu, \nu \in {\mathbb{C}}^{*}$.
\end{remark}

\begin{remark}\label{remark:general VA embedding}
\rm
For any vertex algebra $V$, it can be embedded into SUSY vertex algebras by considering the universal enveloping vertex algebra of the vertex algebra $V$.
Therefore, we know that there is an $N = n'$ SUSY vertex algebra containing a given $N = n$ SUSY vertex algebra, for any $n' \geq n$.
\end{remark}

\newpage

\end{document}